\newtheorem{theorem}{Theorem}[section]
\theoremstyle{definition}
\newtheorem{corollary}[theorem]{Corollary}
\newtheorem{lemma}[theorem]{Lemma}
\newtheorem{proposition}[theorem]{Proposition}
\newtheorem{question}[theorem]{Question}
\newtheorem{problem}[theorem]{Problem}
\newtheorem{definition}[theorem]{Definition}
\newtheorem{example}[theorem]{Example}
\newtheorem{remark}[theorem]{Remark}
\numberwithin{equation}{subsection}
\newcommand{\Alex}{\operatorname{Alex}}
\newcommand{\Aut}{\operatorname{Aut}}
\newcommand{\Conj}{\operatorname{Conj}}
\newcommand{\Core}{\operatorname{Core}}
\newcommand{\Inn}{\operatorname{Inn}}
\newcommand{\T}{\operatorname{T}}
\newcommand{\R}{\operatorname{R}}
\newcommand{\Z}{\mathbb{Z}}
\newcommand{\id}{\mathrm{id}}
\newcommand{\SE}[1][n]{${#1}$-SE}
\newcommand{\veq}{\mathrel{\rotatebox{90}{$=$}}}
\newcommand{\ms}{\ast}
\newcommand{\mc}{\circ}
\newcommand{\mL}{\triangleleft}
\newcommand{\mR}{\triangleright}
\newcommand{\os}{\mathbin{\overline{\ast}}}
\newcommand{\oc}{\mathbin{\overline{\circ}}}
\newcommand{\oL}{\mathbin{\overline{\triangleleft}}}
\newcommand{\oR}{\mathbin{\overline{\triangleright}}}
\newcommand{\dotss}{ ...\, }
\renewcommand{\lg}{\langle}
\newcommand{\rg}{\rangle}
\renewcommand{\a}{\alpha}
\renewcommand{\b}{\beta}
\newcommand{\g}{\gamma}
\renewcommand{\d}{\delta}
\title{Set-theoretical solutions of simplex equations}
\author[V.~Bardakov, B.~Chuzinov, ...]{V.~Bardakov, B.~Chuzinov,  I.~Emel'yanenkov, M.~Ivanov, T.~Kozlovskaya, and V.~Leshkov}
\date{\today}
\begin{document}
\maketitle
\begin{abstract}
The $n$-simplex equation ($n$-SE) was introduced by A. B. Zamolodchikov as a generalization of the Yang--Baxter equation, which is the $2$-simplex equation in these terms. In the present paper  we suggest some general approaches to constructing solutions of $n$-simplex  equations, describe some types of solutions, introduce an operation which under some conditions allows us to construct a solution of  $(n+m+k)$-SE from  solution of  $(n+k)$-SE and  $(m+k)$-SE. 
We consider the tropicalization of rational solutions and discuss a way to generalize it. We prove that if a group $G$ is an extension of a group $H$ by a group $K$, then  we can find a solution of the $n$-SE on $G$ from solutions of this equation on $H$ and on $K$. Also, we find solutions of the parametric Yang-Baxter equation on $H$ with parameters in $K$. For studying solutions of the 3-simplex equations we introduce algebraic systems with ternary operations and give examples of these systems which gives solutions of the $3$-SE. 
We find all elementary verbal solutions of the  $3$-SE  on free groups.

 \textit{Keywords:} Yang-Baxter equation, tetrahedral equation, $n$-simplex equation, set theoretical solution, groupoid, 2-groupoid, ternar, ternoid, group extension, virtual braid group.

 \textit{Mathematics Subject Classification 2010:} 08A02, 16T25,  17A42, 20F36, 20F29.
\end{abstract}
\maketitle
\tableofcontents

\section{Introduction}

A solution of the quantum Yang--Baxter equation (YBE) is a linear map $R : V \otimes V \to V\otimes V$ satisfying 
\begin{equation}
\label{YB}
	R_{12} R_{13} R_{23} = R_{23} R_{13} R_{12},
\end{equation}
 where  $V$ is a vector space and $R_{ij} : V \otimes V \otimes V  \to  V \otimes V \otimes V$ acts as $R$ on the $i$-th and $j$-th tensor factors and as the identity on the remaining factor. 
V.~M.~Bukhshtaber \cite{Bukh} called the map $R$, which satisfies the YBE, a Yang-Baxter map. 

The  Yang-Baxter equation, or the 2-simplex equation, or the triangle equation, is one of the basic  equations in mathematical physics and in  low dimension topology.
It lies in the foundation of the theory of quantum groups, solvable models of statistical mechanics, knot theory and braid theory. 
At first, the YBE appears  in the paper of C.~N.~Yang  \cite{Yang} on the many-body problem. Later R.~J.~Baxter \cite{Bax} introduced  this equation for the study of solvable vertex models in statistical mechanics as the condition of commuting transfer matrices. Another derivation of the YBE follows from the factorization of the $S$-matrix in a $1+1$ dimensional Quantum Field Theory (see papers of  A.~B.~Zamolodchikov \cite{Z,Z-1}). The YBE is also essential in  Quantum Inverse Scattering Method for integrable systems \cite{STF, TF}.  

The Yang-Baxter equation is equivalent to a system of $n^6$ algebraic  cubic  equations for $n^4$ variables, where $n = \dim V$. So even in the two-dimensional case, one has 64 equations for 16 variables. In \cite{Kr} (see also \cite{Hit}) one can find a complete description of solutions for the YBE in the case $n=2$.

V.~G.~Drinfeld \cite{Drinfeld} suggested studying set-theoretical solutions, i.e. solutions for which the vector space $V$ is  spanned by a set $X$, and $R$ is the linear operator induced by a permutation $R: X \times X \to  X \times X$. In this case, we say that $(X,R)$ is a set-theoretical solution of the Yang--Baxter equation or simply a solution of the YBE.  Set-theoretical solutions have connections for example with groups of I-type, Bieberbach groups, bijective 1-cocycles, Garside theory, etc. Involutive set-theoretical solutions for the YBE were studied in \cite{ESS}.

It is easy to see that for any $X$ the map $P(x, y) = (y, x)$ gives a set-theoretical solution of the YBE. On the other side, if $R$ is a solution of the YBE, then the map $S = P R$ satisfies the braid relation
$$
(S \times \id) (\id \times S) (S \times \id) = (\id \times S) (S \times \id) (\id \times S).
$$
The braid relation is a relation in the braid group $B_n$. Topologically, the braid relation is simply the third Reidemeister move of planar diagrams of links. 
In 1980s, D.~Joyce \cite{J} and S.~V.~Matveev~\cite{Mat} introduced quandles as invariants of knots and links, and proved that any quandle gives an elementary set-theoretical solution to the braid equation. 
A lot of papers is dedicated to set-theoretical solution of the Yang-Baxter equation and the braid equation (see, for example \cite{Bukh, D, LYZ00}).

In \cite{PT} solutions of a parametric YBE were constructed. In \cite{BIKP} there was suggested a method to construct  linear parametric solutions to the YBE from non-linear Darboux transformations of some Lax operators.

In 2006, W. Rump introduced~\cite{Rump2,Rump} braces to study involutive set-theoretical solutions to the Yang--Baxter equation, although the same algebraic objects were already considered by A.~G.~Kurosh~\cite{Kurosh} in 1970s. In~2014, this notion was reformulated by F.~Ced\'{o}, E.~Jespers, and J.~Okni\'{n}ski in~\cite{CedoJespersOkninski}. In 2017, L. Guarnieri and L. Vendramin defined~\cite{GV2017} skew left braces which give non-involutive solutions to the Yang--Baxter equation.

If $R(x, y) = (\sigma_y(x), \tau_x(y))$ for $x, y \in X$, is a solution of the YBE, then it is said to be non-degenerate if $\sigma_x$ and $\tau_x$ are invertible for all $x \in X$. It is said to be  square-free if $R(x,x) = (x,x)$ for all $x \in X$. 
The solution $R(x, y) = (\sigma_y(x), \tau_x(y))$ defines a 2-groupoid  $(X; \cdot, *)$ that is a set with two binary algebraic operations $\cdot, * : X \times X \to X$ that are $x \cdot y = \sigma_y(x)$ and $y * x = \tau_x(y)$. If $\sigma_y = \id$ for all $y \in X$ or $\tau_x = \id$ for all $x \in X$, then the solution $(X, R)$ is called by elementary solution. Any elementary solution defines a  self-distributive groupoid on $X$, if the elementary solution is non-degenerate, then this groupoid is a rack. On the other side, any self-distributive groupoid gives an elementary solution for the YBE and any rack gives a non-degenerate solution for the YBE.  Any quandle defines a non-degenerate square-free solution of the YBE.

The $3$-simplex equation or the tetrahedron  equation (TE)  has been introduced by A.~B.~Zamolodchikov \cite{Z, Z-1} as a 3-dimensional generalization of the YBE. 
The  TE is a non-linear relation in $V^{\otimes 6}$ on an endomorphism $R : V^{\otimes 3} \to V^{\otimes 3}$:
$$
	R_{123} R_{145} R_{246} R_{356} = R_{356} R_{246} R_{145} R_{123},
$$
where indexes denote the copies of $V$ in the tensor product on which the operator acts as $R$ on all other copies on acts as identity map. 
The tetrahedron equation is popular in electric  nets. In particular, the famous transformation `star-triangle' gives a rational solution to the tetrahedron equation. Also, the tetrahedron equation is used in the theory of 2-dimensional knots  in a 4-sphere \cite{KST}. It is some analogue of the third Reidemeister move in classical knot theory. 
The TE arrives in a problem of coloring 2-faces of a 4-dimensional cube by elements of some set $X$ (see \cite{KST-1}).

Many solutions of the TE have been already found. For example, J. Hietarinta \cite{HLinear} considered the set of integers $\mathbb{Z}_d$ modulo $d$ as $X$  and studied linear and affine maps $X^n \to X^n$ which give solutions to the $n$-simplex equation for $n = 2, 3, 4$. In \cite{KNPT} linear solutions for the TE are described.

The $n$-simplex equation is a generalization of YBE and TE. It was introduced in  \cite{BS} and studied in \cite{BV}, \cite{CS}, \cite{HLinear}, \cite{DK}, \cite{K}, \cite{S}. In the present paper  we suggest some general approaches to constructing solutions of $n$-simplex  equations, describe some types of solutions, introduce an operation which under some conditions allows us to construct a solution of  $(n+m+k)$-SE from solution of  $(n+k)$-SE and  $(m+k)$-SE. 
We consider the tropicalization of rational solutions and discuss a way to generalize it. We prove that if a group $G$ is an extension of a group $H$ by a group $K$, then  we can find a solution of the $n$-SE on $G$ from solutions of this equation on $H$ and on $K$. Also, we find solutions of the parametric Yang-Baxter equation on $H$ with parameters in $K$. For studying solutions of the 3-simplex equation we introduce algebraic systems with ternary operations and give examples of these systems which gives solutions of the $3$-SE. 
We find all elementary verbal solutions of the  $3$-SE  on free groups.\\

The paper is organized as follows. In Section \ref{prelim}, we recall  connection between set-theoretical solutions of the Yang--Baxter equation and some algebraic systems (self-distri\-bu\-tive groupoid, rack, quandle). We formulate conditions under which an algebraic system with two binary operations gives a set-theoretical solution of the YBE. We recall definitions and some properties of braid groups and virtual braid groups.

In Section \ref{NSim} we introduce a general form of the $n$-simplex equation, describe an algorithm that gives a possibility to write the $n$-simplex equation for arbitrary $n \geq 2$ and formulate results on general solutions. We also give a definition of the classical {\SE} that is an analogous to the classical YBE.

It is known that if $(X, R)$ is a solution of the YBE and $P_{12}(x, y) = (y, x)$ is the transposition, then the map $P_{12} R P_{12}$ satisfies the YBE. We prove in Proposition \ref{invsymm} a similar fact for any $n$-SE, constructing some permutation $P$ such that the map $P R P$ satisfies the $n$-SE.

For non-negative integer $k$ we introduce $k$-amalgam of solutions of the $(n+k)$-SE and $(m+k)$-SE and find  conditions (see Theorem \ref{const}) under which $k$-amalgam is  a solution of the  \SE[(n+m+k)]. As a corollary of this general construction we study composition of linear solutions.
In Proposition \ref{sumlin} we prove that $ 0 $-amalgam of linear solutions of {\SE} and \SE[m] gives a solution of the \SE[(n+m)]. If for these linear solutions $1$-amalgam is defined, then it is a linear solution of the \SE[(n+m-1)]. We also prove that if a solution of the  {\SE} has a fixed point, then a specialization of this solution gives a solution of the \SE[(n-1)].

In Section \ref{Trop} we consider some class of rational solutions of the {\SE} and give a definition of their tropicalization. We show that the tropicalization of such rational solutions gives piecewise linear solutions of the {\SE}. In particular, we construct a linear solution of the TE using the famous electric solution. Further we generalize the tropicalization on other algebraic systems. As an example, we show that using the  homomorphism $h : \mathbb{R} \to \mathcal{D}'[\mathbb{R}^m]$ from the field $\mathbb{R}$ of real numbers with standard operations of addition and multiplication	to the linear space $\mathcal{D}'[\mathbb{R}^m]$ of generalized functions with compact support on the space $\mathbb{R}^m$ with the standard operation of addition and the convolution $*$ as the multiplication, one can construct a solution $(\mathcal{D}'[\mathbb{R}^m], R^h)$ of the {\SE} from a rational solution $(\mathbb{R}, R)$ of the {\SE}.
	
In Section \ref{GrExt} we consider a group $G$ with a normal subgroup $H$ and the quotient $G / H \cong  K$. We prove that if  a structure of a self-distributive groupoid on $G$ is defined, then, under some conditions on $K$, there is a solution of the YBE on $H$ with parameters in $K$. In particular, it is true (see Corollary \ref{trext}) if $K$ is the trivial self-distributive groupoid. 
Further these results are extended to general {\SE}. In Subsection \ref{lin} using the known representations of the  virtual braid group we construct (see Theorem \ref{psol}) two types solutions of the Yang--Baxter equations on a group with  parameters  in an abelian group.

Further we consider a group $G$ that is an extension of a group $H$ by a group $K$.
In Proposition \ref{trext} we prove that by the solutions $(H, R)$ and $(K, T)$ of the {\SE} it is possible to construct a solution of the {\SE} on $G$.
	
In Section \ref{Inlim} we define a category of solutions of the {\SE} and prove in Proposition \ref{lim} that the limit of solutions of the {\SE} is also a solution of the {\SE}.
 As a consequence we get (see Corollary \ref{adic}) that if we have solutions $(\mathbb{Z}_{p^k}, R_k)$, $k = 1, 2, \ldots$, of the {\SE}, then under some conditions we can define a solution of the {\SE} on the set of the $p$-adic integer numbers.	


In Section \ref{algsust} by analogy with racks, bi-racks, skew braces, which were introduced for the studying of the Yang-Baxter equation, we introduce  ternar and 3-ternoid, that are algebraic systems with one and three ternary operations, respectively. We show how they are connected with solutions of the tetrahedral equation (see Proposition \ref{p7.1} and its corollaries). Also, we introduce some algebraic systems with four binary operations and show in Propositions    \ref{2-sol} and   \ref{1-sol} how it is possible to find elementary solutions for the tetrahedron equation.

In Section \ref{verb} we defined verbal solution. These are some generalizations of verbal solutions of the YBE, which one can define on a group  with structure  of conjugacy  quandle or core quandles. We describe all elementary verbal solutions of the TE, which are possible to define on free non-abelian groups.

In the paper we formulate some open questions for further research.

Throughout the paper we use standard notations. If $G$ is a group, and $a,b\in G$, then we denote by $a^b=b^{-1}ab$ the conjugate of $a$ by $b$, and by $[a, b]=a^{-1} b^{-1} a b$ the commutator of $a$ and $b$. 
Also, we shall assume that  compositions acting from the right to the left, i.e. if $f, g : X \to X$, then their composition $f g$ acts by the rule 
$(f g)(x)=f(g(x))$ for $x \in X$ in which it is defined.


\bigskip


\section{Preliminaries and some basic results}\label{prelim}
In this section we give necessary preliminaries.  We recall examples of solutions to the YBE, describe  a connection of solutions with some groupoids (racks, quandles) and recall definitions of the braid group and the virtual braid group.

\subsection{Yang--Baxter equations and racks}
The following lemma is evident.
\begin{lemma} \label{YBEF}
Let $X$ be a non-empty set and $f, g : X^2 \to X$ are two maps. The map $R : X^2 \to X^2$, $R(x, y) = (f(x, y), g(x, y))$ is a solution of the YBE if and only if for any $x, y, z \in X$ the following identities hold
\begin{eqnarray*}
f\left( f(x, y), z  \right) &=& f\left( f(x, g(y, z)),   f(y,  z) \right),\\
f\left( g(x,y),  g(f(x,y), z) \right) &=& g\left( f(x, g(y, z)),   f(y,  z) \right),\\
g\left( g(x, y),   g(f(x, y), z) \right) &=& g\left( x,  g(y, z)  \right).
\end{eqnarray*}
\end{lemma}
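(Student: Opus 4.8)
The plan is to verify the defining identity $R_{12}R_{13}R_{23} = R_{23}R_{13}R_{12}$ of the YBE directly, by evaluating both sides on an arbitrary triple $(x,y,z)\in X^3$ and comparing the results coordinate by coordinate, keeping in mind the convention (fixed in the introduction) that compositions act from right to left.

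First I would compute the left-hand side. Applying $R_{23}$ to $(x,y,z)$ gives $(x,f(y,z),g(y,z))$; then $R_{13}$ gives $\bigl(f(x,g(y,z)),\,f(y,z),\,g(x,g(y,z))\bigr)$; finally $R_{12}$ gives
\[
\bigl(f(f(x,g(y,z)),f(y,z)),\ g(f(x,g(y,z)),f(y,z)),\ g(x,g(y,z))\bigr).
\]
Next I would compute the right-hand side. Applying $R_{12}$ to $(x,y,z)$ gives $(f(x,y),g(x,y),z)$; then $R_{13}$ gives $\bigl(f(f(x,y),z),\,g(x,y),\,g(f(x,y),z)\bigr)$; finally $R_{23}$ gives
\[
\bigl(f(f(x,y),z),\ f(g(x,y),g(f(x,y),z)),\ g(g(x,y),g(f(x,y),z))\bigr).
\]

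Then I would simply compare the two triples. Since $(X,R)$ is a solution of the YBE precisely when these triples coincide for all $x,y,z\in X$, equality of the first coordinates is exactly the first displayed identity of the lemma, equality of the second coordinates is the second, and equality of the third coordinates is the third. This gives both implications of the ``if and only if'' at once, since each step above is a genuine equality of maps and all the substitutions are reversible.

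The only real obstacle is bookkeeping: one must be careful about which tensor factors each $R_{ij}$ acts on and about the right-to-left order of composition, because a slip there permutes the roles of $x,y,z$ and scrambles the three identities into a superficially different (but equivalent) form. There is no deeper difficulty here, which is why the lemma can be stated as evident; the value of writing the proof out is mainly to pin down, once and for all, the precise form of the identities that will be reused later when analyzing algebraic systems with two binary operations.
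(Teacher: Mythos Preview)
Your computation is correct and is exactly the direct verification the paper has in mind: the lemma is stated as ``evident'' with no proof given, so your coordinate-by-coordinate expansion of $R_{12}R_{13}R_{23}$ and $R_{23}R_{13}R_{12}$ is precisely the intended (and only reasonable) argument.
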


\begin{example}
\begin{enumerate}
\item Let $A$ be an abelian group and $a, b \in A$, then
$$
R(x, y) = (x+a, y+b),~~x, y \in A,
$$
is a solution of the YBE on $A$.

\item For arbitrary group $G$ the map
$$
R(x, y) = (xy^2, x y^{-1} x^{-1}),~~x, y \in G,
$$
is a solution of the YBE on $G$. 

\item Let $G$ be  a group and $\varphi \in \Aut(G)$ be its automorphism, then the map
$$
R(x, y) = (\varphi(x), x y \varphi(x)^{-1}),~~x, y \in G,
$$
is a solution of the YBE on $G$. It is easy to see that if the map $\varphi(x) = x y^2$ is an automorphism of $G$, we get the solution from the previous example.
\end{enumerate}
\end{example}

A groupoid is a non-empty set with one binary algebraic operation,
if $k$ is a natural number, $k>1$, then  a $k$-groupoid is a non-empty set with $k$ binary algebraic operations. 
Suppose that $(X, R)$ is a set-theoretical solution of the YBE. 
Writing $R(x, y) = (\sigma_y(x), \tau_x(y))$ for $x, y \in X$, we say that the solution $R$ is {\it non-degenerate} if $\sigma_x$ and $\tau_x$ are invertible for all $x \in X$, we says that the solution $R$ is {\it square-free} if $R(x,x) = (x,x)$ for all $x \in X$.

A {\it quandle} is a groupoid $Q$ with a binary operation $(x,y) \mapsto x * y$ satisfying the following axioms:
\begin{enumerate}
\item[(Q1)] $x*x=x$ for all $x \in Q$,
\item[(Q2)] for any $x,y \in Q$ there exists a unique $z \in Q$ such that $x=z*y$,
\item[(Q3)] $(x*y)*z=(x*z) * (y*z)$ for all $x,y,z \in Q$.
\end{enumerate}

If a groupoid satisfies (Q3) it is called a self-distributive groupoid, if it satisfies 
 (Q2) and (Q3) it is called a {\it rack}. Many interesting examples of quandles come from groups showing deep connection with group theory.

\begin{example}
\begin{enumerate}
\item If $G$ is a group, then the binary operation $a*b= b^{-1} a b$ turns $G$ into the quandle $\Conj(G)$ called the {\it conjugation quandle} of $G$.
\item A group $G$ with the binary operation $a*b= b a^{-1} b$ turns the set $G$ into the quandle $\Core(G)$ called the {\it core quandle} of $G$. In particular, if $G= \mathbb{Z}_n$, the cyclic group of order $n$, then it is called the {\it dihedral quandle} and denoted by $\R_n$.
\item Let $G$ be a group and $\phi \in \Aut(G)$. Then the set $G$ with binary operation $a * b = \phi(ab^{-1})b$ forms a quandle $\Alex(G,\phi)$ referred as the  {\it generalized Alexander quandle} of $G$ with respect to $\phi$.
\end{enumerate}
\end{example}

\medskip

A quandle  $Q$ is called {\it trivial} if $x*y=x$ for all $x, y \in Q$.  Unlike groups, a trivial quandle can have arbitrary number of elements. We denote the $n$-element trivial quandle by $\T_n$ and an arbitrary trivial quandle by $\T$.

\medskip

Notice that the axioms (Q2) and (Q3) are equivalent to the map $S_x: Q \to Q$ given by $$S_x(y)=y*x$$ being an automorphism of $Q$ for each $x \in Q$. These automorphisms are called {\it inner automorphisms}, and the group generated by all such automorphisms is denoted by $\Inn(X)$. A quandle is said to be {\it connected}  if it admits a transitive action by its group of inner automorphisms. For example, dihedral quandles of odd order are connected, whereas that of even order are disconnected. A quandle $X$ is called {\it involutive} if $S_x^2 = \id_Q$ for each $x \in Q$. For example, all core quandles are involutive.

For a solution $(X, R)$, $R(x, y) = (\sigma_y(x), \tau_x(y))$,  we can define a 2-groupoid  $(X; \cdot, *)$ with the operations   $x \cdot y = \sigma_y(x)$ and $y * x = \tau_x(y)$. If $\sigma_y = \id$ for all $y \in X$ or $\tau_x = \id$ for all $x \in X$, then 
the solution $(X, R)$ is called by {\it elementary solution}. Any elementary solution defines a structure of self-distributive groupoid on $X$, if the elementary solution is non-degenerate, then this groupoid is a rack. On the other side, any self-distributive groupoid gives an elementary solution to the YBE and any rack gives a non-degenerate solution to the YBE, any quandle defines a non-degenerate square-free solution of the YBE.

Connections  of solutions with 2-groupoids gives the next lemma, which follows from Lemma \ref{YBEF}.

\begin{lemma} \label{YBT}
Let $(X, \cdot, *)$ be a 2-groupoid and   $R : X \times X \to X \times X$ given by $R(x, y) = (x \cdot y, y * x)$ for $x, y \in X$. Then the following hold:
\begin{enumerate}
\item The pair $(X, R)$ is a set-theoretical solution of the Yang--Baxter equation if and only if the equalities
\begin{eqnarray*}
(x \cdot y) \cdot z  &=& (x \cdot (z * y)) \cdot (y \cdot  z),\\
(y * x) \cdot (z * (x \cdot y))  &=&  (y \cdot z)  *(x \cdot (z * y)),\\
(z*(x \cdot y)) * (y * x)  &=&  (z* y ) * x,
\end{eqnarray*}
hold for all $x, y, z \in X$. 
\item If $x \cdot y = y$ for all $x, y \in X$, then the  pair $(X, R)$  is a set-theoretic solution of the Yang--Baxter equation if and only if the operation $*$ is (right) self-distributive, i.e.
\begin{equation*}\label{rack-eqn}
(z * x) * (y * x) = (z*y) * x
\end{equation*}
 for all $x, y, z \in X$.
\item If $y * x = y$ for all $x, y \in X$, then the  pair $(X, R)$  is a set-theoretic solution of the Yang--Baxter equation if and only if the operation $\cdot$ is (right) self-distributive, i.e.
\begin{equation*}\label{rack-eqn-1}
(z\cdot y) \cdot x = (z \cdot x) \cdot (y \cdot x) 
\end{equation*}
 for all $x, y, z \in X$.

\end{enumerate}
\end{lemma}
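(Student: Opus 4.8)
The plan is to deduce all three parts from Lemma \ref{YBEF} via the single substitution $f(x,y) = x\cdot y$, $g(x,y) = y * x$; there is no new idea beyond this translation.

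First I would prove part (1). Inserting these two maps into the three identities of Lemma \ref{YBEF} and simplifying, each identity becomes, term for term, one of the three displayed equalities. The one thing to watch is that $g$ lists its arguments in the order opposite to $*$, i.e. $g(a,b) = b * a$, so this reversal must be carried through consistently. For example, $f(f(x,y),z) = f(f(x,g(y,z)),f(y,z))$ turns into $(x\cdot y)\cdot z = \bigl(x\cdot(z * y)\bigr)\cdot(y\cdot z)$ after rewriting $g(y,z)$ as $z * y$ and $f(y,z)$ as $y\cdot z$; the remaining two identities are handled identically and produce the other two equalities. Since Lemma \ref{YBEF} is an equivalence, part (1) is one as well.

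Next, parts (2) and (3) I would obtain by specializing the three equalities of part (1) to the elementary case in which one of the two operations is a coordinate projection. In part (3), the hypothesis $y * x = y$ says that $*$ returns its left argument, which collapses the $*$-subexpressions; a short check then shows that the second and third equalities of part (1) become tautologies while the first collapses to precisely the right self-distributivity $(z\cdot y)\cdot x = (z\cdot x)\cdot(y\cdot x)$. Part (2) is the mirror image, with the roles of $\cdot$ and $*$ interchanged: imposing the corresponding degeneracy on $\cdot$ kills the first two equalities and leaves exactly the right self-distributivity of $*$. The equivalence in each case is inherited from part (1), so both converse directions come for free, and the $\cdot\leftrightarrow *$ symmetry of the two statements is a useful sanity check.

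I do not expect any real obstacle here: Lemma \ref{YBEF} already contains all the content, and what is left is disciplined substitution. The only danger is clerical---namely the argument-reversal $g(x,y) = y * x$ together with the right-to-left composition convention fixed in the introduction; getting either one backwards would permute the three identities and spoil the bookkeeping. So I would nail down both conventions at the outset and then apply them mechanically.
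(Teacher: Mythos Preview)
Your approach matches the paper's exactly: the lemma is stated there as a direct consequence of Lemma~\ref{YBEF} with no further proof given, and your substitution $f(x,y)=x\cdot y$, $g(x,y)=y*x$ is precisely the intended translation.

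One caution on part~(2): the printed hypothesis $x\cdot y=y$ is evidently a typo for $x\cdot y=x$ (the elementary case $\sigma_y=\id$ discussed just before the lemma). Your mirror-image argument is correct for the corrected hypothesis, but if you substitute the literal $x\cdot y=y$ into the three identities of part~(1) you will find that the second and third do \emph{not} collapse to tautologies; they become $z*(z*y)=z*y$ and $(z*y)*(y*x)=(z*y)*x$, which are not right self-distributivity. So when you write this up, flag the typo explicitly rather than glossing over it with ``mirror image''.
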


\begin{example}
Let us define on the set of real numbers $\mathbb{R}$ a binary operation 
$$
a * b = \max\{a, b\}, ~~~a, b \in \mathbb{R}.
$$ 
Then $(\mathbb{R}, *)$ is a self-distributive groupoid, which is not a rack, but any element $a \in \mathbb{R}$  is idempotent, that is $a * a = a$, and it is commutative groupoid. The map $R(x, y) = (x, y*x)$ gives a degenerate solution of the YBE on $\mathbb{R}$. If we take $\min$ instead of $\max$ we get a similar result. 
\end{example}

A. Soloviev~\cite[Theorem 2.3]{Sol} (see also L.~Guarnieri, L.~Vendramin~\cite[Proposition 3.7]{GV2017} and D.~Bachiller \cite[Proposition 5.2]{Bach}) proves that any non-degenerate solution of the YBE conjugates to an elementary solution.

\begin{proposition}
If $R(x,y) = (\sigma_y(x), \tau_x(y))$, $x, y \in X$, gives a~left non-degenerate (i.e. $\sigma_y$ is a bijection for any $y \in X$)
solution to the YBE, then it conjugates to an elementary  solution:
$$
R'(x,y) = (\sigma_x(\tau_{\sigma^{-1}_y(x)}(y)), y).
$$
If for all $a, b \in X$ there exists a unique $x \in X$ such that
\begin{equation}\label{cond}
\tau_{\sigma^{-1}_x(a)}(x) = \sigma^{-1}_a(b),
\end{equation}
then this solution is non-degenerate.
\end{proposition}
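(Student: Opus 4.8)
The assertion splits into two independent parts, and I would treat them in turn. For the first part --- that $R'$ is a set-theoretical solution of the YBE and that $R$ conjugates to it --- the plan is to reduce at once to Lemma~\ref{YBT}. Put $x\cdot'y:=\sigma_x(\tau_{\sigma^{-1}_y(x)}(y))$, so that $R'(x,y)=(x\cdot'y,\,y)$; in the language of Lemma~\ref{YBT} this says the second operation of $R'$ is trivial, $y*'x=y$. By Lemma~\ref{YBT}(3), $(X,R')$ is then a solution of the YBE if and only if $\cdot'$ is right self-distributive, i.e.
$$(z\cdot'y)\cdot'x=(z\cdot'x)\cdot'(y\cdot'x)\qquad\text{for all }x,y,z\in X.$$
(Directly: substituting $R'(x,y)=(f'(x,y),y)$ into Lemma~\ref{YBEF}, the second and third of the three identities collapse to tautologies and the first becomes precisely this law.) So the whole of the first part is the single identity expressing right self-distributivity of $\cdot'$, to be extracted from the hypothesis that $(X,R)$ solves the YBE, i.e. from the three identities of Lemma~\ref{YBEF} for $f(x,y)=\sigma_y(x)$, $g(x,y)=\tau_x(y)$.

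Here the left non-degeneracy of $R$ enters: each $\sigma_y$ is invertible, so one can rewrite $x\cdot'y=f\big(g(\sigma^{-1}_y(x),y),\,x\big)$ and, after the change of variables $u=\sigma^{-1}_y(x)$ together with its analogues at the nested levels, fold the three identities of Lemma~\ref{YBEF} onto the first of them, $f(f(x,y),z)=f(f(x,g(y,z)),f(y,z))$; what emerges is exactly the displayed self-distributivity law for $\cdot'$. In fact $\cdot'$ is, up to this change of variables, the self-distributive operation associated with the left non-degenerate solution $R$ in the standard way, so the computation is a familiar one; and the conjugacy of $R$ with $R'$ is the same change of variables packaged at the level of the maps themselves, as in the works of Soloviev, Guarnieri--Vendramin, and Bachiller cited above --- I would spell it out explicitly using the bijections $\sigma_y$.

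For the second part, $R'$ is elementary and its second operation $y*'x=y$ gives $\tau'_x=\id$, a bijection; hence $R'$ is non-degenerate precisely when its first-coordinate map $x\mapsto\sigma_x(\tau_{\sigma^{-1}_y(x)}(y))$ is a bijection of $X$ for every fixed $y$. Applying $\sigma^{-1}_x$ to the equation $\sigma_x(\tau_{\sigma^{-1}_y(x)}(y))=b$ turns it into $\tau_{\sigma^{-1}_y(x)}(y)=\sigma^{-1}_x(b)$, so bijectivity of that map for every $y$ is the statement that for all $a,b\in X$ there is a unique $x\in X$ with $\tau_{\sigma^{-1}_x(a)}(x)=\sigma^{-1}_a(b)$ --- that is, after the obvious renaming, condition~\eqref{cond}. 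Thus \eqref{cond} holds if and only if $R'$ is non-degenerate, which is what is claimed.

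The main obstacle is the self-distributivity verification in the first part. Whether carried out by hand from Lemma~\ref{YBEF} or read off from the explicit conjugacy, it involves up to three nested occurrences of $\sigma^{-1}_{(\cdot)}$, and the real care is in tracking which inverse is applied to which argument at each stage; once the correct substitutions are in place the identity drops out mechanically. The non-degeneracy part is by contrast a short unwinding with no genuine obstacle.
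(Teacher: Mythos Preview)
The paper does not give its own proof of this proposition; it is quoted as a result of Soloviev (with variants due to Guarnieri--Vendramin and Bachiller) and only referenced. So there is no argument in the paper to compare against, and your proposal must stand on its own.

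Your first part is a reasonable plan, though it remains a sketch: you correctly observe that by Lemma~\ref{YBT}(3) everything reduces to right self-distributivity of $\cdot'$, but the actual verification and the explicit conjugating bijection are both deferred (``I would spell it out''). That is acceptable as an outline but is not yet a proof.

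In the second part, however, there is a genuine error. Unwinding non-degeneracy of $R'$ gives exactly what you wrote first: for every $y$ and $b$ there is a unique $x$ with
\[
\tau_{\sigma_y^{-1}(x)}(y)=\sigma_x^{-1}(b).
\]
But condition~\eqref{cond} reads: for every $a$ and $b$ there is a unique $x$ with
\[
\tau_{\sigma_x^{-1}(a)}(x)=\sigma_a^{-1}(b).
\]
These are not related by a renaming of variables. Comparing term by term forces the identification $a\leftrightarrow x$ and $x\leftrightarrow y$, which swaps the quantified variable and the fixed parameter; in your condition the unknown sits in the subscript $\sigma_y^{-1}(x)$ and on the right in $\sigma_x^{-1}(b)$, whereas in~\eqref{cond} it sits in the subscript $\sigma_x^{-1}(a)$ and as the argument of $\tau$. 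So the ``obvious renaming'' step fails, and your argument does not establish that~\eqref{cond} is equivalent (or even sufficient) for non-degeneracy of $R'$. To repair this you would need an additional argument linking the two bijectivity statements---this is where the conjugacy between $R$ and $R'$, once made explicit, should enter.
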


\subsection{Braid group and virtual braid group}

For $n\geq 2$ the braid group $B_n$ on $n$ strands is defined as
the group generated by $(n-1)$ elements
$\sigma_1,\sigma_2,\ldots,\sigma_{n-1}$
with defining relations
\begin{align}
 \label{eq1}\sigma_i\sigma_{i+1}\sigma_i& = \sigma_{i+1}\sigma_i\sigma_{i+1},&&i=1,2,\ldots,n-2,\\
\label{eq2}\sigma_i\sigma_j &= \sigma_j\sigma_i,&& |i-j|\geq 2. 
\end{align}

\medskip

The \emph{virtual braid group} $VB_n$ on $n$ strands, introduced in \cite{Ka}, is the group with the generators $\sigma_1,\sigma_2,\dots,\sigma_{n-1}$, $\rho_1,\rho_2,\dots,\rho_{n-1}$, which is defined by the relations (\ref{eq1})-(\ref{eq2}) and additional relations
 \begin{align}
\label{eq12} \rho_{i}  \rho_{i+1} \rho_{i} &= \rho_{i+1} \rho_{i} \rho_{i+1},&&  i=1,2,\ldots,n-2,\\
\label{eq13}\rho_{i}  \rho_{j} &= \rho_{j}  \rho_{i},&&  |i-j| \geq 2,   \\
\label{eq14}\rho_{i}^2 &= 1,&&  i=1,2,\ldots,n-1,\\ 
\label{eq15}\sigma_{i} \rho_{j} &= \rho_{j}  \sigma_{i},&& |i-j| \geq 1,\\
\label{eq16}\rho_{i} \rho_{i+1}  \sigma_{i} &= \sigma_{i+1}  \rho_{i}   \rho_{i+1},&&  i=1,2,\ldots,n-2. 
\end{align}
The elements $\sigma_1,\sigma_2,\dots,\sigma_{n-1}$ generate a subgroup that is isomorphic to the braid group $B_n$. Relations (\ref{eq12})-(\ref{eq14}) are defining relations of the symmetric group $S_n$, there, the elements $\rho_1,\rho_2,\dots,\rho_{n-1}$ generate  $S_n$ in $VB_n$, hence the virtual braid group $VB_n$ is generated by the braid group $B_n$ and the symmetric group $S_n$. Relations (\ref{eq15})-(\ref{eq16}) are called the mixed relations of the virtual braid group.


\bigskip

\section{Quantum and classical $n$-simplex equations } \label{NSim}

\subsection{Construction of quantum $n$-simplex equations}
To write  a quantum $n$-simplex equation (\SE) for any natural $n$, we give a geometric interpretation of the YBE, which is the 2-simplex equation. For simplicity, we will call  a quantum $n$-simplex equation by a $n$-simplex equation. Suppose that we have three straight lines $l_1$, $l_2$, and $l_3$ on the plane $\mathbb{R}^2$. The line $l_1$ intersects with $l_2$ in the point $R_{12}$, with the line $l_3$ in the point $R_{13}$, and the line $l_2$ intersects with $l_3$ in the point $R_{23}$. We assume that all points $R_{12}$, $R_{13}$, and $R_{23}$ are different  and are vertices of a non-degenerate triangle (2-simplex). See Figure 1.

\begin{figure}[h]
\noindent\centering{\includegraphics[height=0.5 \textwidth]{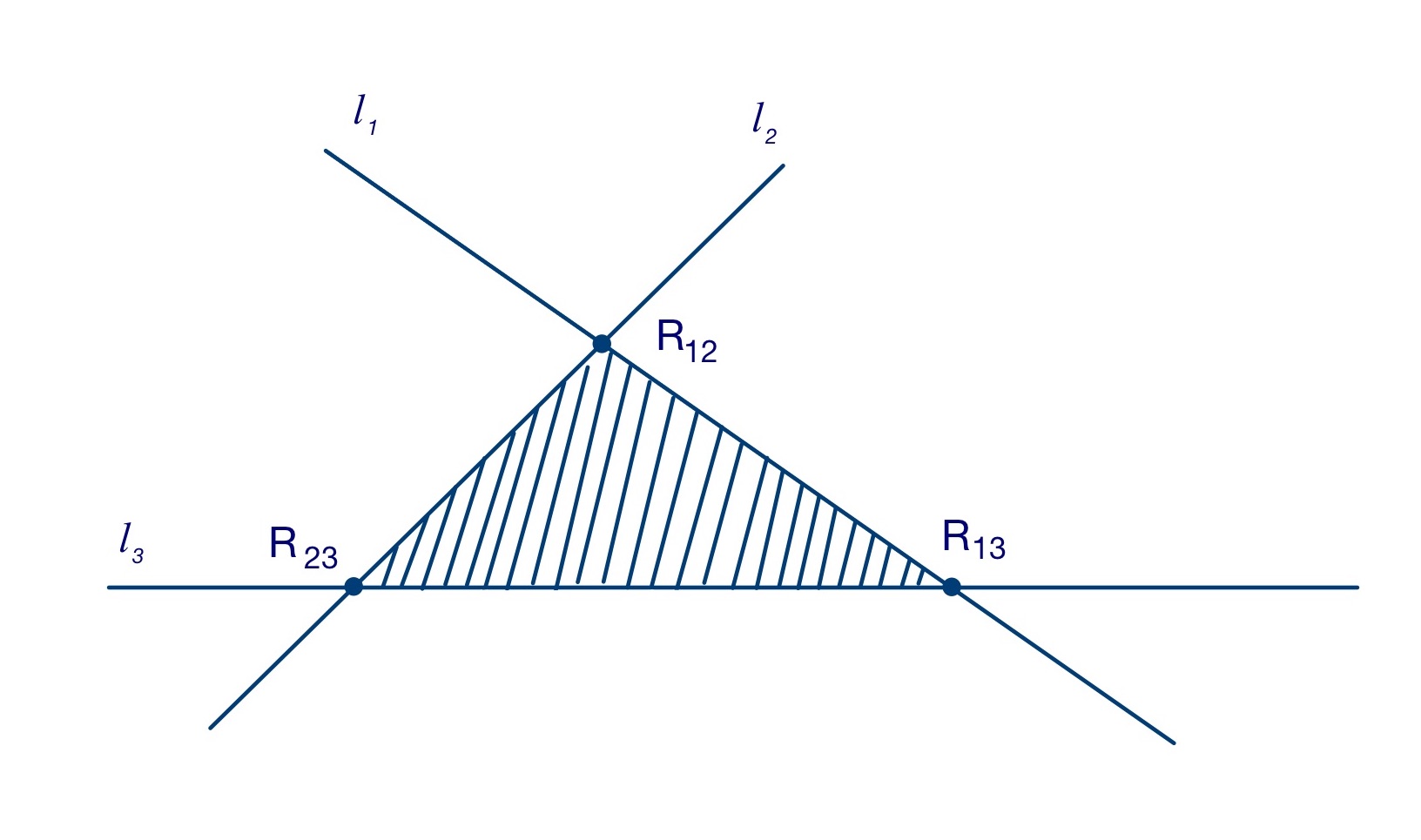}}
\caption{Geometric interpretation of the YBE}
\end{figure}

Using the lexicographical order, we introduce an order on the vertices,  
$$
R_{12} < R_{13} < R_{23}.
$$
Then the YBE is the equality of words, where the first one  is a word which we get by going around  the vertices in the increasing order and the second word is a word which we get by going around the vertices in the decreasing order, i.e.
$$
R_{12} R_{13} R_{23} = R_{23} R_{13} R_{12}.
$$
To get the tetrahedron equation ($3$-SE) we  increment the indices of all lines by 3 and get the triangle with the vertices $R_{45}$, $R_{46}$, and $R_{56}$. Further, embed our plane $\mathbb{R}^2$ into a 3-space $\mathbb{R}^3 = \mathbb{R} \times \mathbb{R}^2$, take a vertex $R_{123}$, which does not lie in $\mathbb{R}^2$ and construct a straight line $l_1$, which connects $R_{123}$ with the first vertex $R_{45}$ of the original triangle. Analogously, construct a straight line $l_2$, which connects $R_{123}$ with the second vertex, $R_{46}$ and construct a straight line $l_3$, which connects $R_{123}$ with the third vertex, $R_{56}$. We construct a tetrahedron with the vertices $R_{123}$, 
$R_{145}$, $R_{246}$, and $R_{356}$. See Figure 2.

\begin{figure}[h]
\noindent\centering{\includegraphics[height=0.7 \textwidth]{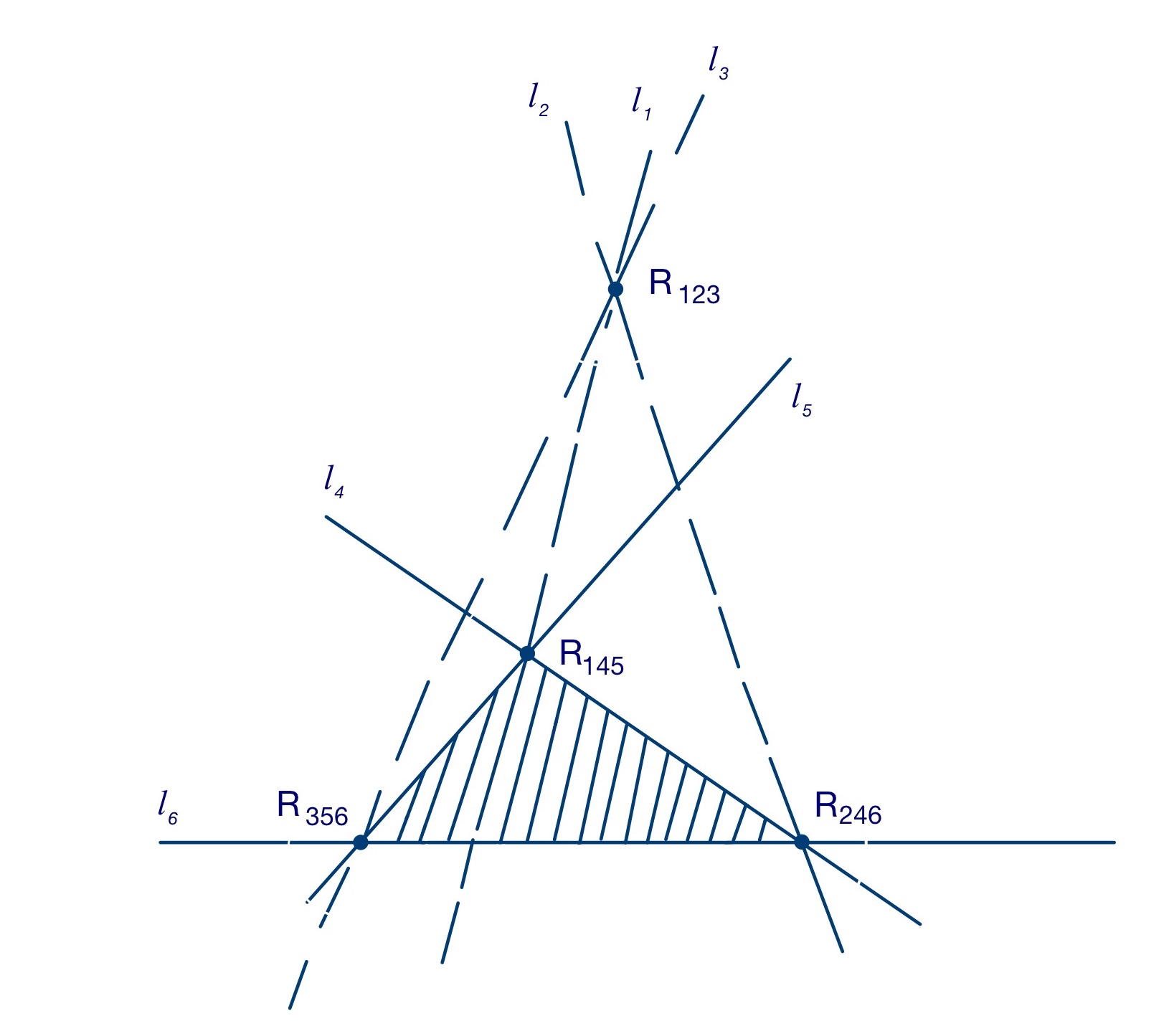}}
\caption{Geometric interpretation of TE}
\end{figure}

Then  the tetrahedron equation (TE) is the equality of words, where the first one  is a word which we get by going around  the vertices of the tetrahedron  in the increasing order and the second word is a word which we get by going around  the vertices in the decreasing order, i.e.
$$
	R_{123} R_{145} R_{246} R_{356} = R_{356} R_{246} R_{145} R_{123}.
$$

Using the same approach one can write the  4-simplex equation,
$$
	R_{1234} R_{1567} R_{2589} R_{368,10} R_{479,10} = R_{479,10} R_{368,10}  R_{2589} R_{1567} R_{1234}.
$$

In general case, the left side and the right side of the {\SE} are words of length $n+1$.
Notice  that $n+1$ is the number of vertices of the $n$-simplex.
Suppose that we have the {\SE} for some $n\geq 3$. We will write this equation in the form 
$$
	R_{\overline{1}}R_{\overline{2}} \cdots R_{\overline{n+1}}
	=
	R_{\overline{n+1}} \cdots R_{\overline{2}} R_{\overline{1}},
$$
where $ \overline{k} = (k_1, k_2, \ldots, k_{n+1}) \in \mathbb{N}^{n+1} $ is a multi-index. We want to construct the $(n+1)$-SE.
To do it, define an operation $ s_n : \mathbb{N} \to \mathbb{N} $, $ s_n(k) = k + n + 1 $, and extend it to the multi-indices by the rule
$$
	s_n(\overline{k}) = (s_n(k_1), s_n(k_2), \ldots, s_n(k_{n+1})) \in \mathbb{N}^{n+1}.
$$
Using these notations we can write the {\SE[(n+1)]} as
$$
	R_{1,2,\ldots,n+1}	R_{1, s_n(\overline{1})} R_{2, s_n(\overline{2})} \cdots R_{n+1, s_n(\overline{n+1})}
	=
	R_{n+1, s_n(\overline{n+1})} \cdots R_{2,s_n(\overline{2})} R_{1,s_n(\overline{1})} R_{1,2,\ldots,n+1}.
$$

The multi-indices in the {\SE} can be regarded as rows of the {\it multi-indices matrix} $MI_n$ that is a $(n+1) \times n$ matrix satisfying the recurrence relation
$$
MI_{n} = 
\begin{pmatrix}
 1  & & 2 & 3 & \cdots & n  \\
  \hline
1 &\vline & &  &  & \\
2 &\vline &  & &   &\\
\vdots &\vline &  & MI_{n - 1} + (n) &  & \\
n-1 &\vline &  &  &   &\\
n &\vline &  &  &  & \\
\end{pmatrix}
$$
where $(n)$ is the $ n $ by $ n-1 $ matrix in which all elements are equal to $n$.
The matrix $ MI_n $ could be written explicitly as 
$$
	\begin{pmatrix}
	1 & 2 & 3 & \cdots & n \\
	1 & n + 1 & n + 2 & \cdots & 2n - 1 \\
	2 & n + 1 & 2n & \cdots & 3n - 3 \\
	\vdots & \vdots & \vdots & \ddots &\vdots \\
	n-1 & 2n - 2 & 3n - 4 & \cdots & \frac{n(n+1)}{2}\\
	n & 2n - 1 & 3n - 3 & \cdots & \frac{n(n+1)}{2}
	\end{pmatrix}
$$

A linear solution of the {\SE} is a linear map $R: V^{\otimes n} \to V^{\otimes n}$ such that the next equality
$$
	R_{\overline{1}}R_{\overline{2}} \cdots R_{\overline{n+1}}
	=
	R_{\overline{n+1}} \cdots R_{\overline{2}} R_{\overline{1}},
$$
of two linear maps $V^{\otimes N} \to V^{\otimes N}$, where $N = n(n+1)/2$, holds. In this equality the map $R_{\overline{k}} : V^{\otimes N} \to V^{\otimes N}$
acts as $R$ on the copies of $V$ with the numbers $\overline{k}$ and as identity map on copies with other numbers.
Notice that $n+1$ is the number of vertices of the $n$-simplex, and $N$ is the number of its edges.

A set-theoretic solution of the {\SE} on a set $X$ is a map $R: X^{n} \to X^{n}$ that is satisfies the {\SE}.

For completeness,  we define the $1$-simplex equation  $R_1 R_2 = R_2 R_1$ as equality of two maps $X \times X \to X \times X$. In this case $R$ is a map $R: X \to X$. Since,
$$
R_1(x, y) = (R(x), y),~~R_2(x, y) = (x, R(y)),~~R_1 R_2 (x, y) = R_2 R_1 (x, y) = (R(x), R(y)),
$$
any map $R: X \to X$ gives a solution to the $1$-simplex equation.

\subsection{Classical $n$-simplex equation}

The equation 
$$
R_{12} R_{13} R_{23} = R_{23}  R_{13} R_{12}
$$
 also is called the Quantum Yang-Baxter Equation. If $(V, R)$ is its linear solution and the linear map $R$  can be presented in the form
$$
R = 1 +  \hbar r + O(\hbar^2),
$$
then $r\colon V\to V$ gives a solution to the Classical Yang-Baxter Equation,
$$
[r_{12}, r_{13}] + [r_{12}, r_{23}] + [r_{13}, r_{23}] = 0,
$$
where $[a, b] = ab - ba$. The \SE \, which we considered before  are quantum \SE. It is interesting to understand what is a classical version of the \SE?  The proof of the following proposition is strait calculations.

\begin{proposition}
If $V$ is a vector space and $R\colon V^n\to V^n$ is a linear map that is  a  solution of the quantum \SE, and there is a map $r \colon V^n \to V^n$ such that 
$$
R = 1 +  \hbar r + O(\hbar^2),
$$
then $r$ satisfied the equations
$$
\sum_{\overline{1} \leq \overline{i} <  \overline{j} \leq \overline{n+1}}
[ r_{\overline{i}},  r_{\overline{j}}] = 0,
$$
where we use the lexicographic order on the  multi-indices. This  equation  is said to be  the classical \SE.
\end{proposition}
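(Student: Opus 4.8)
The plan is to treat $\hbar$ as a formal parameter (equivalently, to assume $R$ possesses a Taylor expansion in $\hbar$ to second order) and to substitute the expansion of $R$ into the quantum {\SE}, then to compare coefficients of powers of $\hbar$ on the two sides. Write
$$
R = 1 + \hbar r + \hbar^2 s + O(\hbar^3),
$$
where $1$ denotes the identity of $V^{\otimes n}$ and $s \colon V^{\otimes n} \to V^{\otimes n}$ is the second-order coefficient. Placing this on the $n$ tensor factors numbered by a multi-index $\overline{k}$ (a row of $MI_n$) gives $R_{\overline{k}} = 1 + \hbar\, r_{\overline{k}} + \hbar^2 s_{\overline{k}} + O(\hbar^3)$ as linear maps $V^{\otimes N} \to V^{\otimes N}$, $N = n(n+1)/2$, where $r_{\overline{k}}$ and $s_{\overline{k}}$ act as $r$, resp. $s$, on the factors numbered by $\overline{k}$ and as the identity on the rest. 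A small preliminary remark is that, by the explicit shape of $MI_n$, its rows satisfy $\overline{1} < \overline{2} < \dots < \overline{n+1}$ in the lexicographic order, so the product $R_{\overline{1}} R_{\overline{2}} \cdots R_{\overline{n+1}}$ is precisely the product of the factors arranged in increasing order of their multi-indices.

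First I would plug these expansions into
$$
R_{\overline{1}} R_{\overline{2}} \cdots R_{\overline{n+1}} = R_{\overline{n+1}} \cdots R_{\overline{2}} R_{\overline{1}}
$$
and collect the terms of each degree in $\hbar$. At degree $0$ both sides equal $1$, and at degree $1$ both sides equal $\sum_{k=1}^{n+1} r_{\overline{k}}$, so neither degree imposes a condition. At degree $2$, a contribution to the left-hand side arises either by taking the $\hbar^2 s_{\overline{k}}$ part of a single factor — these sum to $\sum_{k} s_{\overline{k}}$ — or by taking the $\hbar\, r_{\overline{i}}$ part of one factor and the $\hbar\, r_{\overline{j}}$ part of another with $\overline{i} < \overline{j}$; since then $R_{\overline{i}}$ stands to the left of $R_{\overline{j}}$ in the product, such a pair contributes $r_{\overline{i}} r_{\overline{j}}$. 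Hence the $\hbar^2$-coefficient of the left-hand side is
$$
\sum_{k=1}^{n+1} s_{\overline{k}} + \sum_{\overline{1} \le \overline{i} < \overline{j} \le \overline{n+1}} r_{\overline{i}} r_{\overline{j}}.
$$
On the right-hand side the factors appear in the reversed order, so for the same pair the factor $R_{\overline{j}}$ now stands to the left of $R_{\overline{i}}$ and the pair contributes $r_{\overline{j}} r_{\overline{i}}$; the $\hbar^2$-coefficient of the right-hand side is therefore $\sum_{k=1}^{n+1} s_{\overline{k}} + \sum_{\overline{1} \le \overline{i} < \overline{j} \le \overline{n+1}} r_{\overline{j}} r_{\overline{i}}$.

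Finally, the single-factor contributions $\sum_{k} s_{\overline{k}}$ form the same sum on both sides and cancel when the two $\hbar^2$-coefficients are equated, leaving
$$
\sum_{\overline{1} \le \overline{i} < \overline{j} \le \overline{n+1}} \bigl( r_{\overline{i}} r_{\overline{j}} - r_{\overline{j}} r_{\overline{i}} \bigr) = \sum_{\overline{1} \le \overline{i} < \overline{j} \le \overline{n+1}} [ r_{\overline{i}}, r_{\overline{j}}] = 0,
$$
which is the asserted classical {\SE}. Since the proof is, as the authors note, a straightforward computation, there is no genuine obstacle; the only step that needs a little care is the bookkeeping of the relative order in which two selected $r$-factors occur in the increasing product versus the decreasing one — this order-reversal is exactly what converts the difference of the two sides into a sum of commutators — while the degree-$0$, degree-$1$, and single-factor degree-$2$ contributions agree on the two sides and so drop out. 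The case $n = 2$, with $\overline{1} = (1,2)$, $\overline{2} = (1,3)$, $\overline{3} = (2,3)$, recovers the classical Yang--Baxter equation $[r_{12}, r_{13}] + [r_{12}, r_{23}] + [r_{13}, r_{23}] = 0$, a useful consistency check.
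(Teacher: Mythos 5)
Your proof is correct and is precisely the ``straight calculation'' the paper alludes to without writing out: expand each factor, observe that the degree-$0$, degree-$1$, and single-factor second-order contributions coincide on both sides, and note that the order reversal turns the surviving cross terms into the sum of commutators. The only cosmetic point is that you posit an explicit second-order coefficient $s$, whereas the hypothesis only gives an $O(\hbar^2)$ remainder; since those remainder terms enter each side only through the identical sum over single factors and cancel, this costs nothing.
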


\bigskip

\section{Solutions of arbitrary \SE}

Further we will consider only set-theoretic solutions and call them simply by solutions. We also will work with partial functions $R :  X^n \to X^n$ defined only on some subset $D \subset X^n$.
In these cases we will say that $(X, R)$ is a solution of the {\SE} when the {\SE} holds for every point $\bar{x} \in X^N$
where it has sense, i.e. where left and right parts of the equation are defined.

The next proposition is a generalization of well-known result for the YBE.

\begin{proposition}
	Let  $R : X^n \to X^n $ be a solution of the {\SE}. 
\begin{enumerate}	
\item	If $R$ is invertible, then its inverse $R^{-1}$	is also a solution of the {\SE}.
\item 	If $\varphi _1, \dotss , \varphi _n$ are pairwise commuting maps from $X$ to $X$, then the map $R : X^n \to X^n $ defined as
	$$
		R(x_1, \dotss , x_n) = (\varphi _1 (x_1), \dotss , \varphi _n (x_n)),
	$$
	is a solution of the {\SE}.
\item	If $ \varphi \in Sym(X) $ is an arbitrary bijection of the set $X$ onto itself, then the composition 
$$
 \varphi^{\times n}  \, R  \, (\varphi^{-1})^{\times n}
 $$ is a solution of the {\SE}.
\end{enumerate}
\end{proposition}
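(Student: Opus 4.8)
The plan is to treat the three items separately, in each case working directly with the form $R_{\overline 1}\cdots R_{\overline{n+1}} = R_{\overline{n+1}}\cdots R_{\overline 1}$ of the {\SE} as an identity of maps $X^N\to X^N$ with $N=n(n+1)/2$. For (1) I would simply invert both sides: since $R$ is a bijection of $X^n$, each $R_{\overline k}$ is a bijection of $X^N$ whose inverse is $(R^{-1})_{\overline k}$ (invert $R$ on the chosen factors, leave the rest alone), and inverting reverses products. Thus inverting the left-hand side gives $(R^{-1})_{\overline{n+1}}\cdots(R^{-1})_{\overline 1}$ and inverting the right-hand side gives $(R^{-1})_{\overline 1}\cdots(R^{-1})_{\overline{n+1}}$; equating these two is exactly the {\SE} for $R^{-1}$.

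For (2), observe that $R_{\overline k}$ applies $\varphi_t$ to the coordinate of $X^N$ indexed by the $t$-th entry of $\overline k$ and the identity to every other coordinate, so $R_{\overline k}$ is a product of the elementary maps $\varphi_t^{(k_t)}$, each of which applies $\varphi_t$ in coordinate $k_t$ and the identity elsewhere. The combinatorial input is the structure of the multi-index matrix $MI_n$: each label $\ell\in\{1,\ldots,N\}$ occurs in exactly two of its rows, since an edge of the $n$-simplex meets exactly two of its vertices. Hence over the whole product $R_{\overline 1}\cdots R_{\overline{n+1}}$ each coordinate $\ell$ is acted on by exactly two elementary maps, and all elementary maps occurring in the product pairwise commute: trivially if they act on different coordinates, and by the pairwise commutativity of the $\varphi_i$ if on the same coordinate. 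Therefore both sides of the {\SE} are products of the same multiset of pairwise commuting elementary maps (one per entry of $MI_n$), so they are equal; one can even record that coordinate $\ell$ gets $\varphi_a\varphi_b=\varphi_b\varphi_a$ where $a,b$ index the two rows in which $\ell$ appears.

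For (3), put $\Phi=\varphi^{\times N}:X^N\to X^N$. The key observation is that $(\varphi^{\times n}\,R\,(\varphi^{-1})^{\times n})_{\overline k}=\Phi\,R_{\overline k}\,\Phi^{-1}$: conjugating $R_{\overline k}$ by $\varphi$ in just the coordinates listed by $\overline k$ already produces the embedded map $R'_{\overline k}$, and the remaining factors of $\Phi$ act on coordinates disjoint from $\overline k$, hence commute with $R_{\overline k}$ and cancel. Then
$$
R'_{\overline 1}\cdots R'_{\overline{n+1}}=\Phi\,(R_{\overline 1}\cdots R_{\overline{n+1}})\,\Phi^{-1}=\Phi\,(R_{\overline{n+1}}\cdots R_{\overline 1})\,\Phi^{-1}=R'_{\overline{n+1}}\cdots R'_{\overline 1},
$$
where the middle equality is the {\SE} for $R$.

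None of the three parts is deep. The one place that needs care is (2): one must verify that every coordinate of $X^N$ is affected by exactly two of the maps $\varphi_i$ over the course of the product — equivalently, that each label appears in exactly two rows of $MI_n$ — and then use the commutativity of the $\varphi_i$ to make both the order of those two maps and the order of the full product irrelevant. In (3) the only subtlety is justifying that the ``local'' conjugation defining $R'_{\overline k}$ coincides with the ``global'' conjugation by $\Phi$, which is the content of the displayed identity above.
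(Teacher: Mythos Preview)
Your proposal is correct and follows essentially the same approach as the paper: invert both sides for (1), observe that the $R_{\overline k}$ pairwise commute for (2), and use $(\varphi^{\times n}R(\varphi^{-1})^{\times n})_{\overline k}=\Phi R_{\overline k}\Phi^{-1}$ with $\Phi=\varphi^{\times N}$ to telescope for (3). Your write-up is more detailed --- in particular the ``each label appears in exactly two rows of $MI_n$'' observation in (2) is pleasant but not actually needed, since pairwise commutativity of the $R_{\overline k}$ already follows from the $\varphi_i$ commuting on shared coordinates and acting trivially on disjoint ones.
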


\begin{proof}
(1)	If we take inverse on both sides of {\SE}, we get that $R^{-1}$ gives a solution.

(2) 	It is easy to see that all functions $ R_{\overline{\imath}} $ derived from such $ R $ are pairwise commutative.
	Thus the left side of the {\SE} could be reordered to obtain the right side of the equation.

(3)	 Put $\widetilde R = \varphi^{\times n} \, R \, (\varphi^{-1})^{\times n}$. Then 
$$
\widetilde{R}_{\bar k} = \varphi^{\times n}_{\bar k} \, R_{\bar k} \, ({\varphi^{-1}})^{\times n}_{\bar k} = \varphi^{\times N} \, R_{\bar k} \, ({\varphi^{-1}})^{\times N}$$
The equation
$$
	\widetilde{R}_{\overline{1}}\widetilde R_{\overline{2}} \cdots  \widetilde R_{\overline{n+1}}
	=
	\widetilde R_{\overline{n+1}} \cdots \widetilde R_{\overline{2}} \widetilde R_{\overline{1}},
$$
is equivalent to the equation

$$\begin{gathered}
	\left( \varphi^{\times N} \, R_{\bar 1} \, ({\varphi^{-1}})^{\times N} \right) \, \left( \varphi^{\times N} \, R_{\bar 2} \, ({\varphi^{-1}})^{\times N}\right)  \cdots  \left( \varphi^{\times N} \, R_{\overline {n+1}} \, ({\varphi^{-1}})^{\times N}\right) 
	=\\
	= \left( \varphi^{\times N} \, R_{\overline{n+1}} \, ({\varphi^{-1}})^{\times N} \right)  \cdots \left(  \varphi^{\times N} \, R_{\bar 2} \, ({\varphi^{-1}})^{\times N} \right)  \left(  \varphi^{\times N} \, R_{\bar 1} \, ({\varphi^{-1}})^{\times N} \right).
	\end{gathered}
$$
After cancellations we obtain an equation
 $$
	\varphi^{\times N} {R}_{\overline{1}} R_{\overline{2}} \cdots  R_{\overline{n+1}} ({\varphi^{-1}})^{\times N} 
	=
	\varphi^{\times N} R_{\overline{n+1}} \cdots  R_{\overline{2}}  R_{\overline{1}}\ ({\varphi^{-1}})^{\times N}. 
$$
Since $\varphi^N$ is a bijection, the obtained equation holds if and only if the initial equation holds.
\end{proof}

The following proposition is a generalization of  Proposition 2.2 from \cite{Kassotakis}.
\begin{proposition} \label{twist} 
If $ R $ is a solution of a {\SE} and $ \varphi \in Sym(X) $ is such that
	$$ (\varphi^{-1})^{\times n} \, R \, \varphi^{\times n} = R $$
then the map
	$$ \tilde R = (\varphi \times \id \times \varphi \times \ldots) \, R \, (\id \times \varphi^{-1} \times \id \times \ldots),
	$$
where $(\varphi \times \id \times \varphi \times \ldots)$ acts as $\varphi$ on odd components and as $\id$ on even,
$(\id \times \varphi^{-1} \times \id \times \ldots)$ acts as $\varphi^{-1}$ on even components and as $\id$ on odd is also  a solution of the {\SE}.
\end{proposition}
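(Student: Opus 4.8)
The plan is to show that each of the two sides of the {\SE} for $\tilde R$ is obtained from the corresponding side of the {\SE} for $R$ by sandwiching it between one and the same pair of ``diagonal'' maps, after which the statement follows because $R$ is a solution. Here I call a self-map of $X^N$ \emph{diagonal} if it applies some power of $\varphi$ to each coordinate separately; I identify the $N$ coordinates of $X^N$ with the edges of the $n$-simplex, and I write $\overline{k}$ for the $k$-th row of the multi-index matrix $MI_n$, so that the {\SE} for $R\colon X^n\to X^n$ reads $R_{\overline{1}}\cdots R_{\overline{n+1}}=R_{\overline{n+1}}\cdots R_{\overline{1}}$ as an identity of maps $X^N\to X^N$, $N=n(n+1)/2$.

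First I would observe that, placing $\tilde R$ in position $\overline{k}$, one has $\tilde R_{\overline{k}}=A_{\overline{k}}\,R_{\overline{k}}\,B_{\overline{k}}$, where $A_{\overline{k}}$ is the diagonal map applying $\varphi$ to each coordinate of $\overline{k}$ lying in an odd column and $B_{\overline{k}}$ applies $\varphi^{-1}$ to each coordinate of $\overline{k}$ lying in an even column. Since all diagonal maps commute with one another, and since the product $\Phi_{\overline{k}}:=A_{\overline{k}}B_{\overline{k}}^{-1}$ is exactly the diagonal map applying $\varphi$ to \emph{every} coordinate of $\overline{k}$ — with which $R_{\overline{k}}$ commutes, this being the content of the hypothesis $(\varphi^{-1})^{\times n}R\,\varphi^{\times n}=R$ — one gets at once the alternative presentation $\tilde R_{\overline{k}}=B_{\overline{k}}\,R_{\overline{k}}\,A_{\overline{k}}$.

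Next I would substitute these expressions into the products $\tilde R_{\overline{1}}\cdots\tilde R_{\overline{n+1}}$ and $\tilde R_{\overline{n+1}}\cdots\tilde R_{\overline{1}}$ and transport all the diagonal decorations to the two ends. A diagonal map passes a factor $R_{\overline{m}}$ whenever it is supported off the coordinates of $\overline{m}$, while a block equal to a power of $\Phi_{\overline{m}}$ passes $R_{\overline{m}}$ by the commutation relation above; and the book-keeping is controlled by the shape of $MI_n$: every coordinate $\{i,j\}$ with $i<j$ occurs exactly twice, namely in row $\overline{i}$ in column $j-1$ and in row $\overline{j}$ in column $i$. Using this, for each interior factor $\tilde R_{\overline{2}},\dots,\tilde R_{\overline{n}}$ one picks the presentation $A_{\overline{k}}R_{\overline{k}}B_{\overline{k}}$ or $B_{\overline{k}}R_{\overline{k}}A_{\overline{k}}$ whose two decorations can reach the ends; the mismatch between the parities of $j-1$ and $i$ then makes all decorations attached to coordinates incident to an interior vertex cancel in pairs, and the outcome is an identity
$$
\tilde R_{\overline{1}}\cdots\tilde R_{\overline{n+1}}=\Theta_1\bigl(R_{\overline{1}}\cdots R_{\overline{n+1}}\bigr)\Theta_2,\qquad
\tilde R_{\overline{n+1}}\cdots\tilde R_{\overline{1}}=\Theta_1\bigl(R_{\overline{n+1}}\cdots R_{\overline{1}}\bigr)\Theta_2,
$$
with the \emph{same} diagonal maps $\Theta_1,\Theta_2$ on both sides. (Conservation of the total decoration forces $\Theta_1\Theta_2=\prod_k A_{\overline{k}}B_{\overline{k}}=\prod_{1\le i<j\le n+1}\varphi_{\{i,j\}}^{\,(-1)^{j}-(-1)^{i}}$; in fact one expects $\Theta_1=\Theta_2$, with $\{i,j\}$-component $\varphi^{1}$, $\varphi^{0}$ or $\varphi^{-1}$ according as $(i,j)$ is $(\mathrm{odd},\mathrm{even})$, of equal parity, or $(\mathrm{even},\mathrm{odd})$.) Since $R$ solves the {\SE} the two bracketed products coincide, hence so do the two outer products, i.e.\ $\tilde R$ solves the {\SE}; when $R$ is only partially defined the same computation is valid at every point where both sides make sense.

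The step I expect to be the main obstacle is the combinatorial book-keeping just described: one has to check that for every interior factor the choice $A R B$ versus $B R A$ can indeed be made so that both of its decorations commute past all the intervening factors, that the interior contributions cancel with no residue, and that what survives is the \emph{same} pair $\Theta_1,\Theta_2$ on the two sides of the equation. A good warm-up is the case $n=2$ (Proposition~2.2 of \cite{Kassotakis}): there $\Theta_1=\Theta_2=\varphi_1\varphi_3^{-1}$, and a direct calculation gives $\tilde R_{12}\tilde R_{13}\tilde R_{23}=\varphi_1\varphi_3^{-1}\,(R_{12}R_{13}R_{23})\,\varphi_1\varphi_3^{-1}$ and $\tilde R_{23}\tilde R_{13}\tilde R_{12}=\varphi_1\varphi_3^{-1}\,(R_{23}R_{13}R_{12})\,\varphi_1\varphi_3^{-1}$; for general $n$ the cancellations can be organised one coordinate at a time.
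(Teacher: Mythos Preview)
Your proposal is correct and follows essentially the same route as the paper: both arguments rest on (i) the two equivalent presentations $\tilde R = A\,R\,B = B\,R\,A$ coming from the hypothesis $(\varphi^{-1})^{\times n}R\,\varphi^{\times n}=R$, (ii) the combinatorial fact that each coordinate of $X^N$ occurs in exactly two rows of $MI_n$, at column positions of opposite parity, and (iii) the resulting cancellation of all interior decorations, leaving the same diagonal conjugator $\Phi$ on both sides. The only organisational difference is that the paper makes a uniform choice---use $R_A$ in rows of one parity and $R_B$ in rows of the other---whereas you describe choosing the presentation factor by factor; the paper's uniform rule is exactly the bookkeeping device you flagged as the main obstacle, and adopting it would remove the residual vagueness in your ``pick whichever works'' step.
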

\begin{proof}
Note that $$(\varphi \times \id \times \varphi \times \ldots) =  (\id \times \varphi^{-1} \times \id \times \ldots) \, \varphi^{\times n}, $$

$$(\id \times \varphi^{-1} \times \id \times \ldots ) = (\varphi^{-1})^{\times n} \, (\varphi^{} \times \id \times \varphi^{} \times \ldots)$$
and  
$$ \begin{gathered}
R_A := \tilde R =
(\varphi \times \id \times \varphi \times \ldots)  R   (\id \times \varphi^{-1} \times \id \times \ldots) = \\ =
(\varphi \times \id \times \varphi \times \ldots)  \left ( (\varphi^{-1})^{\times n}   R   \varphi^{\times n} \right )   (\id \times \varphi^{-1} \times \id \times \ldots) = \\ =
(\id \times \varphi^{-1} \times \id \times \ldots)   R   (\varphi^{} \times \id \times \varphi^{} \times \ldots) =: R_B.
\end{gathered}$$

Writing the {\SE} one can observe that every index which first occurrence was in multi-index $\overline{i}$ on the position $j$ will appear again only in multi-index $\overline{j+1}$ on the position $i$. It means that if we replace every $R_{\overline{i}}$ where $i$ is even with $(R_{A})_{\overline{i}}$ and every $R_{\overline{i}}$ where $i$ is odd with $(R_{B})_{\overline{i}}$ then for every $i$ and every $j \geqslant i$ the component 
with index $i_j$ would be trivially acted upon between $R_{\overline{i}}$ and $R_{\overline{j+1}}$. Indeed if $i$ and $j$ are even or odd simultaneously then $i_j$ between $R_{\overline{i}}$ and $R_{\overline{j+1}}$ is acted upon by $id \circ id$, and if they have distinct parity then it is acted upon by either $\varphi\circ\varphi^{-1}$ or $\varphi^{-1}\circ\varphi$.
The final observation of the proof is that since every index appears in precisely two different multi-indices the equation
$$
	(R_B)_{\overline{1}}(R_A)_{\overline{2}} \ldots (\tilde R)_{\overline{n+1}} =
	(\tilde R)_{\overline{n+1}}\ldots (R_A)_{\overline{2}}(R_B)_{\overline{1}}
$$
takes the form
$$
	\Phi R_{\overline{1}}R_{\overline{2}}...R_{\overline{n+1}} \Phi =
	\Phi R_{\overline{n+1}}...R_{\overline{2}}R_{\overline{1}} \Phi,
$$
where $\Phi $ is componentwise action of either $\varphi$, $\varphi^{-1}$ or $\id$.
\end{proof}

\medskip

It is easy to check that if $(X, R)$, $R(x, y) = (x, y*x)$, is an elementary solution of the YBE and $P(x, y) = (y, x)$ for all $x, y \in X$, then
$$
P R P(x, y) = (x*y, y)
$$
is also an elementary solution of the YBE. 

The proof of the next lemma is elementary, but we give it to further generalize on arbitrary $n$-simplex equations.

\begin{lemma} 
Suppose that $R(x, y) = (x \cdot y, y * x)$ is a solution of the YBE. Then $P R P(x, y) = (x * y, y \cdot x)$ is also a solution of the YBE. 
\end{lemma}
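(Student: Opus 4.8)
The plan is to recognise $PRP$ as a conjugate of $R$ by a coordinate permutation of $X^3$, so that the Yang--Baxter equation for $PRP$ becomes the Yang--Baxter equation for $R$ read in the opposite order. First record the elementary identity
$$
PRP(x,y) = PR(y,x) = (x * y,\, y \cdot x),
$$
which both confirms the displayed formula and, via Lemma \ref{YBT}, identifies the $2$-groupoid of $\widetilde{R} := PRP$ as $(X, *, \cdot)$, i.e. the one obtained from $(X,\cdot,*)$ by interchanging the two binary operations.

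Now let $\pi = P_{13}$ be the transposition of the first and third coordinates of $X^3$, so $\pi(a,b,c) = (c,b,a)$. The key step is the triple of bookkeeping identities
$$
\pi\, R_{12}\, \pi = \widetilde{R}_{23}, \qquad \pi\, R_{13}\, \pi = \widetilde{R}_{13}, \qquad \pi\, R_{23}\, \pi = \widetilde{R}_{12},
$$
each a one-line check on a triple $(a,b,c)$; for instance
$$
\pi\, R_{12}\, \pi(a,b,c) = \pi R_{12}(c,b,a) = \pi\bigl(R(c,b),a\bigr) = \bigl(a,\, PR(c,b)\bigr) = \bigl(a,\, \widetilde{R}(b,c)\bigr) = \widetilde{R}_{23}(a,b,c),
$$
and similarly for the other two. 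The point is that conjugating $R_{ij}$ by $\pi$ both relabels the active pair of coordinates by the transposition $(1\,3)$ and reverses the order in which the two active coordinates enter $R$; the latter reversal is exactly the difference between $\widetilde{R}$ and $R$. Finally, conjugate the Yang--Baxter equation $R_{12}R_{13}R_{23} = R_{23}R_{13}R_{12}$ by $\pi$, inserting $\pi\pi = \id$ between consecutive factors:
$$
\widetilde{R}_{23}\widetilde{R}_{13}\widetilde{R}_{12} = (\pi R_{12}\pi)(\pi R_{13}\pi)(\pi R_{23}\pi) = \pi\, R_{12}R_{13}R_{23}\, \pi = \pi\, R_{23}R_{13}R_{12}\, \pi = \widetilde{R}_{12}\widetilde{R}_{13}\widetilde{R}_{23},
$$
which is precisely the Yang--Baxter equation for $\widetilde{R}$. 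Observe that conjugation by $\pi$ also reverses the order of the three factors, turning the increasing side of the equation into the decreasing side and vice versa; this reversal is what makes the argument close up, and the analogous device — conjugation by the coordinate permutation that reverses the roles of the edges of the $n$-simplex — is what one would use to handle a general $n$-SE.

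The only delicate point is keeping the conventions straight in the identities $\pi R_{ij}\pi = \widetilde{R}_{i'j'}$: one must remember that $R_{ij}$ feeds coordinate $i$ into the first slot of $R$ and coordinate $j$ into the second, so that conjugation by $\pi$ genuinely produces the ``$PRP$'' map, with its two inputs transposed relative to $R$. Once this is pinned down the rest is mechanical. Alternatively one can avoid coordinates entirely and argue through Lemma \ref{YBT}: the three defining identities of the $2$-groupoid $(X,*,\cdot)$ attached to $\widetilde{R}$ are obtained from those of the $2$-groupoid $(X,\cdot,*)$ attached to $R$ by the substitution $(x,y,z) \mapsto (z,y,x)$ together with interchanging the two sides of each equation, so the two systems of identities are equivalent.
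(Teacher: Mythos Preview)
Your proof is correct and follows essentially the same approach as the paper: both arguments conjugate the Yang--Baxter equation by $P_{13}$ to pass from $R$ to $\widetilde{R} = PRP$. Your presentation via the three identities $\pi R_{ij}\pi = \widetilde{R}_{i'j'}$ is more direct than the paper's, which first rewrites each $R_{ij}$ as a conjugate of $R_{12}$ before substituting $R = P\widetilde{R}P$ and conjugating by $P_{13}$ --- that extra bookkeeping is there because the paper is setting up the proof of Proposition~\ref{invsymm} for general $n$.
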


\begin{proof}
 It is easy to see that
$$
R_{13} = P_{23}R_{12}P_{23},~~R_{23} = P_{12} P_{23}R_{12}P_{23}P_{12},
$$
where $P_{ij} : X^3 \to X^3$ permutes the $i$-th and $j$-th components. Then the YBE has the form
$$
R_{12} \cdot R_{12}^{P_{23}} \cdot R_{12}^{P_{23}P_{12}} = R_{12}^{P_{23}P_{12}} \cdot R_{12}^{P_{23}} \cdot R_{12}.
$$

Denote by $\widetilde{R} = P R P$. Then $R = P \widetilde{R} P$ and from the YBE 
$$
\widetilde{R}_{12}^{P_{12}} \cdot  \widetilde{R}_{12}^{P_{12} P_{23}} \cdot \widetilde{R}_{12}^{P_{12}P_{23}P_{12}} = 
\widetilde{R}_{12}^{P_{12}P_{23}P_{12}} \cdot  \widetilde{R}_{12}^{P_{12} P_{23}} \cdot \widetilde{R}_{12}^{P_{12}}.
$$
Conjugating both sides of this equation by $P_{12} P_{23}P_{12} = P_{13}$ we get
$$
\widetilde{R}_{12}^{P_{23} P_{12}} \cdot  \widetilde{R}_{12}^{P_{23}} \cdot \widetilde{R}_{12}= 
\widetilde{R}_{12} \cdot  \widetilde{R}_{12}^{P_{23}} \cdot \widetilde{R}_{12}^{P_{23} P_{12}}.
$$
That is equivalent to
$$
\widetilde{R}_{12} \widetilde{R}_{12} \widetilde{R}_{12}= 
\widetilde{R}_{12}   \widetilde{R}_{12}  \widetilde{R}_{12}.
$$
It means  that $\widetilde{R} = P R P$ satisfied the YBE.

\end{proof}

To generalize this result, let us consider a solution $(X, R)$ of the {\SE}  and define a permutation $P \in Sym(X^n)$ by
$$
P = P_{1,n} P_{2,n-1} \ldots P_{[n/2]+1,n-[n/2]},
$$
where $[n/2]$ is the integer part of $n/2$ and for $n = 2m+1$, we assume that the last permutation $P_{[n/2]+1,n-[n/2]} = P_{m+1,m+1}$ is the identity. It is easy to see that $P^{-1} = P$. In these notations we can formulate the following proposition.

\begin{proposition} \label{invsymm}
	Let $ R : X^n \to X^n $ be a solution of the {\SE}. 
	Then $  \widetilde{R} = P R P $ is also a solution of the {\SE}.
\end{proposition}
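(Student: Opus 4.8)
The plan is to mimic the argument of the $n=2$ lemma just proved, but to organize it around the action of the permutation $P$ on multi-indices rather than doing case analysis by hand. The key observation is that $P = P_{1,n}P_{2,n-1}\cdots$ is the permutation of $\{1,\dots,n\}$ that reverses the order, i.e. $P$ sends position $i$ to position $n+1-i$, acting on $X^n$ by permuting components accordingly; and the conjugation action $R \mapsto PRP$ on a solution of the {\SE} corresponds, at the level of the big space $X^N$, to relabeling the indices inside each multi-index $\overline{k}$ by the reversal map. So the first step is to make precise the statement: for each multi-index $\overline{k}=(k_1,\dots,k_n)$ appearing in the {\SE}, there is a permutation $\pi_{\overline k}\in Sym(X^N)$ (a product of transpositions $P_{k_i,k_j}$ of coordinates in $X^N$) such that $\widetilde R_{\overline k} = \pi_{\overline k}\, R_{\overline k}\, \pi_{\overline k}^{-1}$, exactly as in the displayed identities $R_{13}=P_{23}R_{12}P_{23}$, $R_{23}=P_{12}P_{23}R_{12}P_{23}P_{12}$ in the $n=2$ case.

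Next I would exploit the combinatorial structure of the multi-indices matrix $MI_n$ recalled above. The crucial fact — already used in the proof of Proposition~\ref{twist} — is that each index from $\{1,\dots,N\}$ occurs in exactly two rows of $MI_n$, and there is a precise rule governing where: the index occupying position $j$ of row $\overline i$ reappears at position $i$ of row $\overline{j+1}$. I would use this to show that the various conjugating permutations $\pi_{\overline k}$ telescope. Concretely, writing the {\SE} for $\widetilde R$,
$$
\widetilde R_{\overline 1}\widetilde R_{\overline 2}\cdots \widetilde R_{\overline{n+1}}
= \widetilde R_{\overline{n+1}}\cdots \widetilde R_{\overline 2}\widetilde R_{\overline 1},
$$
and substituting $\widetilde R_{\overline k} = \pi_{\overline k} R_{\overline k}\pi_{\overline k}^{-1}$, I want to check that on each coordinate of $X^N$ the intermediate permutations between two consecutive factors $R_{\overline i}$ and $R_{\overline{j+1}}$ touching that coordinate cancel, leaving $\Psi\, R_{\overline 1}\cdots R_{\overline{n+1}}\,\Psi^{-1}$ on the left and $\Psi\, R_{\overline{n+1}}\cdots R_{\overline 1}\,\Psi^{-1}$ on the right for one common permutation $\Psi$ — just as the $n=2$ computation collapses after conjugating by $P_{13}$. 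Since $R$ solves the {\SE}, the two bracketed products agree, and conjugating by $\Psi^{-1}$ finishes the proof.

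I expect the main obstacle to be the bookkeeping in that telescoping step: one must pin down precisely how the reversal permutation inside multi-indices interacts with the "position $j$ of row $\overline i$ $\leftrightarrow$ position $i$ of row $\overline{j+1}$" incidence rule, and verify that the parity/ordering works out so that adjacent conjugators really do cancel pairwise (including the diagonal $P_{m+1,m+1}=\id$ subtlety when $n$ is odd). A clean way to handle this, and the route I would actually take, is to avoid coordinate-chasing entirely: observe that $P$ as a permutation of the vertex set $\{1,\dots,n\}$ of the $n$-simplex induces an automorphism of the simplicial complex, hence permutes the edge set $\{1,\dots,N\}$ by some permutation $\widehat P$; show that this $\widehat P$ simultaneously conjugates each $R_{\overline k}$ to some $\widetilde R_{\overline{k'}}$ while permuting the set of rows of $MI_n$ among themselves in an order-reversing-compatible way; and then the {\SE} for $\widetilde R$ becomes literally a $\widehat P$-conjugate of the {\SE} for $R$, possibly after also applying the order-reversal symmetry (taking inverses, part (1) of the earlier Proposition) if $\widehat P$ reverses rather than preserves the row order. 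Checking that the induced edge permutation $\widehat P$ has the claimed compatibility with $MI_n$ is the one genuinely non-formal point, and it is where I would spend the bulk of the write-up.
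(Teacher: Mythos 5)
Your second route---the one you say you would actually take---is essentially the paper's own proof: the paper writes $R_{\overline{k}} = R_{\overline{1}}^{Q_{k-1}}$ and conjugates the whole equation by the single permutation $Q_n^{-1}P$, using the identities $P Q_k Q_n^{-1} P = Q_{n-k}$, and that conjugator is exactly your induced edge permutation $\widehat{P}$ reversing the row order of $MI_n$ (the paper, like you, leaves the combinatorial verification of these identities implicit). One correction, though: the ``order-reversal symmetry'' you invoke at the end is \emph{not} part (1) of the earlier proposition (passing to $R^{-1}$, which would require an invertibility hypothesis that is not assumed here); what is actually needed is only the trivial observation that the $n$-simplex equation $A_{\overline{1}}\cdots A_{\overline{n+1}} = A_{\overline{n+1}}\cdots A_{\overline{1}}$ is unchanged when its two sides are exchanged, so obtaining the $\widehat{P}$-conjugate of the equation with the factors in reversed order already yields the equation for $\widetilde{R}$.
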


\begin{proof} 
The \SE
$$
	R_{\overline{1}}R_{\overline{2}} \cdots R_{\overline{n+1}} 	= 	R_{\overline{n+1}} \cdots R_{\overline{2}} R_{\overline{1}}
$$
can be presented in the form
$$
	R_{\overline{1}} R_{\overline{1}}^{Q_1} \cdots R_{\overline{1}}^{Q_n} 	= 	R_{\overline{1}}^{Q_n} \cdots R_{\overline{1}}^{Q_1} R_{\overline{1}},
$$
where $Q_1, \ldots, Q_n \in Sym(X^n)$ are some permutations. Since $R = P  \widetilde{R} P$, we have
$$
	\widetilde{R}_{\overline{1}}^{P} \widetilde{R}_{\overline{1}}^{P Q_1} \cdots \widetilde{R}_{\overline{1}}^{P Q_n} 	= 	\widetilde{R}_{\overline{1}}^{P Q_n} \cdots \widetilde{R}_{\overline{1}}^{P Q_1} \widetilde{R}_{\overline{1}}^{P}.
$$
After conjugating both sides by $Q_n^{-1} P$ we get
$$
	\widetilde{R}_{\overline{1}}^{P Q_n^{-1} P} \widetilde{R}_{\overline{1}}^{P Q_1Q_n^{-1} P} \cdots \widetilde{R}_{\overline{1}}^{P Q_{n-1} Q_n^{-1} P} \widetilde{R}_{\overline{1}} 	= 	\widetilde{R}_{\overline{1}} \widetilde{R}_{\overline{1}}^{P Q_{n-1} Q_n^{-1} P} \cdots \widetilde{R}_{\overline{1}}^{P Q_1Q_n^{-1} P} \	\widetilde{R}_{\overline{1}}^{P Q_n^{-1} P}.
$$
Using the equalities
$$
P Q_n^{-1} P = Q_n,~~P Q_1 Q_n^{-1} P = Q_{n-1}, \ldots, P Q_{n-1} Q_n^{-1} P = Q_1,
$$
we have
$$
	\widetilde{R}_{\overline{1}}^{Q_n} \widetilde{R}_{\overline{1}}^{Q_{n-1}} \cdots \widetilde{R}_{\overline{1}}^{Q_1} \widetilde{R}_{\overline{1}} 	= 	
\widetilde{R}_{\overline{1}} \widetilde{R}_{\overline{1}}^{Q_1} \cdots \widetilde{R}_{\overline{1}}^{Q_{n-1}} \	\widetilde{R}_{\overline{1}}^{Q_n}.
$$
Hence, $	\widetilde{R}$ gives a solution of the \SE.
\end{proof}

\begin{question}
We know that if $R$ is a solution of the YBE, then $S = P R$ is a solution of the braid equation $S_{12} S_{23} S_{12} = S_{23} S_{12} S_{23}$.
What is an analogous of the braid equation for the  {\SE} when $n \geq 3$?
\end{question}

\subsection{Composition of solutions}
For the linear solutions of the YBE we know two operations on solutions (see, for example, \cite{KS}): tensor product and direct sum. If $(V_1, R^{(1)})$ and $(V_2, R^{(2)})$ are two solutions. Then their tensor product is the solution $(V_1 \otimes V_2, R^{(1)} \otimes R^{(2)})$ and the direct sum of solutions is the solution
$(V_1 \times V_2, R^{(1)} + R^{(2)})$, where the map $R^{(1)} + R^{(2)}$ is defined on the basis of $(V_1 \times V_2) \otimes (V_1 \times V_2)$ by the rules
\begin{eqnarray*}
(R^{(1)} + R^{(2)}) (e_i \otimes e_j) &=& R^{(1)}(e_i \otimes e_j),\\
(R^{(1)} + R^{(2)}) (e_i \otimes f_q) &=& e_i \otimes f_q,\\
(R^{(1)} + R^{(2)}) (f_p \otimes e_j) &=& f_p \otimes e_j,\\
(R^{(1)} + R^{(2)}) (f_p\otimes f_q) &=& R^{(2)}(f_p\otimes f_q),
\end{eqnarray*}
where $\{e_{\alpha} \}$ is a basis of $V_1$ and $\{f_{\beta} \}$ is a basis of $V_2$.

 In this subsection we are considering the following question: Let $(X, A)$ and $(Y, B)$ be set-theoretic solutions of the {\SE} and the {\SE[m]}, correspondingly. What new solutions can one construct? 

 If $m=n$, then one can define the direct product of solutions,
$$
A \times B : ( X \times Y)^n \to ( X \times Y)^n,
$$
$$
A \times B \left( (x_1, y_1), (x_2, y_2), \ldots, (x_n, y_n) \right) = \left( (f_1(\bar{x}), g_1(\bar{y})), (f_2(\bar{x}), g_2(\bar{y})), \ldots, (f_n(\bar{x}), g_n(\bar{y})) \right),
$$
where
$$
A(\bar{x}) = (f_1(\bar{x}), f_2(\bar{x}), \ldots, f_n(\bar{x})),~~~B(\bar{y}) = (g_1(\bar{y}), g_2(\bar{y}), \ldots, g_n(\bar{y})),
$$
and we denote
$$
\bar{x} = (x_1, x_2, \ldots, x_n),~~
\bar{y} = (y_1, y_2, \ldots, y_n).
$$
The map $A \times B$ gives a solution of the {\SE}. 

\medskip

\begin{definition} \label{com}
Let $ A \colon  X^{n+k} \to X^{n+k}$ and $ B \colon X^{k+m} \to X^{k+m} $ be two maps which have the  form
$$
	A(\bar{x}, \bar{y}) =
	\left(
	f_1(\bar{x}, \bar{y}), \dotss , f_n(\bar{x}, \bar{y}), h_1(\bar{y}), \dotss , h_k(\bar{y})
	\right),~~~\bar{x} \in X^n,~~\bar{y} \in X^k,
$$

$$
	B(\bar{y}, \bar{z}) =
	\left(
	h_1(\bar{y}), \dotss , h_k(\bar{y}), g_1(\bar{y}, \bar{z}), \dotss , g_m(\bar{y}, \bar{z})
	\right),~~\bar{z} \in X^m,
$$
then a function $A \;\#_k B \colon X^{n+k+m} \to X^{n+k+m}$, defined by
$$
	A \;\#_k B(\bar{x},\bar{y},\bar{z}) :=
	\left(
	f_1(\bar{x}, \bar{y}), \dotss , f_n(\bar{x}, \bar{y}),
	h_1(\bar{y}), \dotss , h_k(\bar{y}),
	g_1(\bar{y}, \bar{z}), \dotss , g_m(\bar{y}, \bar{z})		\right)
$$
is called a $k$-amalgam of $A$ and $B$.
\end{definition}

\begin{theorem}\label{const}
Let $n > 0$, $m>0$, $k \geq 0$, be integers, $A \colon X^{n+k} \to X^{n+k}$ and $ B \colon X^{k+m} \to X^{k+m} $ be solutions of the {\SE[(n+k)]} and the {\SE[(m+k)]}, correspondingly,
$$
	A(\bar{x}, \bar{y}) =
	\left(
	f_1(\bar{x}, \bar{y}), \dotss , f_n(\bar{x}, \bar{y}), h_1(\bar{y}), \dotss , h_k(\bar{y})
	\right),~~~\bar{x} \in X^n,~~\bar{y} \in X^k,
$$
$$
	B(\bar{y}, \bar{z}) =
	\left(
	h_1(\bar{y}), \dotss , h_k(\bar{y}), g_1(\bar{y}, \bar{z}), \dotss , g_m(\bar{y}, \bar{z})
	\right),~~\bar{z} \in X^m,
$$
Then the $k$-amalgam $A \;\#_k B : X^{n+k+m} \to X^{n+k+m}$ of $A$ and $B$,
$$
	A \;\#_k B(\bar{x},\bar{y},\bar{z}) =
	\left(
	f_1(\bar{x}, \bar{y}), \dotss , f_n(\bar{x}, \bar{y}),
	h_1(\bar{y}), \dotss , h_k(\bar{y}),
	g_1(\bar{y}, \bar{z}), \dotss , g_m(\bar{y}, \bar{z})
	\right)
$$ is a solution of the \SE[(n+k+m)] if and only if  for any pair of indices $1 \leqslant i \leqslant n$ and $1 \leqslant j \leqslant m$ and for any collection of elements $a_{\a,\b} \in X$, $\alpha \in \{1, \dots , n+k\}$, $\beta \in \{1, \dots , k+m\}$ the following equality holds:
$$
\begin{gathered}
		f_i\left(\begin{matrix}
g_j(	& a_{1,1},		& a_{1,2},		& \cdots	& a_{1, k-1},	& a_{1, k},		& \cdots	& a_{1, k+m}	& ),\\
g_j(	& a_{2,1},		& a_{2,2},		& \cdots	& a_{2, k-1},	& a_{2, k},		& \cdots	& a_{2, k+m}	& ),\\
	& \vdots		& \vdots		& \ddots	& \vdots		& \vdots 		& \ddots 	& \vdots		& \\
g_j(	& a_{n+1,1},	& a_{n+1,2},	& \cdots	& a_{n+1, k-1},	& a_{n+1, k},	& \cdots	& a_{n+1, k+m}	& ),\\
g_j(	& b^{n+1,1},	& a_{n+2,2},	& \cdots	& a_{n+2, k-1},	& a_{n+2, k},	& \cdots	& a_{n+2, k+m}	& ),\\
g_j(	& b^{n+1,2},	& b^{n+2,2},	& \cdots	& a_{n+3, k-1},	& a_{n+3, k},	& \cdots	& a_{n+3, k+m}	& ),\\
	& \vdots		& \vdots		& \ddots	& \vdots		& \vdots 		& \ddots 	& \vdots		& \\
g_j(	& b^{n+1,k-1},	& b^{n+2,k-1},	& \cdots	& b^{n+k-1,k-1}	& a_{n+k, k},	& \cdots	& a_{n+k, k+m}	& )
		\end{matrix}\right)
		\\
		\veq
		\\
		g_j\left(\begin{matrix}
f_i(	& a_{1,1},		& a_{2,1},		& \cdots	& a_{n+1, 1},	& b_{2, 2},		& b_{3, 2},		& \cdots	& b_{k, 2}		& ),\\
f_i(	& a_{1,2},		& a_{2,2},		& \cdots	& a_{n+1, 2},	& a_{n+2, 2},	& b_{3, 3},		& \cdots	& b_{k, 3}		& ),\\
	& \vdots		& \vdots		& \ddots	& \vdots		& \vdots 		& \vdots 		& \ddots 	& \vdots		& \\
f_i(	& a_{1,k-1},		& a_{2,k-1},		& \cdots	& a_{n+1, k-1},	& a_{n+2, k-1},	& a_{n+3, k-1},	& \cdots	& b_{k, k}		& ),\\
f_i(	& a_{1,k},		& a_{2,k},		& \cdots	& a_{n+1, k},	& a_{n+2, k},	& a_{n+3, k},	& \cdots	& a_{n+k, k}		& ),\\
	& \vdots		& \vdots		& \ddots	& \vdots		& \vdots 		& \vdots 		& \ddots 	& \vdots		& \\
f_i(	& a_{1,k+m},	& a_{2,k+m},	& \cdots	& b^{n+1,k+m}	& a_{n+2, k+m},	& a_{n+3, k+m},	& \cdots	& a_{n+k, k+m}	& )
		\end{matrix}\right),
	\end{gathered}
$$
where
$$
\begin{aligned}
	b^{i, j}	& := h_j(b^{n+1, i-n-1}, \dotss , b^{i-1, i-n-1}, a_{i, j}, \dotss , a_{i, k}),
	\\
	b_{i, j}	& := h_j(a_{n+1, i}, \dotss , a_{n+i, i}, b^{i+1, i}, \dotss , b^{k, i}).
\end{aligned}
$$
\end{theorem}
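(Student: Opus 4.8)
The plan is to verify the $(n+k+m)$-SE for $R:=A\;\#_k B$ coordinate by coordinate, exploiting the block‑triangular shape of $R$ together with the combinatorics encoded in the multi‑index matrix $MI_{n+k+m}$. Recall that the $(n+k+m)$-SE is an equality of two self‑maps of $X^{N}$, $N=\binom{n+k+m+1}{2}$, whose coordinates are indexed by the edges $\{p,q\}$, $1\le p<q\le n+k+m+1$, of the $(n+k+m)$-simplex, with the factor $R_{\bar p}$ acting on the $n+k+m$ coordinates labelled by the edges incident to the vertex $p$, ordered as prescribed by $MI_{n+k+m}$. Under this identification the coordinate of $R_{\bar p}$ in position $\ell$ lies in the $\bar x$-block, the $\bar y$-block or the $\bar z$-block according to whether $\ell$ lies in $\{1,\dots,n\}$, $\{n+1,\dots,n+k\}$ or $\{n+k+1,\dots,n+k+m\}$, and the value there is updated by the corresponding $f_i$, $h_l$ or $g_l$. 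The feature of $R$ I will use throughout is that the first $n$ output slots read only the first $n+k$ input slots, the last $m$ output slots read only the last $k+m$ input slots, and the middle $k$ output slots read only the middle $k$ input slots.

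\textbf{The reductions.} I would first observe that the coordinates indexed by edges lying inside the vertex set $\{1,\dots,n+k+1\}$ are left unchanged by $R_{\bar p}$ for $p\ge n+k+2$, and that on these coordinates every $R_{\bar p}$ with $p\le n+k+1$ restricts exactly to $A$; hence, after the obvious relabelling identifying these coordinates with the edges of the $(n+k)$-simplex on $\{1,\dots,n+k+1\}$, the restriction of the $(n+k+m)$-SE to them is precisely the $(n+k)$-SE for $A$, which holds by hypothesis. Symmetrically, the coordinates indexed by edges inside $\{n+1,\dots,n+k+m+1\}$ give the $(k+m)$-SE for $B$, which also holds; the two families overlap exactly in the edges inside $\{n+1,\dots,n+k+1\}$, where both reductions produce the same $k$-SE for the map $(h_1,\dots,h_k)$, so there is no clash. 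The only coordinates not covered are the $nm$ coordinates indexed by the edges $\{p,q\}$ with $1\le p\le n$ and $n+k+2\le q\le n+k+m+1$; writing $i=p$ and $q=n+k+1+j$ makes these correspond to the pairs $(i,j)$, $1\le i\le n$, $1\le j\le m$, of the statement.

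\textbf{The new content.} For such an edge I would compute both sides of the $(n+k+m)$-SE. In the composition $R_{\bar 1}\cdots R_{\overline{n+k+m+1}}$ the $\{p,q\}$-coordinate is last changed by $R_{\bar p}$ in a $\bar z$-slot, so it equals $g_j$ applied to the $k+m$ values standing in the last $k+m$ slots of $R_{\bar p}$ at that moment; each of these was itself last changed, in an $\bar x$-slot, by some later factor, hence equals $f_i$ of the first $n+k$ slots of that factor. In the composition $R_{\overline{n+k+m+1}}\cdots R_{\bar 1}$ the same coordinate is, by the mirror computation, $f_i$ applied to an $(n+k)$-tuple whose entries have the form $g_j(\cdots)$. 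The free data in both expressions are the initial values of a collection of edges, which one names $a_{\alpha,\beta}$ with $\alpha\in\{1,\dots,n+k\}$, $\beta\in\{1,\dots,k+m\}$, whereas the $\bar y$-slots, being fed forward through the maps $h_l$, assemble into exactly the nested terms $b^{i,j}$ and $b_{i,j}$ of the statement; matching the two expressions against the staircase pattern of $MI_{n+k+m}$ turns the $\{p,q\}$-coordinate equation into precisely the displayed identity for $(i,j)$. Conversely, since the initial edge‑values entering this equation can be prescribed independently in $X$, each displayed identity is forced as soon as $A\;\#_k B$ solves the $(n+k+m)$-SE. Combined with the three reductions, this gives the equivalence.

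\textbf{Main obstacle.} I expect the real work to be the purely combinatorial bookkeeping in the last step: following, through the partial compositions on each side, which initial edge‑value ends up in which tensor slot, and checking that the $\bar y$-slots assemble into $b^{i,j}$ and $b_{i,j}$ with exactly the index ranges and the staircase shape claimed. No idea beyond the block‑triangular structure of $R$ and the already‑assumed simplex equations for $A$ and $B$ should be needed.
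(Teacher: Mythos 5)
Your proposal follows essentially the same route as the paper's own (very terse) proof: split the multi-index matrix $MI_{n+k+m}$ into the top-left $(n+k)$-block and bottom-right $(k+m)$-block, where the block-triangular form of $A\;\#_k B$ reduces the equation to the assumed {\SE[(n+k)]} and {\SE[(k+m)]} (agreeing on the overlap via the $h_l$), and then read off the remaining $nm$ mixed-edge coordinates as the displayed $f_i$/$g_j$ compatibility condition. Your write-up is in fact more explicit than the paper's about why the reductions work and where the $b^{i,j}$, $b_{i,j}$ terms come from, so it is correct and matches the intended argument.
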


\begin{proof}
This statement becomes obvious if one looks at the matrix $MI_{n+k+m}$ for the {\SE[(n+k+m)]}. Its $(n+k)\times(n+k+1)$ top-left and $(k+m)\times(k+m+1)$ bottom-right submatrices are independent of the rest of the matrix and represent the {\SE[(n+k)]} and the {\SE[(k+m)]} respectively.

Due to functions $h_i$ overlapping parts of those submatrices are independent of the rest of the matrix and thus the {\SE[(n+k+m)]} recedes to {\SE[(n+k)]} and {\SE[(k+m)]} for indices in corresponding submatrices. Remaining bottom-left and top-right submatrices consist of $nm$ different indices and are transposes of each other.

Explicit form of the {\SE[(n+k+m)]} for indices in those matrices gives us the condition above.
\end{proof}

To illustrate the construction of $k$-amalgam we  consider some examples.

\begin{example}
 Let $X$ be a set and $A, B : X^2 \to X^2$ are solutions of the YBE, which have the forms
$$
A(x, y) = (f_1(x, y), f_2(x, y)), ~~~B(z, t) = (g_1(z, t), g_2(z, t)),~~~x, y, z, t \in X.
$$
By Lemma \ref{YBEF} the maps $f_i$  satisfy the identities
\begin{eqnarray*}
f_1\left( f_1(x, y), z  \right) &=& f_1\left( f_1(x, f_2(y, z)),   f_1(y,  z) \right),\\
f_1\left( f_2(x,y),  f_2(f_1(x,y), z) \right) &=& f_2\left( f_1(x, f_2(y, z)),   f_1(y,  z) \right),\\
f_2\left( f_2(x, y),   f_2(f_1(x, y), z) \right) &=& f_2\left( x,  f_2(y, z)  \right),
\end{eqnarray*}
and the maps $g_i$  satisfy the identities
\begin{eqnarray*}
g_1\left( g_1(x, y), z  \right) &=& g_1\left( g_1(x, g_2(y, z)),   g_1(y,  z) \right),\\
g_1\left( g_2(x,y),  g_2(g_1(x,y), z) \right) &=& g_2\left( g_1(x, g_2(y, z)),   g_1(y,  z) \right),\\
g_2\left( g_2(x, y),  g_2(g_1(x, y), z) \right) &=& g_2\left( x,  g_2(y, z)  \right),
\end{eqnarray*}
for all $x, y, z, t \in X$. Set
$$
R(x, y, z, t) = (f_1(x, y), f_2(x, y), g_1(z, t), g_2(z, t)),~~~x, y, z, t \in X.
$$
Let us find the conditions under which this map $R$ satisfies the {\SE[4]},
$$
	R_{1234} R_{1567} R_{2589} R_{368,10} R_{479,10} = R_{479,10} R_{368,10}  R_{2589} R_{1567} R_{1234}.
$$
The straightforward calculations give
$$
R_{479,10} R_{368,10}  R_{2589} R_{1567} R_{1234} (x, y, z, t, p, q, r, s, u, v) = 
$$
$$
=( f_1(f_1(x,y),p),~~f_1( f_2(x,y), f_2(f_1(x,y),p)),~~f_1(g_1(z,t), g_1(q,r)),~~f_1(g_2(z,t), g_2(q, r)), 
$$
$$
f_2( f_2(x, y), f_2(f_1(x,y),p)),~~f_2(g_1(z,t), g_1(q,r)),~~f_2(g_2(z,t), g_2(q,r)),~~g_1(g_1(s,u),v),
$$
$$
g_1(g_2(s, u), g_2(g_1(s,u),v)),~~g_2(g_2(s, u), g_2(g_1(s,u),v))),
$$
and
$$
R_{1234} R_{1567} R_{2589} R_{368,10} R_{479,10} (x, y, z, t, p, q, r, s, u, v) = 
$$
$$
=( f_1(f_1(x, f_2(y,p)), f_1(y,p)),~~f_2( f_1(x, f_2(y, p)), f_1(y,p)), ~~g_1(f_1(z,q), f_1(t,r)),~~g_2(f_1(z,q), f_1(t, r)), 
$$
$$
f_2(x,  f_2(y,p)),~~g_1(f_2(z,q), f_2(t,r)),~~g_2(f_2(z,q), f_2(t,r)),~~g_1(g_1(s,g_2(u,v)), g_1(u,v)),
$$
$$
g_2(g_1(s, g_2(u,v)), g_1(u,v)),~~g_2(s,  g_2(u,v))).
$$
Equating the left-hand side with the right-hand side gives a system of 10 equalities. The first, second,  and fifth equalities mean that $A$ is a solution of the YBE. The last three  equalities mean that $B$ is a solution of the YBE. Hence,$R$ is a solution of the $4$-SE if and only if
\begin{eqnarray*}
f_1(g_1(z,t), g_1(q,r)) &=& g_1(f_1(z,q), f_1(t,r)),\\
f_1(g_2(z,t), g_2(q, r)) &=& g_2(f_1(z,q), f_1(t, r)),\\
f_2(g_1(z,t), g_1(q,r)) &=& g_2(f_2(z,q), f_2(t,r)),\\
f_2(g_2(z,t), g_2(q,r)) &=& g_2(f_2(z,q), f_2(t,r)).
\end{eqnarray*}
for all $z, t, q, r \in X$. We see that these conditions are conditions on $f_i$ and $g_j$ from Theorem \ref{const}.

\end{example}

\begin{example}
 Let $X$ be a set and $A, B : X^2 \to X^2$ are solutions of the YBE, which have the forms
$$
A(x, y) = (f(x, y), h(y)), ~~~B(y, z) = (h(y), g(y,z)),~~~x, y, z \in X.
$$
By Lemma \ref{YBEF} the maps $f, g$ and $h$ satisfy the identities
$$
f\left( f(x, y), z \right) = f\left( f(x, h(z)), f(y, z) \right),~~~f(h(y), h(z)) = h(f(y, z)),
$$
$$
g\left(y, g(z, t) \right) = g\left(g(y, z), g(h(y), t) \right),~~~g(h(y), h(z)) = h(g(y, z)),
$$
for all $x, y, z, t \in X$. Set
$$
R(x, y, z) = (f(x, y), h(y), g(y, z)) 
$$
is a map from $X^3$ to $X^3$. Then
$$
	R_{123} R_{145} R_{246} R_{356}(x, y, z, t, p, q) = 
$$
$$
= \left( f(f(x, h(t)), f(y, t)),  h(f(y, t)),  g(f(y,t), f(z, p)), h^2(t), g(h(t), h(p)), g(t, g(p, q))\right),
$$
and
$$
R_{356} R_{246} R_{145} R_{123}(x, y, z, t, p, q) = 
$$
$$
= \left( f(f(x, y), t), f(h(y), h(t)),  f(g(y, z), g(t, p)), h^2(t), h (g(t, p)), g(g(t, p), g(h(t), q))\right).
$$
Hence the map $R$ is a solution of the TE if and only if for all $x, y, z, t, p, q \in X$ the following equalities are true
\begin{eqnarray*}
 f\left( f(x, h(t)),   f(y, t)\right)  &=&   f\left( f(x, y), t  \right), \\
 h (f(y, t)) &=& f(h(y), h(t)), \\
g\left( f(y, t),  f(z, p) \right) &=& f \left(  g(y, z),   g(t,  p) \right),\\
g(h(t), h(p)) &=& h (g(t, p)), \\
g\left( t,   g(p, q) \right) &=& g\left( g(t, p),   g(h(t), q)  \right).
\end{eqnarray*}

Since $A$ and $B$ are solution of the YBE the first, second, forth and fifth equalities are satisfied. Hence, $R$ is a solution of the TE if and only if for all $y, z, t, p \in X$ holds
$$
g\left( f(y, t),  f(z, p) \right) = f \left(  g(y, z),   g(t,  p) \right).
$$

On the other side, in this situation $n = m = k = 1$ and by Theorem \ref{const} the map $A \;\#_1 B : X^3 \to X^3$,
$$
A \;\#_1 B (x, y, z) = (f(x, y), h(y), g(y, z)) 
$$
gives a solution for the TE if and only if the following equality true
$$
g\left( f(a_{11}, a_{21}),  f(a_{12}, a_{22}) \right) = f \left(  g(a_{11}, a_{12}),  g(a_{21}, a_{22}) \right).
$$
\end{example}

\begin{remark}
Note that for $k \in \{0, 1\}$ the conditions of Theorem \ref{const} does not contain elements $b^{i, j}$ and $b_{i, j}$. Moreover for $k=0$ $ A \;\#_0 B = A \times B $. It is easy to see that in this case if the function $A \times B$ is a solution of the {\SE} then the function $B \times A$ is a solution as well.
\end{remark}

\medskip

\subsection{Simple solutions} In this subsection we define some class of  solutions to $n$-simplex equations.

\begin{definition}
A solution $T : X^n \to X^n$ of the  {\SE} is said to be simple if
$$
	T(x_1, \dotss , x_n) = (x_{s(1)}, \dotss , x_{s(n)}),
	\text{where } s: \{1, \dotss , n\} \to \{1, \dotss , n\} \text{ is a map (not necessary injective)}.
$$  
\end{definition}

\begin{example} \label{simple}
Let $X$ be a set.

1) The identity map $\id : X \to X$ is a simple   solution to the $1$-SE.

2) The maps $P, Pr^2_1, Pr^2_2 : X^2 \to X^2$ that are defined by the rules
	$$
	\begin{aligned}
		P : & & (x, y) & \mapsto (y, x),
		\\
		Pr^2_1 : & & (x, y) & \mapsto (x, x),
		\\
		Pr^2_2 : & & (x, y) & \mapsto (y, y),
	\end{aligned}
	$$
for all $x, y, z \in X$, are simple solutions to the $2$-SE.

3)	The map $Pr^3_2 : X^3 \to X^3$,
		$Pr^3_2 :  (x, y, z)  \mapsto (y, y, y)$,$x, y, z \in X$, is a simple solutions to the $3$-SE.
\end{example}

It is evident that the direct product of simple solutions is a simple solution.

From Theorem \ref{const} follows

\begin{proposition}
	Let $ R : X^n \to X^n $ be a solution of the {\SE}
	and $A$ be any map from the list $\{ \id_X, P, Pr^2_1, Pr^2_2, Pr^3_2 \} $.
	Then $ R \times A $ and $ A \times R$ are solutions of the {\SE[(n+m)]}.
\end{proposition}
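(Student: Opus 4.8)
The plan is to deduce the proposition from Theorem~\ref{const} applied with $k=0$. By the Remark following Theorem~\ref{const} one has $R\;\#_0 A = R\times A$, and for $k=0$ the condition of Theorem~\ref{const} no longer mentions the auxiliary elements $b^{i,j}$, $b_{i,j}$: writing $R=(f_1,\dots,f_n)\colon X^n\to X^n$ and $A=(g_1,\dots,g_m)\colon X^m\to X^m$ in components, it asserts that $R\times A$ is a solution of the {\SE[(n+m)]} if and only if, for all $1\le i\le n$, $1\le j\le m$ and all $a_{\alpha,\beta}\in X$ ($1\le\alpha\le n$, $1\le\beta\le m$),
$$
f_i\bigl(g_j(a_{1,1},\dots,a_{1,m}),\dots,g_j(a_{n,1},\dots,a_{n,m})\bigr)
=
g_j\bigl(f_i(a_{1,1},\dots,a_{n,1}),\dots,f_i(a_{1,m},\dots,a_{n,m})\bigr).
$$
(This is the shape the condition takes for $k=0$, as one sees from the {\SE[4]} example above with $n=m=2$.)

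The next step is to use the hypothesis on $A$. Each map in the list $\{\id_X, P, Pr^2_1, Pr^2_2, Pr^3_2\}$ is an \emph{simple} solution of the corresponding {\SE[m]} with $m\in\{1,2,3\}$ by Example~\ref{simple}; thus there is a map $s\colon\{1,\dots,m\}\to\{1,\dots,m\}$ with $A(y_1,\dots,y_m)=(y_{s(1)},\dots,y_{s(m)})$, i.e.\ every component $g_j$ is the coordinate projection $g_j(y_1,\dots,y_m)=y_{s(j)}$. Substituting this into the displayed identity, the left-hand side collapses to $f_i(a_{1,s(j)},\dots,a_{n,s(j)})$, and the right-hand side is the $s(j)$-th entry of the tuple $\bigl(f_i(a_{1,1},\dots,a_{n,1}),\dots,f_i(a_{1,m},\dots,a_{n,m})\bigr)$, which is again $f_i(a_{1,s(j)},\dots,a_{n,s(j)})$. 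Hence the condition of Theorem~\ref{const} holds identically, so $R\times A$ is a solution of the {\SE[(n+m)]}. The same computation works for any simple solution $A$ of the {\SE[m]}, not only the five listed maps.

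For $A\times R$ I would argue symmetrically, taking $A$ as the first factor and $R$ as the second in Theorem~\ref{const} with $k=0$; now the components of $A$ play the role of the $f_i$ and, being coordinate projections, satisfy the amalgam condition by the same argument. (Alternatively, one may invoke the last sentence of the Remark following Theorem~\ref{const}: if $A\times B$ is a solution, then so is $B\times A$.) I do not expect a genuine obstacle here: the only point requiring care is to unwind the rather intricate condition of Theorem~\ref{const} correctly in the degenerate case $k=0$ and to check the entangled commutation of a coordinate projection past the components of $R$. The instance $A=\id_X$, where $m=1$ and $s=\id$, is the most transparent and can be written out first as a sanity check.
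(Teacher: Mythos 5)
Your proposal is correct and takes exactly the route the paper intends: the paper's entire justification is the phrase ``From Theorem \ref{const} follows,'' and you have supplied the missing details, correctly unwinding the $k=0$ amalgam condition to the commutation identity $f_i(g_j(\cdot),\dots)=g_j(f_i(\cdot),\dots)$ and checking it trivially holds when one factor is a coordinate-projection (simple) solution. Your observation that the argument works for an arbitrary simple solution $A$, not just the five listed maps, is a valid mild strengthening.
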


It seems that any simple solution can be  construct from  the list $\{ \id_X, P, Pr^2_1, Pr^2_2, Pr^3_2 \} $ by a sequence of direct products.

\begin{definition} A solution $R$ is called indecomposable if there are no solutions $A$ and $B$, such that $R = A \times B$.
\end{definition}

\begin{question}
1)	Is there a indecomposable simple solution of the {\SE} for some $n$ different from five solutions $\{ \id_X, P, Pr^2_1, Pr^2_2, Pr^3_2 \} $?

2) We know that the permutation $P_{12}$ is a solution of the YBE. For which  $n > 2$ there are non-identity permutations without fixed points that gives solutions of the \SE? 

Of course, using Proposition \ref{invsymm} it is not difficult to find some transpositions which are solutions of \SE. 
\end{question}

\subsection{Linear and affine solutions} 

 Linear and affine solutions of the YBE and the TE are studied in \cite{HLinear, Bukh, KNPT, BIKP}.

Using Theorem  \ref{const} we are proving

\begin{proposition} \label{sumlin}
Let $A$ and $B$ be linear solutions of the \SE  \,  and the \SE[m], respectively. Then their $0$-amalgam $A\times B$ is a linear solution of the \SE[(n+m)]. If $A$ and $B$ such that  the $1$-amalgam   $A \#_1 B$ is defined, then it is a solution of the \SE[(n+m-1)].
\end{proposition}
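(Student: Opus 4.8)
The plan is to reduce everything to Theorem~\ref{const}, which already characterizes when a $k$-amalgam of solutions is again a solution; the task is just to verify that the characterizing conditions are automatic when $A$ and $B$ are \emph{linear}. First I would recall that a linear solution $A\colon X^{n}\to X^{n}$ of the \SE\ means $X=V$ is a vector space and each component $f_i(\bar x,\bar y)$ is a linear form in the coordinates $x_1,\dots,x_n,y_1,\dots,y_k$; similarly the components $g_j$, $h_l$ of $B$ are linear forms. For the case $k=0$ there is nothing to prove beyond observing that $A\times B$ is linear and, by the remark following Theorem~\ref{const} (or directly), it is the $0$-amalgam, hence a solution of the \SE[(n+m)] by part of that theorem applied with $k=0$: when $k=0$ the conditions of Theorem~\ref{const} involve no $b^{i,j}$, $b_{i,j}$ and reduce to $f_i(g_j(\dots),\dots)=g_j(f_i(\dots),\dots)$, which for linear maps acting on disjoint blocks of variables holds trivially since $f_i$ only touches the $X^n$-block and $g_j$ only touches the $X^m$-block and they commute componentwise. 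So $A\times B$ is a (linear) solution.

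For the $1$-amalgam, suppose $A\#_1 B$ is defined; this forces the shared $h$-block to have length $k=1$, say a single linear form $h$, and the compatibility that $A$ has the shape $(f_1,\dots,f_n,h(\bar y))$ and $B$ has the shape $(h(\bar y),g_1,\dots,g_m)$ with the \emph{same} $h$ on the overlap. By the remark after Theorem~\ref{const}, for $k\in\{0,1\}$ the conditions contain no $b^{i,j}$ or $b_{i,j}$ terms, so the criterion for $A\#_1 B$ to solve the \SE[(n+m)] collapses (this should be $n+m-1$ in the statement, matching the arithmetic $n+k+m-?$; I'd double-check the index bookkeeping and state it as \SE[(n+m-1)] if that is indeed the convention) to the family of identities
$$
f_i\bigl(g_j(a_{1,1},\dots,a_{1,k+m}),\dots\bigr)=g_j\bigl(f_i(a_{1,1},\dots),\dots,f_i(\dots)\bigr)
$$
for all $1\le i\le n$, $1\le j\le m$. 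The key observation is that both sides are bilinear-type expressions built by composing the \emph{linear} forms $f_i$ and $g_j$, and because $f_i$ is a linear form its value on a tuple of linear forms $g_j(\cdot)$ equals the linear form $g_j$ applied coordinatewise to the $f_i$-images — i.e. a linear form "commutes past" another linear form exactly because linear forms distribute over linear combinations. Concretely, writing $f_i(u_1,\dots)=\sum_p \lambda_p u_p$ and $g_j(v_1,\dots)=\sum_q \mu_q v_q$, both sides equal $\sum_{p,q}\lambda_p\mu_q a_{p,q}$, so they agree identically.

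The main obstacle I anticipate is not the algebra but the indexing: one must check carefully that in the $k=1$ case the arguments fed to $f_i$ on the left-hand side and to $g_j$ on the right-hand side are genuinely the \emph{same} doubly-indexed array $a_{p,q}$ (up to the $h$-entries on the overlap row/column), so that the double sum matches. Here the role of the common map $h$ is exactly to guarantee that the overlapping entries produced by $A$'s last component and $B$'s first component coincide, which is what makes the $1$-amalgam well-defined in the first place; once that is in hand, the linear computation above closes the proof. I would therefore structure the write-up as: (i) dispose of $k=0$ in one line; (ii) unpack Theorem~\ref{const} for $k=1$ using the remark to kill the $b$-terms; (iii) expand $f_i$ and $g_j$ as linear forms and observe both sides equal $\sum_{p,q}\lambda_p\mu_q a_{p,q}$; (iv) conclude, noting linearity of $A\#_1 B$ is clear since a composition/concatenation of linear forms is linear.
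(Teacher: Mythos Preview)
Your approach is the same as the paper's: reduce to Theorem~\ref{const} and then verify its condition automatically from linearity. The paper phrases the verification in matrix form---writing $f_i(\bar x)=[f_i]^T\bar x$ and $g_j(\bar z)=[g_j]^T\bar z$, the condition for $k\in\{0,1\}$ becomes $[f_i]^T M [g_j]=[g_j]^T M^T [f_i]$ with $M=(a_{p,q})$, which holds because a scalar equals its own transpose---and this is exactly your double-sum observation $\sum_{p,q}\lambda_p\mu_q a_{p,q}$.

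One point to fix: your justification for $k=0$ (``$f_i$ only touches the $X^n$-block and $g_j$ only touches the $X^m$-block and they commute componentwise'') is not the right reason. In the condition of Theorem~\ref{const}, $f_i$ is applied to a tuple of $g_j$-values of a shared array $(a_{p,q})$, not to disjoint variables, so nothing is disjoint there. The correct verification for $k=0$ is the \emph{same} double-sum identity you gave for $k=1$; the paper treats both cases uniformly in one line, and you should too rather than splitting them.
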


\begin{proof} Let $f \colon X^n \to X$ be a linear map. Then it can be presented in terms of matrix multiplication
$$
	f(\bar{x}) = a_1 x_1 + a_2 x_2 + \ldots + a_n x_n = \bar{x}^T [f] = [f]^T\bar{x},
$$
	where $[f]^T = (a_1, a_2, \ldots,  a_n)$ is a vector of coefficients, $\bar{x}^T = (x_1, x_2, \ldots ,  x_n)$ is a vector of variables and $\cdot^T$ is the transposition. It is easy to see that the condition of Theorem  \ref{const} for $k \in \{0, 1\}$ can be rewritten as
	$$
		[f]^T M [g] = [g]^T M^T [f],
	$$
	where $ f $ is a component of the map $A$, $g$ is a component of the map $B$ and $ M $ ia an $n \times m$ matrix and so the condition holds.

	This means that for any linear solutions $A$ and $B$ of some \SE s the map $A \;\#_k B$ for $k \in \{0, 1\}$ will  be a solution of some \SE. 
\end{proof}

\begin{question} 
1) What can we say on $k$-amalgam of affine solutions? Is it true that for $k=0$ and/or $k=1$ we get a solution? 

2) Under which conditions $k$-amalgam of rational solutions gives a rational solution?
\end{question}

\subsection{Constructing rational solutions from the linear solutions}

Take a linear solution $(\mathbb{R}, R)$ of the  YBE:
$$ R(x, y) = (\alpha_1 x, (1-\alpha_1 \alpha_2)x + \alpha_2 y), \text{ where } \alpha_1, \alpha_2 \in \mathbb{R}. $$ 
By proposition~\ref{sumlin} we have a solution $R : \mathbb{R}^n \to \mathbb{R}^n $ of an \SE{}, defined by
	$$ 
	x_i \mapsto \begin{cases}
	\alpha_i x_i & \text{ if } i \text{ is odd,}\\
	(1 - \alpha_{i-1}\alpha_i)x_{i-1} + \alpha_i x_i + (1 - \alpha_i \alpha_{i+1})x_{i+1} & \text{ if } i \text{ is even,}\\
(1 - \alpha_{i-1}\alpha_i)x_{n-1} + \alpha_i x_n & \text{ if } i=n \text{ is even,}\\
	\end{cases} 
	$$
	where $\alpha_i \in \mathbb R$. 

	Take any rational function $f: \  \mathbb{R} \to \mathbb{R}$ having a rational inverse, for example a linear fractional transformation
		$$ f(x) = \dfrac{ax + b}{cx + d}, \quad f^{-1}(x) = \dfrac{dx - b}{a - cx}.$$
	We can conjugate our solution by $f$. This yields a rational solution $(f^{-1})^n R f^{n}$. In the case of a linear rational transformation, this conjugation takes the form
		$$ x_i \mapsto \begin{cases} \dfrac{(\alpha_i ad - bc)x_i + (\alpha_i - 1)bd}{ca(1-\alpha_i) x_i + (ad - \alpha_i bc)}& \text{ if } i \text{ is odd,}\\
		& \\
		\dfrac{d \left(\beta_i \tfrac{a x_{i-1} + b}{c x_{i-1} + d}+ \alpha_{i} \tfrac{a x_{i} + b}{c x_{i} + d} + \beta_{i+1}\frac{a x_{i+1} + b}{c x_{i+1} + d}\right) - b}{a - c \left(\beta_i\tfrac{a x_{i-1} + b}{c x_{i-1} + d}+ \alpha_{i} \tfrac{a x_{i} + b}{c x_{i} + d} + \beta_{i+1} \tfrac{a x_{i+1} + b}{c x_{i+1} + d}\right)} & \text{ if } i \text{ is even, }\\
& \\
		\dfrac{d \left(\beta_n \tfrac{a x_{n-1} + b}{c x_{n-1} + d}+ \alpha_{n} \tfrac{a x_{n} + b}{c x_{n} + d} \right) - b}{a - c \left( \beta_n\tfrac{a x_{n-1} + b}{c x_{n-1} + d}+ \alpha_{n} \tfrac{a x_{n} + b}{c x_{n} + d} \right) } & \text{ if } i=n \text{ is even, }		
		\end{cases} $$
		where $\beta_i = 1 - \alpha_{i-1}\alpha_i$ for $i \in \{2,3, \ldots , n\}$.
	This construction provides lots of examples of new rational solutions of arbitrary {\SE} and one can wonder whether or not some solution is a conjugate to a linear one.
\begin{example} If we put
	$$ \alpha_1 = \alpha_3 = 1, \, \alpha_2 = \alpha_4 = 0 $$
	then $$\beta_2 = \beta_3 = \beta_4 = 1.$$
Also let $$a = c = 1, b = 0, d = -1,$$ then we get a rational solution of the 4-SE
$$(x_1, x_2, x_3, x_4) \mapsto \left(x_1, \dfrac{x_1 + x_3 - 2 x_1 x_3}{1 - x_1 x_3}, x_3, x_3\right).$$
\end{example}
\begin{example} Let $R$ be a linear solution of the {\SE[4]} defined by
	$$  R (x_1, x_2, x_3, x_4) = (x_2 - x_4, x_1 + x_3, x_3, x_4).$$
By taking $f$ equal to $ \dfrac{x}{x-1}$ we get a rational solution of the {\SE[4]}
	$$ (f^{-1})^n R f^n (x_1, x_2, x_3, x_4) = \left(\dfrac{x_2 - x_4}{1 + x_2 x_4}, \dfrac{x_1 + x_3 - 2x_1x_3}{1 - x_1 x_3}, x_3, x_4\right). $$

\end{example}

	\question{Can we obtain the electric solution by conjugation of a linear one?}


\subsection{From solutions of the {\SE} to solutions of the {\SE[(n-1)]}} We defined a $k$-amalgam of solutions, which gives a possibility to construct solutions of higher dimensions. Now we are going to the  opposite direction: If $ (X, R) $ is a solution of the {\SE} is it possible to construct a solution of the {\SE[(n-1)]}?

\begin{proposition}
	Let $ R : X^n \to X^n $, $n \geq 3$, be a solution of the {\SE},
	and there exists $x_0 \in X$ such that $ R(x_0, \dotss , x_0) = (x_0, \dotss , x_0)$.
	Then
	$$
		R^r(x_1, \dotss , x_{n-1}) := R(x_0, x_1, \dotss , x_{n-1})
	$$
	and
	$$
		R^l(x_1, \dotss , x_{n-1}) := R(x_1, \dotss , x_{n-1}, x_0)
	$$
	are solutions of the {\SE[(n-1)]}.
\end{proposition}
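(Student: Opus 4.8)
The plan is to obtain the $(n-1)$-SE for $R^r$ by specialising the $n$-SE for $R$ at configurations in which an entire facet of coordinates has been frozen at $x_0$, and then to deduce the $R^l$ statement from the $R^r$ one. Recall that the $n$-SE is an identity of maps $X^{N}\to X^{N}$, $N=n(n+1)/2$, whose coordinates are indexed by the edges of the $n$-simplex, with $R_{\bar k}$ acting on the $n$ edges incident to the vertex $v_k$ and identically on the rest. For $R^r$ I would single out the vertex $v_1$. From the recursion for $MI_n$, row $\bar1$ equals $(1,2,\dots,n)$ and, for $j\ge2$, row $j$ of $MI_n$ is row $j-1$ of $MI_{n-1}$ with $n$ added to every entry and $j-1$ prepended. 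Hence the coordinates $1,\dots,n$ are exactly the edges at $v_1$: coordinate $i$ occurs at position $i$ of $R_{\bar1}$ and at position $1$ of $R_{\overline{i+1}}$, and nowhere else; moreover the map $\ell\mapsto\ell-n$ identifies the remaining coordinates $n+1,\dots,N$ with the edges of the $(n-1)$-simplex on $v_2,\dots,v_{n+1}$ and carries row $j$ of $MI_n$ onto row $j-1$ of $MI_{n-1}$.

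I would then substitute $x_1=\dots=x_n=x_0$ into the $n$-SE and trace both sides. On the right-hand side $R_{\overline{n+1}}\cdots R_{\bar2}R_{\bar1}$ the operator $R_{\bar1}$ is applied first; all its inputs are $x_0$, so the hypothesis $R(x_0,\dots,x_0)=(x_0,\dots,x_0)$ makes it act as the identity. Then, for each $i$ with $1\le i\le n$, the operators $R_{\bar2},\dots,R_{\bar i}$ applied before $R_{\overline{i+1}}$ have rows beginning with $j-1<i$ and otherwise consisting of entries $\ge n+1$, so none of them meets coordinate $i$; thus $R_{\overline{i+1}}$ still receives $x_0$ in its first slot and acts on the remaining coordinates exactly as $R^r$, its first output being written into the henceforth inert coordinate $i$. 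After the relabeling this reads off as the right-hand side of the $(n-1)$-SE for $R^r$. On the left-hand side $R_{\bar1}R_{\bar2}\cdots R_{\overline{n+1}}$ the same first-slot check goes through — now the operators applied before $R_{\overline{i+1}}$ are $R_{\overline{i+2}},\dots,R_{\overline{n+1}}$, whose rows begin with $j-1>i$ and otherwise consist of entries $\ge n+1$ — while $R_{\bar1}$, applied last, only rewrites the already inert coordinates $1,\dots,n$ and is invisible on the sub-simplex; the other operators again act as $R^r$, producing the left-hand side of the $(n-1)$-SE. Matching the two computations, the $n$-SE for $R$ evaluated at these frozen configurations is precisely the $(n-1)$-SE for $R^r$; since the former holds pointwise, so does the latter.

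For $R^l$ I would not repeat the argument but apply the case just proved to the solution $\widetilde R=PRP$ of Proposition~\ref{invsymm}, where $P$ reverses the $n$ coordinates. One has $\widetilde R(x_0,\dots,x_0)=(x_0,\dots,x_0)$, and a short computation gives $\widetilde R^{\,r}=\bar P\,R^l\,\bar P$, where $\bar P$ is the order-reversing permutation of $X^{n-1}$; since $\widetilde R^{\,r}$ solves the $(n-1)$-SE, so does $\bar P\,\widetilde R^{\,r}\,\bar P=R^l$, again by Proposition~\ref{invsymm} applied in dimension $n-1$.

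The only non-formal point, and the step I expect to require care, is the bookkeeping claim that the distinguished first slot of each $R_{\overline{i+1}}$ still equals $x_0$ at the moment that operator acts: one must read off from the explicit shape of $MI_n$ that the frozen coordinates are exactly the edge-set of $v_1$ (so that the fixed-point hypothesis is invoked exactly once, on $R_{\bar1}$), and that in the prescribed order of composition no earlier operator has yet overwritten that slot. Once this is granted, the remainder is just the translation between the $n$- and $(n-1)$-simplex equations encoded in the recursion for the multi-index matrices.
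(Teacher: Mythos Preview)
Your argument is correct and, for $R^r$, follows the same idea as the paper: freeze the coordinates $1,\dots,n$ at $x_0$, observe that $R_{\bar1}$ then acts trivially, and read off the $(n-1)$-SE on the remaining $N-n$ coordinates after the shift by $-n$. The paper's write-up is terser and does not spell out the crucial bookkeeping point you isolate---that when $R_{\overline{i+1}}$ is applied (on either side) its first input slot, coordinate $i$, has not yet been overwritten and is still $x_0$---so your version actually fills a gap in the exposition.

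For $R^l$ the paper simply repeats the argument symmetrically, freezing the last $n$ coordinates and using that $R_{\overline{n+1}}$ acts trivially there. Your route via Proposition~\ref{invsymm} (passing to $\widetilde R=PRP$, checking $\widetilde R^{\,r}=\bar P\,R^l\,\bar P$, and conjugating back) is a genuine, if minor, alternative: it avoids rerunning the combinatorics at the cost of invoking the coordinate-reversal symmetry twice. Both approaches are equally short; yours has the aesthetic advantage of making the $R^r\!/R^l$ duality manifest.
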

\begin{proof}
	If $R$ is a solution of the {\SE},  then the maps ${R}_{\bar{i}}$, $1 \leq i \leq n+1$, act on $X^N$, where $N = n (n+1)/2$.
	Let us consider a set $X_{x_0} \subset X^N$ with first $n$ coordinates equal to $x_0$,
	$$ X_{x_0} = \{ (\underbrace{x_0, \ldots , x_0}_n, x_{n+1}, x_{n+2}, \ldots, x_N) \in X^N \}. $$
	We restrict the {\SE}  to the set $X_{x_0}$.
	In the right hand side  of this equation we apply at first ${R}_{\bar{1}}$  
	$$
	{R}_{\bar{1}} (\underbrace{x_0, \ldots , x_0}_n, x_{n+1}, x_{n+2}, \ldots, x_N) = R(x_0, \ldots, x_0) \times \id^{N-n}( x_{n+1}, x_{n+2}, \ldots, x_N) = 
$$
$$		
=(\underbrace{x_0, \ldots , x_0}_n, x_{n+1}, x_{n+2}, \ldots, x_N). 
		$$
Hence, ${R}_{\bar{1}} $	acts identically  on $ X_{x_0}$ and on this set $ X_{x_0}$ the {\SE} has the form
		$${R}_{\bar{1}} {R}_{\bar{2}} \ldots {R}_{\overline{n+1}} = {R}_{\overline{n+1}} \ldots {R}_{\bar{2}}.$$
Both sides  of the {\SE[(n-1)]} act on $X^{N'}$ where $N' = N - n.$
If we remove the first $n$ coordinates and   consider the equation above restricted to the last $N'$ coordinate, we get a solution of the {\SE[(n-1)]} after shifting all indices by $-n$. 

To prove that 	$R^l$ is a 	solution of   the {\SE[(n-1)]} it is need to take the subset of $X^N$ in which the last $n$ coordinates are equal to $x_0$ and remark that the map $R_{\overline{n+1}}$ acts identically on this subset.

\end{proof}

\bigskip

\section{Tropicalization} \label{Trop}

\subsection{Tropicalization of rational solutions} By tropicalization we mean a map from the set of rational function to the set of piecewise linear functions.
I.~Dynnikov \cite{D} used tropicalization to study representations of the braid group $B_n$ into the permutation group $Sym(\mathbb{Z}^{2n})$. In particular,  he finds a non-degenerate solution of the YBE over $\mathbb{Z}^{2}$.

In the present section we find a connection between rational solutions of {\SE}s and piecewise linear solutions of {\SE}s.
To formulate the main result, we introduce some definitions and notations.
Let $\mathbb{R}(x_1, x_2, \ldots, x_n)$ be  the field of rational fractions with coefficients in the field of real numbers $\mathbb{R}$. Any solution $(\mathbb{R}, R)$ of  the {\SE}, where
$$
R(x_1, x_2, \ldots, x_n) = (r_1, r_2, \ldots, r_n),~~~r_i \in \mathbb{R}(x_1, x_2, \ldots, x_n),
$$
 is said to be a {\it rational solution}.

Let us denote by $I_n$ a subset of non-zero fractions $r = f/g \in \mathbb{R}(x_1, x_2, \ldots, x_n)$ such that all coefficients of the numerator $f$ are equal to $1$ and the free term  is equal to zero, the denominator  $g$ is equal to $1$ or all its coefficients are equal to $1$ and the free term is equal to zero. 
A rational solution
$$
R(x_1, x_2, \ldots, x_n) = (r_1, r_2, \ldots, r_n),~~~r_i \in \mathbb{R}(x_1, x_2, \ldots, x_n),
$$
of the {\SE} is said to be an $I${\it--rational solution} if all components $r_i$ lie in $I_n$.

\begin{example} \label{ex1}
It is easy to see that the so called {\it electric solution} of the TE,
\begin{align*}
	R_E(x,y,z) &= \left(\frac{xy}{x+z+xyz},\ x+z+xyz,\ \frac{yz}{x+z+xyz}\right)\\
\end{align*}
that appears in the theory of electric circuits
as the famous Y-$\Delta$ transform of an electric network is an $I$--rational solution.

The map
\begin{align*}
	R_e(x,y,z) &= \left(\frac{xy}{x+z},\ x+z,\ \frac{yz}{x+z}\right)
\end{align*}
which is obtained from $R_E$ by removing terms of degree three is also an $I$--rational solution.
\end{example}

Let us  denote by $PL_n$ the set of piecewise linear functions  $\mathbb{R}^n \to \mathbb{R}$.

\begin{definition} The {\it tropicalization} is a map $(\cdot)^t : I_n \to PL_n$ that is defined on $r = f/g \in I_n$, where 
$$
f = \sum_{i_1 + \ldots + i_n > 0} \alpha_{i_1\ldots i_n} x_1^{i_1} \ldots x_n^{i_n},~~~
g = \sum_{j_1 + \ldots +j_n \geq 0} \beta_{j_1\ldots j_n} x_1^{j_1} \ldots x_n^{j_n}
$$
by the rule
$$
r^t = \begin{cases}
\underset{i_1+\ldots + i_n > 0}{\max} \{ i_1 x_1 + \ldots + i_n x_n \} - \underset{j_1+\ldots +j_n > 0}{\max} \{ j_1 x_1 + \ldots +j_n x_n \}, & \text{for $g \not= 1$;} \\
\underset{i_1+\ldots + i_n > 0}{\max} \{ i_1 x_1 + \ldots + i_n x_n \}, & \text{for $g= 1$.}
\end{cases}
$$

\end{definition}

Note that $r^t$ can be  obtained from $r$ using the following recursive procedure.

\begin{proposition} \label{rec}
	Let $r=r(x_1,\ldots,x_n)$, $r_1=r_1(x_1,\ldots,x_n)$, $r_2=r_2(x_1,\ldots,x_n)$ be  rational functions from $I_n$.
	Then
\begin{enumerate}
	\item
	if $r= x_i$, then $r^t = x_i$, for $i=1,\ldots,n$;
	\item
	$(r_1 + r_2)^t = \max\{r^t_1,\ r^t_2\}$;
	\item
	$(r_1 r_2)^t = r_1^t + r_2^t$;
	\item
	$\left(\dfrac{r_1}{r_2}\right)^t = r_1^t - r_2^t$.
\end{enumerate}
In particular,	 tropicalization is correctly defined i.e.	if  $r_1$ and $r_2$ are two equal rational functions,  then their tropicalizations are also equal functions.
\end{proposition}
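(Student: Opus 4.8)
The plan is to enlarge the domain of the tropicalization so that the three operations $+$, $\cdot$ and division become honestly closed on it, to verify the four rules there, and then to observe that $I_n$ sits inside this larger domain with the two definitions of $(\cdot)^t$ agreeing. First I would call a nonzero polynomial $p\in\mathbb{R}[x_1,\dots,x_n]$ \emph{positive} if every coefficient occurring in it is a positive real, and a fraction \emph{positive} if it can be written as $f/g$ with $f,g$ positive. For a positive polynomial $p$ with set of exponent vectors $\operatorname{supp}(p)\subseteq\mathbb{Z}_{\ge 0}^n$ put $p^\tau=\max_{\,i\in\operatorname{supp}(p)}\{i_1x_1+\dots+i_nx_n\}\in PL_n$, and for a positive fraction $r=f/g$ put $r^\tau=f^\tau-g^\tau$; the point is that $p^\tau$ depends only on $\operatorname{supp}(p)$, not on the actual positive coefficients. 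One then checks that every $r\in I_n$ is a positive fraction and that $r^\tau=r^t$: when $g=1$ one has $1^\tau=0$ (its support is $\{0\}$) and $f$ has no constant term, while when $g\ne 1$ the polynomial $g$ also has no constant term, so in both cases $f^\tau-g^\tau$ reproduces exactly the formula defining $r^t$ in the statement.

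The heart of the argument is a \emph{no-cancellation} lemma for positive polynomials $p,q$: one has $\operatorname{supp}(p+q)=\operatorname{supp}(p)\cup\operatorname{supp}(q)$, because adding terms with positive coefficients can never annihilate a monomial, and $\operatorname{supp}(pq)=\operatorname{supp}(p)+\operatorname{supp}(q)$ (Minkowski sum), because each coefficient of $pq$ is a sum of products of positive numbers and hence positive. Combined with the elementary identities $\max_{i\in A\cup B}\ell=\max\{\max_{A}\ell,\max_{B}\ell\}$ and $\max_{i\in A,\,j\in B}\ell(i+j)=\max_{A}\ell+\max_{B}\ell$ for a linear functional $\ell$ on $\mathbb{R}^n$, this yields $(p+q)^\tau=\max\{p^\tau,q^\tau\}$ and $(pq)^\tau=p^\tau+q^\tau$. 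I expect this lemma to be the only genuinely non-routine step; the positivity of coefficients built into the definition of $I_n$ is precisely what makes it go through, and everything else is bookkeeping with exponent vectors and the max-plus algebra.

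With the lemma in hand, I would write $r_1=f_1/g_1$ and $r_2=f_2/g_2$ with all four polynomials positive and use $r_1r_2=(f_1f_2)/(g_1g_2)$, $r_1/r_2=(f_1g_2)/(g_1f_2)$, $r_1+r_2=(f_1g_2+f_2g_1)/(g_1g_2)$, noting that in each case both numerator and denominator are again positive. Then rules (3) and (4) follow immediately from $(pq)^\tau=p^\tau+q^\tau$, and rule (2) follows after additionally invoking $\max\{a+c,b+c\}=\max\{a,b\}+c$; rule (1) is trivial since $x_i=x_i/1$ with $x_i^\tau=x_i$ and $1^\tau=0$. Finally, for the correctness claim: if $f/g=f'/g'$ with all polynomials positive, then $fg'=f'g$ as polynomials, so the product part of the lemma gives $f^\tau+(g')^\tau=(f')^\tau+g^\tau$, hence $f^\tau-g^\tau=(f')^\tau-(g')^\tau$; thus $r^\tau$, and a fortiori $r^t$, depends only on the rational function $r$, which is exactly the asserted well-definedness. (The same computation shows that $I_n$ itself is not closed under addition, so rule (2) is best understood as an identity in $PL_n$ that holds whenever $r_1+r_2$ is presented as a positive fraction — in particular whenever it again lies in $I_n$.)
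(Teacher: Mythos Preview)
The paper states this proposition without proof, treating it as a routine observation, so there is no argument on the paper's side to compare against. Your proof is correct and supplies exactly the details that are implicit in the paper's treatment: the key point is the no-cancellation lemma for polynomials with positive coefficients, which guarantees that supports behave as $\operatorname{supp}(p+q)=\operatorname{supp}(p)\cup\operatorname{supp}(q)$ and $\operatorname{supp}(pq)=\operatorname{supp}(p)+\operatorname{supp}(q)$, and from which the max-plus identities follow immediately.

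Your decision to enlarge the domain from $I_n$ to arbitrary positive fractions is not merely a convenience but a necessity, and you are right to flag it: $I_n$ as defined (all coefficients exactly $1$) is closed under none of the three operations, so items (2)--(4) are literally ill-posed if one insists on staying inside $I_n$. The paper silently ignores this, presumably because in the intended applications (compositions of $I$-rational solutions) one only ever needs the extended version. Your well-definedness argument via $fg'=f'g\Rightarrow f^\tau+(g')^\tau=(f')^\tau+g^\tau$ is clean and is the natural way to justify the ``In particular'' clause.
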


Let $R(x_1,\ldots,x_n) = (r_1, r_2, \ldots, r_n) \in (I_n)^n$ be a rational vector-valued map of $n$ variables.
Define the {\it tropicalization} of the rational map $R$ componentwise:
$$
	R^t(x_1,\ldots,x_n) := (r_1^t(x_1,\ldots,x_n),\ldots,r_n^t(x_1,\ldots,x_n)).
$$

\begin{example}
The tropicalization of the electric solution $R_E$ from Example \ref{ex1} is
$$
R_E^t(x, y, z) = (x + y - \max\{x,z,x+y+z\},\ \max\{x,z,x+y+z\},\ y + z - \max\{x,z,x+y+z\}).
$$
The tropicalization of  $R_e$ from Example \ref{ex1} is
$$
R_e^t(x, y, z) = (x + y - \max\{x,z\},\ \max\{x,z\},\ y + z - \max\{x,z\}).
$$
It is easy to see that $R_E^t$ and $R_e^t$ are piecewise  linear solutions of the  TE.
\end{example}

\begin{remark}
The solution $R_E^t$ consists of 3 linear pieces:
$$
R_1(x, y, z) = (y, x, y + z - x),
$$
$$
R_2(x, y, z) = (x + y - z,  z,  y),
$$
$$
R_3(x, y, z) = (- z, x+y+z, -x).
$$
We can consider these maps as maps on $\mathbb{R}^3$. One can check that $(\mathbb{R}, R_1)$ and $(\mathbb{R}, R_2)$ are solutions of the TE, but $(\mathbb{R}, R_3)$
is not.
\end{remark}

\begin{definition} Let $r_1$ and $r_2$ be two rational functions in $\mathbb{R}(x_1, \ldots, x_n)$ and 
 $1 \leqslant k \leqslant n$. A $k$-\textit{composition} of $r_1$ and $r_2$ is a rational function $ r_1 \circ_k r_2 \in \mathbb{R}(x_1, \ldots, x_n)$ defined as
$$
		(r_1 \circ_k r_2)(x_1,\ldots,x_n) := r_1(x_1,\ldots,x_{k-1},r_2(x_1,\ldots,x_n),x_{k+1},\ldots,x_n).
$$
\end{definition}

\begin{proposition} If $r_1, r_2 \in I_n$, then
	$(r_1 \circ_k r_2)^t = r_1^t \circ_k r_2^t$.
\end{proposition}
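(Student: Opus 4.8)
The plan is to view tropicalization exactly as Proposition~\ref{rec} presents it: a homomorphism from the semiring of subtraction-free rational expressions in $x_1,\dots,x_n$ (with $+$, $\cdot$ and formal division) to the ``max-plus'' algebra of piecewise linear functions (with $\max$, $+$ and difference), sending each variable $x_i$ to $x_i$. Under this point of view, the $k$-composition $r_1\circ_k r_2$ is substitution of $r_2$ for $x_k$, and $r_1^t\circ_k r_2^t$ is substitution of $r_2^t$ for $x_k$; since substitution is compatible with any homomorphism, the identity should come out of a structural induction on the way $r_1$ is assembled. Before starting I would record one extension point: although $r_1,r_2\in I_n$, the expression $r_1\circ_k r_2$ need not lie in $I_n$ — substituting $r_2$ for a variable can create coefficients larger than $1$, e.g. through powers $r_2^m$ — but it is still subtraction-free, and both the defining recipe for $(\cdot)^t$ and the rules of Proposition~\ref{rec} extend verbatim to such expressions (for a polynomial with nonnegative coefficients there is no cancellation, so the support of a product is the Minkowski sum of the supports and the support of a sum is the union; taking $\max$ over the support therefore sends products to sums and sums to $\max$, and in particular $(r_2^m)^t=m\,r_2^t$). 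It is this extended tropicalization that both sides refer to, so both are well defined.

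Then I would run the induction on $r_1$. \emph{Base case:} if $r_1=x_k$ then $r_1\circ_k r_2=r_2$, so $(r_1\circ_k r_2)^t=r_2^t=x_k\circ_k r_2^t=r_1^t\circ_k r_2^t$; if $r_1=x_i$ with $i\neq k$ then $r_1\circ_k r_2=x_i$, while $r_1^t=x_i$ does not involve $x_k$, so both sides equal $x_i$. \emph{Inductive step:} write $r_1=a\mathbin{\star}b$ with $\star\in\{+,\cdot,/\}$ and let $\diamond$ be the matching operation $\max$, $+$ or $-$ on piecewise linear functions, so that $(a\mathbin{\star}b)^t=a^t\diamond b^t$ by Proposition~\ref{rec}. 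Substitution distributes over $\star$, i.e. $(a\mathbin{\star}b)\circ_k r_2=(a\circ_k r_2)\mathbin{\star}(b\circ_k r_2)$, hence by Proposition~\ref{rec} and the inductive hypothesis
$$(r_1\circ_k r_2)^t=(a\circ_k r_2)^t\diamond(b\circ_k r_2)^t=(a^t\circ_k r_2^t)\diamond(b^t\circ_k r_2^t).$$
Finally, substituting the fixed piecewise linear function $r_2^t$ into the $k$-th slot commutes pointwise with $\max$, $+$ and $-$ of functions, so the right-hand side equals $(a^t\diamond b^t)\circ_k r_2^t=r_1^t\circ_k r_2^t$, which closes the induction.

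Alternatively one can bypass the induction with a direct computation: writing $r_1=f_1/g_1$ with $f_1=\sum_{(i_1,\dots,i_n)\in S}x_1^{i_1}\cdots x_n^{i_n}$ (and $g_1$ either $1$ or of the same shape), Proposition~\ref{rec} gives $r_1^t=\max_{(i_1,\dots,i_n)\in S}(i_1x_1+\dots+i_nx_n)$ minus the analogous term for $g_1$; substituting $r_2$ for $x_k$ replaces each monomial by $\big(\prod_{l\neq k}x_l^{i_l}\big)r_2^{i_k}$, whose tropicalization is $\sum_{l\neq k}i_lx_l+i_k r_2^t$, so that $(r_1\circ_k r_2)^t=\max_{(i_1,\dots,i_n)\in S}\big(\sum_{l\neq k}i_lx_l+i_kr_2^t\big)$ minus the same expression for $g_1$ — which is precisely $r_1^t$ with $x_k$ replaced by $r_2^t$. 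Either way, the only place that demands care is the remark in the first paragraph: that $(\cdot)^t$ and the rules of Proposition~\ref{rec} stay valid on the subtraction-free expression $r_1\circ_k r_2$ even when it escapes $I_n$; with that in hand everything else is purely formal.
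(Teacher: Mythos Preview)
Your proof is correct and follows essentially the same route as the paper's: a structural induction on $r_1$ using the recursive rules of Proposition~\ref{rec}, with the base case of a single variable and the inductive step splitting $r_1$ as $a\mathbin{\star}b$ for $\star\in\{+,\cdot,/\}$. You are in fact more careful than the paper in flagging that $r_1\circ_k r_2$ can leave $I_n$ and explaining why the tropicalization rules still apply; the paper's proof glosses over this point.
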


\begin{proof}
	The proof uses the recursive procedure given in Proposition \ref{rec}.
	If $r_1(x_1,\ldots,x_n) = x_i$ is a variable, then the statement is obvious.
	If 
	$$
	r_1(x_1,\ldots,x_n) = f(x_1,\ldots,x_n) * g(x_1,\ldots,x_n),
	$$
	 where $f$ and $g$ are rational functions in $I_n$
	and $*~\in~\{+,\cdot,/\}$ is a binary operation in rational function $r_1$, then the tropicalization
	$$
		r^t_1(x_1,\ldots,x_n) = f^t(x_1,\ldots,x_n) *^t g^t(x_1,\ldots,x_n),
	$$ where $*^t~\in~\{\text{max},+,-\}$, and
	$$
		(r_1 \circ_k r_2)(x_1,\ldots,x_n) = (f \circ_k r_2)(x_1,\ldots,x_n) * (g \circ_k r_2)(x_1,\ldots,x_n).
	$$
	Hence, the tropicalization
	$$
		(r_1 \circ_k r_2)^t(x_1,\ldots,x_n) = (f \circ_k r_2)^t(x_1,\ldots,x_n) *^t (g \circ_k r_2)^t(x_1,\ldots,x_n).
	$$
The needed statement could be obtained by induction on a  recursion depth.
\end{proof}

\begin{corollary} \label{cortr}
	Tropicalization preserves a composition of rational vector-valued maps of the form described above.
\end{corollary}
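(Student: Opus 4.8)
The plan is to deduce the corollary from the immediately preceding proposition (the one on $k$-compositions of scalar functions) by a short iteration. By associativity of composition and induction on the number of factors, it is enough to show $(R \circ S)^t = R^t \circ S^t$ for two rational vector-valued maps $R = (R_1,\dots,R_N)$ and $S = (S_1,\dots,S_N)$ whose components lie in $I_N$ (as in the definition of $R^t$ above), whenever the composites that occur are again of this form so that all the tropicalizations involved are defined. Since both $R \mapsto R^t$ and $\circ$ are taken componentwise, this reduces to the scalar identities
$$
(R_j(S_1,\dots,S_N))^t \;=\; R_j^t(S_1^t,\dots,S_N^t), \qquad j=1,\dots,N.
$$

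To bring these under the preceding proposition I would first rewrite the substitution $R_j(S_1,\dots,S_N)$ as a chain of single-slot $k$-compositions; this is the one step that needs care, since substituting $S_1,\dots,S_N$ into the $N$ arguments of $R_j$ simultaneously is not itself a $k$-composition. Pass to $2N$ variables $y_1,\dots,y_N,x_1,\dots,x_N$ and put $\widetilde R_j(\bar y,\bar x) = R_j(\bar y)$ and $\widetilde S_i(\bar y,\bar x) = S_i(\bar x)$, which again lie in $I_{2N}$ (they merely ignore some variables). Because each $\widetilde S_i$ uses only the $x$-variables, substituting $\widetilde S_k$ into the $y_k$-slot never reintroduces a $y$-variable, so the substitutions into $y_1,\dots,y_N$ may be carried out one after another without interference, and all the $y$'s disappear, giving
$$
R_j(S_1,\dots,S_N) \;=\; (\cdots(\widetilde R_j \circ_1 \widetilde S_1)\circ_2 \widetilde S_2 \cdots)\circ_N \widetilde S_N
$$
as rational functions of $x_1,\dots,x_N$; the same identity holds verbatim with every function replaced by its tropicalization, since tropicalization is unaffected by adjoining unused variables (this follows at once from Proposition \ref{rec}).

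Finally I would apply the preceding proposition $N$ times, once for each $\circ_k$ in the chain above, using that $I_{2N}$ is closed under the $k$-compositions that occur so that every intermediate tropicalization is defined, to obtain
$$
(R_j(S_1,\dots,S_N))^t \;=\; (\cdots(\widetilde R_j^{\,t}\circ_1 \widetilde S_1^{\,t})\circ_2 \widetilde S_2^{\,t}\cdots)\circ_N \widetilde S_N^{\,t} \;=\; R_j^t(S_1^t,\dots,S_N^t),
$$
which is the asserted componentwise identity, and hence $(R\circ S)^t = R^t\circ S^t$. The only real obstacle is the bookkeeping in the middle paragraph: checking that the simultaneous substitution genuinely unfolds into a clash-free chain of $k$-compositions and that one never leaves the subclass of $I$-functions on which tropicalization is defined. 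Alternatively, one can bypass the variable-doubling and instead run the very same structural induction on the expression tree of $R_j$ (leaves the variables $x_i$, internal nodes the operations $+,\cdot,/$) that proves the preceding proposition, now with simultaneous substitution in place of a single slot, applying Proposition \ref{rec} at each internal node; this is probably the shortest route and is closest in spirit to the ``corollary'' label.
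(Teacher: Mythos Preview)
Your proposal is correct and considerably more careful than the paper, which states the corollary without any argument, leaving it as an immediate consequence of the preceding proposition on $k$-compositions. Your alternative at the end---running the same structural induction on the expression tree of each $R_j$, but now with simultaneous substitution of $S_1,\dots,S_N$ in place of a single slot, applying Proposition~\ref{rec} at each node---is exactly what the paper intends and is the shortest and cleanest route.

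One remark on your first route (variable doubling followed by a chain of $k$-compositions): the closure you flag as ``the only real obstacle'' does fail in general. For instance $y_1^2\in I_1$ and $x_1+x_2\in I_2$, but substituting the latter into the former gives $x_1^2+2x_1x_2+x_2^2$, whose middle coefficient is $2$, so the composite is not in $I_2$. This is not a flaw in your reasoning but in the paper's formalism: tropicalization is defined only on $I_n$, yet the proof of the preceding proposition already treats it as an operation on expression trees via Proposition~\ref{rec}. Once one adopts that reading (tropicalization of expressions rather than of reduced fractions), both of your routes go through; the structural-induction route avoids the issue altogether and is the one to keep.
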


\begin{theorem}
	If $(\mathbb{R}_{>0}, R)$, $R \in (I_n)^n$ is a solution of the $n$-simplex equation on the set of positive real numbers, 
	then its tropicalization $(\mathbb{R}_{>0}, R^t)$ is also a solution of the $n$-simplex equation.
\end{theorem}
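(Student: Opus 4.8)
The plan is to lift the $n$-simplex identity satisfied by $R$ to a \emph{formal} identity of rational functions and then transport it through the tropicalization map, using Corollary~\ref{cortr} together with the last assertion of Proposition~\ref{rec} (tropicalization is well defined on equal rational functions).

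First I would unwind the shape of the maps $R_{\overline{k}}$. Put $N=n(n+1)/2$. For a multi-index $\overline{k}=(k_1,\dots,k_n)$ the map $R_{\overline{k}}\colon\mathbb{R}_{>0}^N\to\mathbb{R}_{>0}^N$ has $j$-th component equal to the variable $x_j$ when $j\notin\overline{k}$, and equal to $r_i(x_{k_1},\dots,x_{k_n})$ on the coordinate carrying the $i$-th entry of $\overline{k}$, where $R=(r_1,\dots,r_n)$ with each $r_i\in I_n$. Since tropicalization is taken componentwise and fixes variables, each $R_{\overline{k}}$ is again a rational vector-valued map of the form to which Corollary~\ref{cortr} applies, and $(R_{\overline{k}})^t=(R^t)_{\overline{k}}$.

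Next, since $(\mathbb{R}_{>0},R)$ solves the {\SE}, the compositions $R_{\overline{1}}R_{\overline{2}}\cdots R_{\overline{n+1}}$ and $R_{\overline{n+1}}\cdots R_{\overline{2}}R_{\overline{1}}$ agree as maps on $\mathbb{R}_{>0}^N$; since $\mathbb{R}_{>0}^N$ has non-empty interior in $\mathbb{R}^N$, they agree as identities of rational functions. Applying Corollary~\ref{cortr} to each side yields
$$
(R_{\overline{1}}\cdots R_{\overline{n+1}})^t=(R_{\overline{1}})^t\cdots(R_{\overline{n+1}})^t=(R^t)_{\overline{1}}\cdots(R^t)_{\overline{n+1}}
$$
and symmetrically $(R_{\overline{n+1}}\cdots R_{\overline{1}})^t=(R^t)_{\overline{n+1}}\cdots(R^t)_{\overline{1}}$. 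By the well-definedness part of Proposition~\ref{rec}, the two left-hand sides are equal, hence so are the right-hand sides, giving $(R^t)_{\overline{1}}\cdots(R^t)_{\overline{n+1}}=(R^t)_{\overline{n+1}}\cdots(R^t)_{\overline{1}}$ as piecewise-linear maps; this is the {\SE} for $R^t$, so $(\mathbb{R}_{>0},R^t)$ is a solution.

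The step that needs genuine care --- and which I expect to be the main obstacle --- is checking the hypothesis of Corollary~\ref{cortr} for the \emph{full} composition rather than for a single factor $R_{\overline{k}}$: one must verify that every partial composite $R_{\overline{j}}R_{\overline{j+1}}\cdots R_{\overline{n+1}}$ still has all components in $I_N$, equivalently that $I$-rational maps over $\mathbb{R}_{>0}$ are closed under the $k$-compositions occurring here --- that multiplying out never spoils the ``all coefficients $1$'' condition and that denominators stay non-vanishing on $\mathbb{R}_{>0}$. Granting that closure, the homomorphism properties recorded in Proposition~\ref{rec} and in the proposition on $k$-compositions make the rest purely formal.
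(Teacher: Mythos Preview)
Your approach is essentially the same as the paper's: apply tropicalization to both sides of the {\SE} and invoke Corollary~\ref{cortr} to distribute it over the composition, concluding that $R^t$ satisfies the equation. You are simply more explicit than the paper about the identification $(R_{\overline{k}})^t=(R^t)_{\overline{k}}$ and the passage from pointwise equality on $\mathbb{R}_{>0}^N$ to a formal identity of rational functions, and the closure-under-composition concern you flag is a technical point that the paper's brief proof likewise does not address.
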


\begin{proof}
	For any $n \geqslant 2$, the $n$-SE has the following form
	$$
	R_{\overline{1}}R_{\overline{2}} \cdots R_{\overline{n+1}}
	=
	R_{\overline{n+1}} \cdots R_{\overline{2}} R_{\overline{1}}.
	$$
	Apply the tropicalization procedure to both sides:
	$$
		(R_{\overline{1}}R_{\overline{2}} \cdots R_{\overline{n+1}})^t = (R_{\overline{n+1}} \cdots R_{\overline{2}} R_{\overline{1}})^t.
	$$
	Hence, according to Corollary \ref{cortr}, the following relation holds:
	$$
	R_{\overline{1}}^t R_{\overline{2}}^t \cdots R_{\overline{n+1}}^t
	=
	R_{\overline{n+1}}^t \cdots R_{\overline{2}}^t R_{\overline{1}}^t.
	$$
	It means that $R^t$ is a solution of the $n$-simplex equation.
\end{proof}

\subsection{Formal tropicalization of solutions on algebraic systems}

The reasonable question that arises is whether we can make a replacement, analogous to a tropicalization, in solutions of the {\SE} such that we again obtain a solution.
Obviously this question has a positive answer.
Such substitutions of functions in formulas of solutions that transform solutions into solutions we will call formal tropicalizations.
In order to give more formal definition we shall recall the notion of a partial algebra.
A partial algebra is a generalization of universal algebra that allows partial operations \cite[ch. 2]{Gratz}.
A natural example of partial algebra is a field,  a multiplicative inverse is not defined for element $0$. 

\begin{definition}
	Let $ \mathcal{A} $ be a partial algebra of signature $ \Sigma $ and $ R : A^n \to A^n $ be a partial function that can be
	expressed as a realization of a collection $ (f_1, \dotss, f_n) $ of $n$-ary functions--terms of a signature
	$ \Sigma' \subset \Sigma $ such that $ (A, R) $ is a solution of the \SE.
	Let $ \mathcal{B} $ be a partial algebra of a signature $ \Xi $,
	and $ g $ is a mapping that match every $n$-ary functional symbol from $ \Sigma' $ with $n$-ary function of the signature $\Xi$.
	Then for every $ f_i $ we can define $ g(f_i) $ as $n$-ary function of the  signature $\Xi$ where every functional symbol from
	$ \Sigma' $ is replaced with corresponding $n$-ary function of the  signature $\Xi$.
	The function $ R^g : B^n \to B^n $ that is realization of a collection $ (g(f_1), \dotss, g(f_n)) $ will be called a
	{\it formal tropicalization } of the solution $ R $ if $ (B, R^g) $ is a solution of the \SE.
\end{definition}

The sentence ``$(A, R) $ is a solution of the \SE'' could be understood as a system of $ N $ equations $\Phi_R$ with $ N $ variables such that for any interpretation of variables every equation is either true or at least one side is undefined.
This is just an explicit form of the {\SE}.

Those equations can be viewed as formulas of equational logic.
For some survey of equational logic in the context of algebras one can see \cite[appendix 4]{Gratz}.

Following the notation above, let $T$ be a set of equational formulas of signature $\Sigma'$ such that
$\mathcal{A}$ satisfies $T$ and every formula $\Phi_R$ has finite proof starting from $T$.
These formulae will be called {\it axioms} of the solutions $(A, R)$.
\begin{proposition}
	Let $ (A , R) $ be a solution of the {\SE} and $ T $ its set of axioms.
	If $ g $ is a mapping such that $ \mathcal{B} $ satisfies $ g(T) := \{ g(\phi) : \phi \in T \} $
	then $ (B, R^g) $ is a formal tropicalization of the solution $ (A, R) $, i.e. it is a solution of the \SE.
\end{proposition}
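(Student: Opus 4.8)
The plan is to run the whole argument semantically, by induction on the equational derivations supplied by $T$, and to reduce everything to the fact that the $n$-SE for $R^g$ is, literally, the $g$-image of the $n$-SE for $R$. So the first thing I would pin down --- it is only implicit in the statement --- is this: writing the {\SE} for $R^g$ in explicit form produces, equation for equation, the system $g(\Phi_R) := \{\, g(\phi) : \phi \in \Phi_R \,\}$ obtained by applying $g$ to the explicit system $\Phi_R$ of $(A,R)$. The reason is structural. The operators $(R^g)_{\overline{k}}$ are built out of the component terms $g(f_1),\dots,g(f_n)$ by exactly the same bookkeeping --- which coordinate of $X^N$ occupies which slot of which operator --- that builds $R_{\overline{k}}$ out of $f_1,\dots,f_n$; this bookkeeping is recorded in the multi-index matrix $MI_n$ and is blind to the signature. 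Since $g$ acts by structural recursion on terms, it commutes with substitution of terms into terms, hence with the composition of component functions that yields each side of the {\SE}. Therefore ``$(B,R^g)$ is a solution of the {\SE}'' unwinds to precisely the $N$ scalar equations $g(\Phi_R)$ over the signature $\Xi$.

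Next I would fix a component equation $\phi$ of $\Phi_R$ and a finite equational derivation of $\phi$ from $T$ over $\Sigma'$, say with lines $\psi_1,\dots,\psi_t$ and $\psi_t = \phi$, and prove $\mathcal B \models g(\psi_s)$ for all $s$ by induction on $s$. If $\psi_s$ is an axiom from $T$, then $g(\psi_s) \in g(T)$, so $\mathcal B \models g(\psi_s)$ by hypothesis. Otherwise $\psi_s$ comes from earlier lines by one of the equational inference rules (reflexivity, symmetry, transitivity, substitution of terms for variables, replacement of equals by equals inside a term); in each case the rule, after $g$ is applied to all the terms involved, is a semantically valid implication in $\mathcal B$: reflexivity, symmetry and transitivity hold for $=$ in any structure, the substitution rule survives because $g$ commutes with substitution, and a replacement step along a functional symbol $\omega$ of $\Sigma'$ becomes a replacement step along the $\mathcal B$-interpretation of $g(\omega)$, which respects equality simply because it is a function. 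This completes the induction, so $\mathcal B \models g(\phi)$; ranging over all components $\phi$ gives $\mathcal B \models g(\Phi_R)$, which by the first paragraph says exactly that $(B,R^g)$ satisfies the {\SE}. Hence $R^g$ is a solution, and therefore a formal tropicalization of $(A,R)$ in the sense of the preceding definition.

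The one step I expect to need real care is partiality. The objects here are partial algebras, and ``$(A,R)$ is a solution'' was read in the ``where it has sense'' convention, so the appropriate proof calculus is a version of equational logic adequate for partial algebras (existence equations, or Kleene/strong equality, with the attendant strictness rules), and ``$\mathcal B \models$'' must be interpreted on the domain where the relevant compositions are defined. One then has to check that the induction above still runs in that setting --- in particular that each partial-algebra inference rule stays semantically valid in $\mathcal B$ after $g$-translation, given that $\mathcal B$ satisfies $g(T)$ in the same partial sense. This is exactly the compatibility tacitly assumed when the ``axioms'' of a solution were introduced (finite provability of every $\Phi_R$ from $T$, with soundness of the calculus), so once the calculus is fixed to have soundness and to be preserved by symbol-to-term translations, the argument goes through verbatim.
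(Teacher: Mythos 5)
Your argument is correct and follows essentially the same route as the paper's own proof: you establish that the explicit system $\Phi_{R^g}$ coincides with $g(\Phi_R)$ and then transport the derivation of $\Phi_R$ from $T$ along $g$, which is exactly the paper's two-step argument, merely carried out in more detail (the induction on derivation lines and the soundness check are left implicit in the paper). Your closing remarks on partial algebras and the choice of equational calculus address a point the paper glosses over, but they refine rather than change the approach.
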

\begin{proof}
	It is easy to see that the set $ g(\Phi_R) $ coincide with the set $\Phi_{R^g}$.
	Now since $\Phi_R$ is derivable from $T$ the set $g(\Phi_R)$ is derivable from $g(T)$.
	From these two facts one can conclude that $\mathcal{B}$ satisfies $\Phi_{R^g}$.
\end{proof}

\begin{example}
	Let us consider the case of the tropicalization of rational functions described above.
	We can interpret rational solution as a solution in algebraic system $(X, \cdot^2, /^2, +^2, 1^0)$.
	All those rational solutions can be proven from the following set of axioms:
	$$
	\begin{gathered}
		a \cdot b = b \cdot a;
		\qquad
		a + b = b + a;
		\\
		a \cdot (b \cdot c) = (a \cdot b) \cdot c;
		\qquad
		a + (b + c) = (a + b) + c;
		\qquad
		a \cdot (b + c) = a \cdot b + a \cdot c;
		\\
		a / b = a \cdot (1 / b);
		\qquad
		(1 / a) \cdot (1 / b) = 1 / (a \cdot b);
		\qquad
		a \cdot 1 = a;
		\qquad
		a / 1 = a;
		\qquad
		a / a = 1.
	\end{gathered}
	$$
	Thus we can construct mapping $g$ to signature of any algebraic system that meet those axioms.
	One of such algebraic systems is $(X, +^2, -^2, \max^2, 0^0).$
	
	Note that as long as we consider multiplication by constant natural number as iterated operation of addition we can use the same
	mapping $g$ for rational functions with natural coefficients,
	and with some formal accuracy even for rational functions with positive rational coefficients.
\end{example}

\begin{corollary}
	Let $(\mathcal{A}, R)$ be a solution of the {\SE} and $\mathcal{B}$ be some subsystem of an algebraic system $\mathcal{A}$
	in the same class defined by the signature $\Sigma$.
	Then $(\mathcal{B}, R)$ is a solution of the {\SE} as well.
\end{corollary}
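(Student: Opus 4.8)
The plan is to read this corollary as the special case of the preceding Proposition in which the signature is not changed at all: one takes $\mathcal{B}$ to be a subsystem of $\mathcal{A}$ and lets $g$ be the ``inclusion'' assignment that matches each functional symbol of $\Sigma'$ with the very same operation of $\mathcal{B}$. Then $R^g$ is nothing but the restriction of $R$ to $B^n$, so it suffices to check the hypothesis of that Proposition, namely that $\mathcal{B}$ satisfies the axioms $T$ of the solution $(\mathcal{A}, R)$. Two standard facts make this immediate: first, a subsystem of $\mathcal{A}$ is closed under all operations of $\Sigma \supseteq \Sigma'$, so each term function $f_i$ carries $B^n$ into $B$ and $R$ genuinely restricts; second, an identity valid in $\mathcal{A}$ remains valid in every substructure, so $\mathcal{B} \models T = g(T)$.

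More concretely I would proceed as follows. Step one: recall that $R$ is presented as the realization in $\mathcal{A}$ of an $n$-tuple $(f_1, \dotss , f_n)$ of $n$-ary terms over a subsignature $\Sigma' \subseteq \Sigma$, and fix a set $T$ of equational axioms of $(\mathcal{A}, R)$ --- in the worst case one may simply take $T = \Phi_R$, the explicit equational form of the {\SE}, whose derivation from itself is trivial. Step two: observe that, $\mathcal{B}$ being closed under the operations of $\Sigma'$, the term maps $f_i$ restrict to maps $B^n \to B$, hence $R$ restricts to a (partial) map $R|_{B^n}\colon B^n \to B^n$ whose value at a point of $B^n$, whenever defined, agrees with the value of $R$ computed in $\mathcal{A}$. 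Step three: take $g$ to be the inclusion-induced assignment, so that $R^g = R|_{B^n}$, $g(T) = T$, and $\Phi_{R^g} = \Phi_R$. Step four: since identities pass to substructures, $\mathcal{A} \models T$ forces $\mathcal{B} \models T = g(T)$. Step five: apply the preceding Proposition to conclude that $(\mathcal{B}, R^g) = (\mathcal{B}, R)$ is a solution of the {\SE}. Equivalently, and without invoking the Proposition at all, one applies ``identities are inherited by substructures'' directly to the system $\Phi_R$.

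The one point that really needs care --- and the step I expect to be the main obstacle --- is the bookkeeping forced by $\mathcal{A}$ being a partial algebra, together with making precise what ``subsystem'' means. One must fix a suitable notion of substructure (e.g.\ a relative subalgebra) for which the partial term operations of $\mathcal{B}$ are restrictions of those of $\mathcal{A}$; then, for an instance of $\Phi_R$ at a point $\bar b \in B$ with both sides defined over $\mathcal{B}$, the two terms are defined over $\mathcal{A}$ at $\bar b$ with the same values, which coincide because $(\mathcal{A}, R)$ is a solution, so they coincide over $\mathcal{B}$; and if a side is undefined over $\mathcal{B}$ the equation holds vacuously, in accordance with the convention ``where it has sense.'' Put differently, the finite equational-logic derivation of $\Phi_R$ from $T$ is purely syntactic and transfers unchanged once $\mathcal{B} \models T$ is known, provided equalities are read as weak existence equations. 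Beyond this, closure under operations and inheritance of identities are routine universal algebra and require no new idea.
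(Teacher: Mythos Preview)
Your proposal is correct and matches the paper's intent: the corollary is stated immediately after the Proposition with no separate proof, so the intended argument is exactly what you describe --- take $g$ to be the identity assignment on functional symbols, note that a subsystem inherits all equational identities of $\mathcal{A}$ (hence satisfies $g(T)=T$), and invoke the Proposition. Your additional care about partial operations and the meaning of ``subsystem'' goes beyond what the paper spells out, but the core idea is the same.
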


\begin{proposition} \label{Tbyhom}
	Let $(\mathcal{A}, R)$ be a solution of the {\SE} and $\mathcal{B}$ be a homomorphic image of an algebraic system
	$\mathcal{A}$ under the homomorphism $h$.
	Then $(\mathcal{B}, R^h)$ is a solution of the {\SE}.
\end{proposition}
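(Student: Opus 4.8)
The plan is to reduce the statement to the fact, already exploited in the preceding proposition, that equational formulas are inherited by surjective homomorphic images. Write $R$ as the realization in $\mathcal A$ of a collection $(f_1,\dotss,f_n)$ of term functions of a subsignature $\Sigma'\subseteq\Sigma$, and let $\Phi_R$ denote the explicit system of $N$ equations between $\Sigma'$-terms in $N$ variables ($N=n(n+1)/2$) which expresses that $(\mathcal A,R)$ is a solution of the {\SE}. Since $\mathcal B$ carries the same signature $\Sigma$ as $\mathcal A$, the same terms $(f_1,\dotss,f_n)$ define a map $R^h\colon B^n\to B^n$, and the syntactic system of equations attached to $R^h$ is again $\Phi_R$; so it suffices to show that $\mathcal B$ satisfies $\Phi_R$.

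First I would argue directly. Fix an assignment $\bar b=(b_1,\dotss,b_N)\in B^N$ at which both sides of the {\SE} are defined. Using surjectivity of $h$, pick a preimage $\bar a=(a_1,\dotss,a_N)\in A^N$ with $h(a_i)=b_i$ for every $i$. For each equation $s(\bar x)=t(\bar x)$ of $\Phi_R$ one has $s^{\mathcal A}(\bar a)=t^{\mathcal A}(\bar a)$, since $(\mathcal A,R)$ is a solution; applying $h$ and using that a homomorphism commutes with term operations, $s^{\mathcal B}(\bar b)=h\bigl(s^{\mathcal A}(\bar a)\bigr)=h\bigl(t^{\mathcal A}(\bar a)\bigr)=t^{\mathcal B}(\bar b)$. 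As $\bar b$ was arbitrary, $\mathcal B$ satisfies $\Phi_R$, that is, $(\mathcal B,R^h)$ is a solution of the {\SE}.

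Equivalently, and more in the spirit of the previous two results, one may invoke the axiom set $T$ of the solution $(\mathcal A,R)$: since $T$ consists of equational formulas and $\mathcal B$ is a homomorphic image of $\mathcal A$, the algebra $\mathcal B$ satisfies $T$; then the preceding proposition, applied with the assignment $g$ taken to be the identity on $\Sigma'$, gives at once that $(\mathcal B,R^h)$ is a solution.

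The only point that needs genuine care is the partiality of the operations. The argument above uses (i) that whenever a $\Sigma'$-term is defined on $\bar a$ its value is sent by $h$ to the value of that term on $h(\bar a)$, and (ii) that a preimage $\bar a$ of $\bar b$ can be chosen at which the relevant compositions are defined. For total algebras both are automatic; for partial algebras (i) is exactly the defining property of a (strong) homomorphism of partial algebras, and (ii) is handled, as elsewhere in the paper, by working only at points where the left and right parts of the equation make sense. I expect this bookkeeping about domains — rather than any substantial algebra — to be the main thing to spell out carefully.
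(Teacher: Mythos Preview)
Your argument is correct. The paper itself gives no proof of this proposition --- it is stated immediately after the formal-tropicalization proposition and its first corollary, and is evidently meant to be read as an instance of the same mechanism. Your second paragraph is exactly the intended reading: take $g$ to be the identity on $\Sigma'$, note that the equational axioms $T$ (and hence $\Phi_R$) pass to the homomorphic image $\mathcal{B}$, and conclude via the preceding proposition.

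Your direct argument in the first paragraph is a clean alternative that bypasses the axiom set $T$ entirely and just uses that homomorphisms commute with term functions; it is slightly more elementary and makes explicit why surjectivity matters. Either route is fine. Your caveat about partiality is appropriate: point (i) is the definition of homomorphism of partial algebras, while point (ii) is genuinely delicate --- a preimage $\bar a$ of $\bar b$ need not lie in the domain of the relevant compositions even when $\bar b$ does --- but the paper's standing convention (that the {\SE} is only asserted where both sides are defined) sidesteps the issue, as you note.
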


One of the advantage of formal tropicalization is that it allows us to see solutions of the {\SE} as some templates rather than fixed functions.
Consider the following example.
\begin{example}
	We know that
	$$
		(x, y) \mapsto (ax, by + (1-ab)x), \text{ where } a,b \in \mathbb{Z}
	$$
	is a solution of the YBE over $ \mathbb{Z} $.
	Since $ \mathbb{Z} $ as a free left $ \mathbb{Z} $-module is generic for the theory of left modules over commutative rings,
	we can see that for any commutative ring $ R $
	$$
		(x, y) \mapsto (ax, by + (1-ab)x), \text{ where } a,b \in R
	$$
	is a solution of the YBE over any left $R$-module.
\end{example} 
From the Proposition \ref{Tbyhom} we obtain the following result.
\begin{corollary} 
	Let $(\mathcal{A}, R)$ be a solution of the {\SE} and $T$ its set of axioms.
	Consider the class $\mathfrak{K}$ of algebraic systems defined by the subset of axioms $ T' \subset T $ that contain variables.
	For any system $\mathcal{B} \in \mathfrak{K}$ if there is homomorphism
	$ h : \mathcal{A} \to \mathcal{B}$ then $ (\mathcal{B}, R^h) $ is a solution of the {\SE}.
\end{corollary}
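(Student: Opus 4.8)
The plan is to reduce the statement, by the same mechanism that underlies Proposition \ref{Tbyhom}, to a single verification: that $\mathcal{B}$ is a model of the \emph{entire} axiom set $T$. Indeed, a homomorphism $h\colon\mathcal A\to\mathcal B$ identifies the operation symbols of $\Sigma'$ (those occurring in the terms $f_1,\dots,f_n$) on $\mathcal A$ with the corresponding operations on $\mathcal B$, so $R^h$ is simply $R$ with its term operations reinterpreted in $\mathcal B$; taking $g$ to be this identification of symbols we have $R^h=R^g$ and $g(T)=T$ as a set of $\Sigma'$-formulas, and the formal‑tropicalization Proposition preceding the corollary then yields that $(\mathcal B,R^h)$ solves the {\SE} as soon as $\mathcal B\models T$ (equivalently $\mathcal B\models g(T)$), since every equation of $\Phi_R$ is derivable from $T$ and hence $\Phi_{R^h}=g(\Phi_R)$ is derivable from $g(T)$.

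First I would split $T=T'\sqcup(T\setminus T')$, where $T'$ consists of the formulas containing at least one variable and $T\setminus T'$ of the variable-free ones, i.e.\ equations $s=t$ between closed $\Sigma'$-terms. By hypothesis $\mathcal B$ lies in the class $\mathfrak{K}$ cut out by $T'$, hence $\mathcal B\models T'$. For the remaining part I would invoke the elementary fact that a homomorphism preserves the value of any closed term: by induction on the term, $h$ sends a constant $c^{\mathcal A}$ to $c^{\mathcal B}$ and commutes with every operation wherever the latter is defined, so $h(s^{\mathcal A})=s^{\mathcal B}$ for closed $s$. Consequently, for $s=t$ in $T\setminus T'$ the identity $\mathcal A\models s=t$ gives $s^{\mathcal B}=h(s^{\mathcal A})=h(t^{\mathcal A})=t^{\mathcal B}$, so $\mathcal B\models s=t$. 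Together with the previous point this gives $\mathcal B\models T$, which by the first paragraph finishes the proof.

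The one conceptual point — and the reason the statement singles out precisely the axioms that contain variables — is the last observation: variable-free axioms are transported by an \emph{arbitrary} homomorphism (this is exactly why the surjective case, Proposition \ref{Tbyhom}, where $\mathcal B=h(\mathcal A)$ already satisfies every identity true in $\mathcal A$ and in particular lies in $\mathfrak{K}$, is a special case), whereas variable-containing identities can be destroyed by a non-surjective homomorphism, so $\mathcal B$ must be required to satisfy $T'$ on its own. I do not expect any further obstacle; the only mildly delicate bookkeeping is the partial-algebra caveat — each equation, here and throughout $\Phi_R$, must be read as holding where both sides are defined, and one must check that $h$ respects definedness along the terms involved — which is handled by the convention already in force in the paper.
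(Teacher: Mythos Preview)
Your argument is correct and is exactly in the spirit of the paper: the paper gives no explicit proof for this corollary, merely stating that it follows from Proposition~\ref{Tbyhom}, and your write-up supplies the missing details by combining the formal-tropicalization proposition (that $\mathcal{B}\models g(T)$ implies $(\mathcal{B},R^g)$ is a solution) with the observation that variable-free axioms are transported by any homomorphism while $T'$ holds in $\mathcal{B}$ by hypothesis. The only minor framing difference is that the paper casts the corollary as a consequence of Proposition~\ref{Tbyhom}, whereas you (more accurately) lean on the earlier proposition about $g(T)$ and treat Proposition~\ref{Tbyhom} as the surjective special case; logically these amount to the same thing.
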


\begin{example}
	Let $\mathfrak{R}$ be the class of commutative rings with partial operation of multiplicative inverse.
	Consider a rational solution $(\mathbb{R}, R)$ of the {\SE} of the form $(R_1, \dotss , R_n)$ with components
	$$
		R_i =
		\frac{
			\sum a_j \cdot x_1^{\a_{j, 1}} \cdot ... \cdot x_n^{\a_{j, n}}
		}{
			\sum b_k \cdot x_1^{\b_{k, 1}} \cdot ... \cdot x_n^{\b_{k, n}}
		},
		\text{ where }
		\alpha_{j, m}, \beta_{k, m} \in \mathbb{N}\cup\{0\} \text{ for all } j, k, m.
	$$
	In the class $\mathfrak{R}$ there is an obvious homomorphism $h : \mathbb{R} \to \mathcal{D}'[\mathbb{R}^m]$
	from the field $\mathbb{R}$ of real numbers with standard operations of addition and multiplication
	to the linear space $\mathcal{D}'[\mathbb{R}^m]$ of generalized functions with compact support
	on the space $\mathbb{R}^m$ with the standard operation of addition and the convolution $*$ as the multiplication.
	This homomorphism acts by
	$$
		r  \mapsto r \cdot \delta \in \mathcal{D}'[\mathbb{R}^m],~~~r \in \mathbb{R},
	$$
	where $\delta$ is the Dirac delta.
	Thus $(\mathcal{D}'[\mathbb{R}^m], R^h)$ is a rational solution of the {\SE}.
	Components of the function $R^h$ are
	$$
		R^h_i =
		\left(
			\sum a_j \cdot x_1^{\a_{j, 1}} * ... * x_n^{\a_{j, n}}
		\right)\left(
			\sum b_j \cdot x_1^{\b_{j, 1}} * ... * x_n^{\b_{j, n}}
		\right)^{-1},
	$$
	where powers are repeated convolutions and the inverse is taken with respect to the convolution.
\end{example}

\section{Solutions of the parametric YBE} \label{GrExt}

\subsection{Group extensions and the YBE} \label{Gr}

Suppose there exists a group extension
\FloatBarrier
\begin{figure}[!ht]
\centering
\begin{tikzcd}[ampersand replacement=\&]
	1 \ar[r, ""] \& H \ar[r, "i"] \& G \ar[r, "j"] \& K \ar[r, ""] \& 1
\end{tikzcd}
\end{figure}
\FloatBarrier
\noindent and a section $\varphi : K \to G$ that is a map such that $j \circ \varphi = \id_{K}$. 

In \cite{PT} was prove that if $K$ is abelian, then the conjugacy quandle $Conj(G)$ defines a solution of the parametric YBE, 
$$
R_{12}^{a,b} R_{13}^{a,c} R_{23}^{b,c} = R_{23}^{b,c} R_{13}^{a,c} R_{12}^{a,b},~~~a, b, c \in K,
$$
on $H$ with parameters in $K$. 

In this subsection we generalize this result. To do it we recall some facts on group extensions. We will 
present  elements of $G$ as the set of pairs $(x, a) \in H \times K$ and assume that $i(x) = (x,1)$ and $j((x, a)) = a$. Then the multiplication on $G$ is defined by the rule
$$
(x, a) \circ (y, b) = (x \underset{a,b}{\circ} y, a \circ b)~~\mbox{for some}~x \underset{a,b}{\circ} y\in H,
$$
where $a \circ b$ is the multiplication on $K$. For the multiplication on $H$ we use the same symbol  $\circ$. These agree with the formula
$$
(x, 1) \circ (y, 1) = (x \circ y, 1).
$$

Hence, we can consider $H$ as algebraic system with the set of binary algebraic operations
$$
\{ \underset{a,b}{\circ} ~|~ a, b \in K \}.
$$
From group axioms follows that these operations satisfies the following axioms
\begin{enumerate}
\item for any $x \in H$ and $a \in K$ there is unique element $x_{a,a^{-1}}^{-1} = x_{a^{-1},a}^{-1} \in H$ such that
$$
x_{a^{-1},a}^{-1} \underset{a^{-1},a}{\circ} x = x  \underset{a, a^{-1}}{\circ}  x_{a,a^{-1}}^{-1}  = 1;
$$
\item for any $x, y, z \in H$ and $a, b, c \in K$ holds
$$
(x \underset{a,b}{\circ} y) \underset{a\circ b,c}{\circ} z = x \underset{a,b \circ c}{\circ} (y \underset{b,c}{\circ} z).
$$
\end{enumerate}

Now we generalize this construction to the groups with a structure of a right distributive groupoid.
Suppose, that on the set $G$ is defined a binary algebraic operation $* : G \times G \to G$ such that $(G, *)$ is a right distributive groupoid, the set $H$ is closed under multiplication $*$ and this multiplication defines a right distributive groupoid on $K$. As we know, the map
$$
R(g, h) = (g, h*g), ~~g, h \in G,
$$
defines a solution of the YBE on $G$. For example, in \cite{PT} the operation $*$  is $a * b = b^{-1} a b$, $a, b \in G$, it means that $(G,*)$ is a conjugation quandle.

If we present elements of $G$ as the set of pairs $(x, a) \in H \times K$, then 
$$
(x, a) * (y,b) = (x \underset{a,b}{*} y, a * b)~~\mbox{for some}~x \underset{a,b}{*} y \in H.
$$
Hence, on $H$ we have operation $*$ and a set of operations $\{ \underset{a,b}{*}~|~a, b \in K \}$. From the right distributivity of $*$ follows

\begin{lemma} 
For any $x, y, z \in H$ and $a, b, c \in K$ holds
$$
(x \underset{a,b}{*} y) \underset{a* b,c}{*} z = (x \underset{a,c}{*} z)  \underset{a * c, b * c}{*}   (y \underset{b,c}{*} z).
$$
\end{lemma}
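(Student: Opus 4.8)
The plan is to obtain this identity as nothing more than the first-coordinate component of the right self-distributivity law for $*$ on the whole group $G$, read off in the chosen coordinates $G = H \times K$. So the proof will not need any induction or case analysis; it is a one-line expansion.

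First I would recall the standing hypothesis that $(G,*)$ is a right distributive groupoid, i.e. $(g*h)*k = (g*k)*(h*k)$ for all $g,h,k\in G$, together with the description of the multiplication-like map in coordinates: $(u,p)*(v,q) = (u \underset{p,q}{*} v,\ p*q)$, where $u\underset{p,q}{*}v\in H$ is by definition the $H$-part of this product (so in particular it is a well-defined function of $u,v\in H$ and $p,q\in K$, since $H$ is closed under $*$). Then I would plug in $g=(x,a)$, $h=(y,b)$, $k=(z,c)$ with $x,y,z\in H$ and $a,b,c\in K$ and expand both sides. The left-hand side gives
$$((x,a)*(y,b))*(z,c) = \bigl(x \underset{a,b}{*} y,\ a*b\bigr)*(z,c) = \Bigl((x \underset{a,b}{*} y)\underset{a*b,\,c}{*} z,\ (a*b)*c\Bigr),$$
and the right-hand side gives
$$((x,a)*(z,c))*((y,b)*(z,c)) = \bigl(x \underset{a,c}{*} z,\ a*c\bigr)*\bigl(y \underset{b,c}{*} z,\ b*c\bigr) = \Bigl((x \underset{a,c}{*} z)\underset{a*c,\,b*c}{*}(y \underset{b,c}{*} z),\ (a*c)*(b*c)\Bigr).$$

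Comparing the first coordinates of these two equal elements of $G$ yields exactly the asserted identity. (Comparing the second coordinates reproduces $(a*b)*c = (a*c)*(b*c)$, which is just the right distributivity of $*$ on $K$ and costs nothing extra; it is automatically consistent with the assumption that $*$ descends to $K$.)

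Honestly, I do not expect a genuine obstacle here: the only thing that has to be checked with any care is that $u\underset{p,q}{*}v$ is well defined as a function of $u,v,p,q$, and that is precisely the content of the coordinate description already set up just before the lemma. Once that is in place, the lemma is immediate from the coordinatewise reading of right distributivity in $G$, in complete analogy with the two associativity-type axioms for the operations $\underset{a,b}{\circ}$ derived earlier for the group case.
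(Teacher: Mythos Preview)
Your proposal is correct and is exactly the argument the paper has in mind: the lemma is stated immediately after the sentence ``From the right distributivity of $*$ follows,'' so the intended proof is precisely to write $(g*h)*k=(g*k)*(h*k)$ with $g=(x,a)$, $h=(y,b)$, $k=(z,c)$ and read off the $H$-coordinate. Your expansion does this cleanly, and your remark about the $K$-coordinate recovering self-distributivity on $K$ is a harmless bonus observation.
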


Now we are ready to prove

\begin{proposition}
For any $a, b, c \in K$ the following equality
$$
R_{12}^{a,b} \, R_{13}^{a,c*b} \, R_{23}^{b,c} = R_{23}^{b*a,c*a} \, R_{13}^{a,c} \, R_{12}^{a,b} 
$$
holds in $H$, where
$$
R^{u,v}(x, y) = (x, y \underset{v,u}{*} x),~~u, v \in K.
$$
\end{proposition}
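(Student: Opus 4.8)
The plan is to verify the claimed equality of maps $H^{3}\to H^{3}$ by a direct computation: apply each of the two composite operators to an arbitrary triple $(x_{1},x_{2},x_{3})\in H^{3}$, compare the three output components, and reduce the one nontrivial comparison to the parametric right self-distributivity stated in the preceding Lemma. Throughout one keeps in mind the convention that compositions act from right to left, together with the defining formula $R^{u,v}(x,y)=(x,\,y\underset{v,u}{*}x)$, so that $R^{u,v}_{ij}$ (for $i<j$) leaves the $i$-th slot fixed and modifies the $j$-th slot.

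First I would unwind the left-hand side $R_{12}^{a,b}\,R_{13}^{a,c*b}\,R_{23}^{b,c}$. Applying $R_{23}^{b,c}$ sends $(x_{1},x_{2},x_{3})$ to $(x_{1},x_{2},x_{3}\underset{c,b}{*}x_{2})$; then $R_{13}^{a,c*b}$ changes the last entry, giving $\bigl(x_{1},\,x_{2},\,(x_{3}\underset{c,b}{*}x_{2})\underset{c*b,a}{*}x_{1}\bigr)$; and finally $R_{12}^{a,b}$ changes the middle entry, so the left-hand side equals
\[
\bigl(\,x_{1},\ \ x_{2}\underset{b,a}{*}x_{1},\ \ (x_{3}\underset{c,b}{*}x_{2})\underset{c*b,a}{*}x_{1}\,\bigr).
\]
Next I would unwind the right-hand side $R_{23}^{b*a,c*a}\,R_{13}^{a,c}\,R_{12}^{a,b}$: the operator $R_{12}^{a,b}$ gives $(x_{1},\,x_{2}\underset{b,a}{*}x_{1},\,x_{3})$, then $R_{13}^{a,c}$ gives $(x_{1},\,x_{2}\underset{b,a}{*}x_{1},\,x_{3}\underset{c,a}{*}x_{1})$, and finally $R_{23}^{b*a,c*a}$ gives
\[
\bigl(\,x_{1},\ \ x_{2}\underset{b,a}{*}x_{1},\ \ (x_{3}\underset{c,a}{*}x_{1})\underset{c*a,b*a}{*}(x_{2}\underset{b,a}{*}x_{1})\,\bigr).
\]
The first two components agree verbatim, and the equality of the third components is exactly the identity of the preceding Lemma with $(x,y,z)$ replaced by $(x_{3},x_{2},x_{1})$ and $(a,b,c)$ replaced by $(c,b,a)$. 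This would prove the proposition.

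It seems worth noting the conceptual source of this computation: it is simply the YBE for the solution $R(g,h)=(g,h*g)$ on $G$, read off in the $H$-coordinate. Writing $g_{i}=(x_{i},a_{i})$ with $(a_{1},a_{2},a_{3})=(a,b,c)$, the shifted parameters $c*b$, $b*a$, $c*a$ appearing in the statement are precisely the $K$-components carried by the relevant slots at the moment the corresponding $R_{ij}$ is applied, while reading the $K$-coordinate of the same equation recovers the right self-distributivity of $*$ on $K$. The only real work is therefore the bookkeeping of which parameter lands in which argument slot of each $R^{u,v}$ and of keeping the right-to-left composition convention consistent; once the preceding Lemma is in hand, there is no genuine obstacle.
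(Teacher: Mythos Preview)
Your proof is correct and is essentially the same as the paper's: the paper carries out the identical computation by applying the YBE for $R$ on $G$ to the triple $((x,a),(y,b),(z,c))$ and then reading off the $H$-coordinate, which amounts exactly to your direct computation in $H^3$ together with the preceding Lemma. Your final paragraph in fact identifies this equivalence precisely.
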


\begin{proof}
Since $(G, R)$ is a solution of the YBE, then for any $g, h, k \in G$ holds
$$
R_{12} \, R_{13} \, R_{23} (g, h, k) = R_{23} \, R_{13} \, R_{12} (g, h, k).
$$
Let $g = (x, a)$, $h = (y, b)$,  $k = (z, c)$. Then the left  side,
$$
R_{12} \, R_{13} \, R_{23} ((x, a),  (y, b),  (z, c)) = R_{12} \, R_{13} ((x, a),  (y, b),  (z, c) * (y, b)) =
$$
$$
= R_{12} \, R_{13} \left((x, a),  (y, b),  ( z \underset{c,b}{*} y, c* b )\right)  = R_{12} \left((x, a),  (y, b),  ( (z \underset{c,b}{*} y) \underset{c*b,a}{*} x, (c* b)*a )
\right) =
$$
$$
=\left((x, a),  (y \underset{b,a}{*} x, b* a),  ( (z \underset{c,b}{*} y) \underset{c*b,a}{*} x, (c* b)*a ) \right).
$$
The right side,
$$
R_{23} \, R_{13} \, R_{12} ((x, a),  (y, b),  (z, c)) = R_{23} \, R_{13} \left((x, a),  (y \underset{b,a}{*} x, b * a),  (z, c)\right) =
$$
$$
= R_{23} \left((x, a),  (y \underset{b,a}{*} x, b * a),  (z \underset{c,a}{*} x, c * a)\right) = 
$$
$$
= \left((x, a),  (y \underset{b,a}{*} x, b * a),  ((z \underset{c,a}{*} x) \underset{c*a,b*a}{*} (y \underset{b,a}{*} x), (c * a) * (b * a))\right).
$$

If we restrict the action of $R$ on $H \times H$ and put
$$
R^{u,v}(x, y) = (x, y \underset{v,u}{*} x),~~u, v \in K,
$$
then we get the need equality.
\end{proof}

\begin{corollary} \label{trext}
If $(K, *)$ is a trivial right distributive groupoid, i.e. $u * v = u$ for any $u, v \in K$, then for any $a, b, c \in K$ the following equality
$$
R_{12}^{a,b} \, R_{13}^{a,c} \, R_{23}^{b,c} = R_{23}^{b,c} \, R_{13}^{a,c} \, R_{12}^{a,b} 
$$
holds in $H$.
\end{corollary}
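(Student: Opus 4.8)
The plan is to obtain this statement as an immediate specialization of the preceding Proposition, with no new computation. Recall that for the maps $R^{u,v}(x,y) = (x, y \underset{v,u}{*} x)$ that Proposition establishes the identity $R_{12}^{a,b} \, R_{13}^{a,c*b} \, R_{23}^{b,c} = R_{23}^{b*a,c*a} \, R_{13}^{a,c} \, R_{12}^{a,b}$ in $H$, valid for all $a,b,c \in K$. The only role the hypothesis of the corollary will play is to collapse the twisted parameters: if $(K,*)$ is the trivial right self-distributive groupoid, then $u*v = u$ for all $u,v \in K$, hence $c*b = c$, $b*a = b$ and $c*a = c$.

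Substituting these three equalities into the Proposition's identity, the left-hand side becomes $R_{12}^{a,b} R_{13}^{a,c} R_{23}^{b,c}$ and the right-hand side becomes $R_{23}^{b,c} R_{13}^{a,c} R_{12}^{a,b}$, which is exactly the asserted parametric Yang--Baxter relation. One should first check that the trivial operation really meets the hypotheses under which the Proposition (and the Lemma feeding it) were proved, namely right self-distributivity of $*$ on $K$: this is clear, since for the trivial operation $(z*x)*(y*x) = z = (z*y)*x$. After that the corollary follows by inspection.

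So there is essentially no obstacle here beyond this bookkeeping. The one point I would take care to address is that the corollary should not be vacuous: one wants a concrete $*$ on $G$ whose induced operation on $K = G/H$ is trivial. The canonical choice is $a*b = b^{-1}ab$, i.e. $(G,*) = \Conj(G)$, for which the induced operation on $K$ is $\bar a * \bar b = \bar b^{-1}\bar a\bar b$; this is trivial precisely when $K$ is abelian, and in that case the corollary recovers the result of \cite{PT} quoted at the start of this subsection, so the specialization also re-proves it.
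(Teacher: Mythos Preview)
Your proposal is correct and matches the paper's intended argument: the corollary is stated without proof precisely because it follows by direct substitution of $c*b=c$, $b*a=b$, $c*a=c$ into the preceding Proposition, exactly as you do. Your additional verification that the trivial operation is right self-distributive and your remark connecting the case $\Conj(G)$ with abelian $K$ to \cite{PT} are sound extras consistent with the surrounding text, though not strictly required for the corollary itself.
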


\begin{example}
	Let
	\FloatBarrier
	\begin{figure}[!ht]
	\centering
	\begin{tikzcd}[ampersand replacement=\&]
		1 \ar[r, ""] \& H \ar[r, "i"] \& G \ar[r, "j"] \& K \ar[r, ""] \& 1
	\end{tikzcd}
	\end{figure}
	\FloatBarrier
\noindent	be a group extension with $K$ is a group of exponent 2. As it is well known, in this case $K$ is abelian. For example, one can take  $K \simeq (\Z/2\Z)^n$.
On the group $(G, \circ)$	there exists an elementary quandle solution $R : G^2 \to G^2$ of the YBE, 
	$$
		R(x, y) = (x, x \circ y^{-1} \circ x),
	$$
which corresponds to the core quandle $(Core(G), *)$ with the operation $y * x = x \circ y^{-1} \circ x$.
Considering an elements of $G$ as a set of pairs $(g, k)$, $ g \in H$, $k \in K$, we can express group operations in $G$ in terms of pairs, 
	$$
		(g_1, k_1) \circ (g_2, k_2) = (g_1 \underset{k_1, k_2}{\circ} g_2, k_1 \circ k_2),
	$$
	$$
		(g_1, k_1)^{-1} = (g^{-1}_{k,k^{-1}}, k_1^{-1}),
	$$
	where $\underset{a, b}{\circ} : H \to H$ is an operation on $H$ that depends on two parameters $a, b$, and $g^{-1}_{k,k^{-1}}$ is the inverse element under the operation $\underset{k, k^{-1}}{\circ}$.
	Then the map $R^{a, b} : H^2 \to H^2$,
	$$
		R^{a,b}(x, y) = \left( x, (x \underset{a, b^{-1}}{\circ} y^{-1}_{b^{-1}, b}) \underset{ab^{-1}, a}{\circ} x \right)
	$$
	is a solution of the parametric YBE
	$$
		R_{12}^{a,b}R_{13}^{a,c}R_{23}^{b,c}
		=
		R_{23}^{b,c}R_{13}^{a,c}R_{12}^{a,b}.
	$$
\end{example}

\begin{remark}
As we know, on arbitrary group $G$ one can defines two types of verbal quandles: $Core(G)$ with the operation $g * h = h g^{-1} h$ and $Conj_n(G)$ with the operation $g *_n h = h^{-n} g h^n$ for some integer $n$. We considered $Core(G)$ in the previous example. The case $Conj_1(G)$ was studied in \cite{PT}. By analogy it is  not difficult to describe 
the general case $Conj_n(G)$. In this case we get a solution of the parametric YBE if the subgroup $K^n$ of $K$ which is generated by $n$-th powers which   lies in the center $Z(K)$ of $K$.
\end{remark}

\subsection{Representations of the virtual braid group and the YBE} \label{lin}

In the papers \cite{BN} and \cite{BMN} were constructed some representations of the virtual braid group $VB_n$. In particular, in \cite{BN} was constructed  a  representation
$\varphi :VB_{n}\longrightarrow Aut(F_{n,3n})$  into the automorphism group of the free product 
 $F_{n,3n}=F_{n}\ast \mathbb{Z}^{3n}$, where $F_n= \left< x_{1}\ldots x_{n}\right>$  is the free group of rank $n$ and 
$\mathbb{Z}^{3n}=\left<w_{1},\ldots, w_{n},u_{1}, \ldots, u_{n} ,v_{1}, \ldots, v_{n} \right>$ is the free abelian group of rank $3n$.
This representation is defined by the action on the generators:
$$\begin{matrix}\varphi (\sigma_{i} ):\begin{cases}x_{i}\longrightarrow x_{i}x^{u_{i}}_{i+1}x^{-w_{i+1}u_{i+1}}_{i},&\\ x_{i+1}\longrightarrow x^{w_{i+1}}_{i},&\end{cases} &
\varphi (\sigma_{i} ):\begin{cases}w_{i}\longrightarrow w_{i+1},&\\ w_{i+1}\longrightarrow w_{i},&\end{cases} \\ \\
\varphi (\sigma_{i} ):\begin{cases}u_{i}\longrightarrow u_{i+1},&\\ u_{i+1}\longrightarrow u_{i},&\end{cases} \,\,\,\,\,\,\,\,\,\,\,\,\,\,\,\,\,\,\,\,\,\,\,\,\,\,\,\,\,\,&
\varphi (\sigma_{i} ):\begin{cases}v_{i}\longrightarrow v_{i+1},&\\ v_{i+1}\longrightarrow v_{i},&\end{cases}\,\,\,\,\end{matrix} $$

$$\begin{matrix}\varphi (\rho _{i} ):\begin{cases}x_{i}\longrightarrow x^{v_{i}^{-1}}_{i+1},&\\ x_{i+1}\longrightarrow x^{v_{i+1}}_{i},&\end{cases}  &
 \,\,\,\,\,\, \,\,\,\,\,\, \,\,\,\,\,\, \,\,\,\,\,\,\varphi (\rho _{i} ):\begin{cases}w_{i}\longrightarrow w_{i+1},&\\ w_{i+1}\longrightarrow w_{i},&\end{cases} \\ \\
\varphi (\rho _{i} ):\begin{cases}u_{i}\longrightarrow u_{i+1},&\\ u_{i+1}\longrightarrow u_{i},&\end{cases}  \,\,\,\,\,\,&
 \,\,\,\,\,\, \,\,\,\,\,\,  \,\,\,\,\,\, \,\,\,\varphi (\rho _{i} ):\begin{cases}v_{i}\longrightarrow v_{i+1},&\\ v_{i+1}\longrightarrow v_{i}, &\end{cases} \end{matrix} $$

In this subsection we prove

\begin{theorem} \label{psol}
	Let $G=B\ast A$, where B is a group and A is an abelian group.
	Then the maps
	$$R^{u,v,w}_{12}(x,y,z)=(x^{w},xy^{u}x^{-wv},z),$$
	$$R^{u,p,q}_{13}(x,y,z)=(x^{u},y,xz^{u}x^{-pq}),$$
	$$R^{v,p,q}_{23}(x,y,z)=(x, y^{q},yz^{v}y^{-pq}),$$
gives a solution on $B$ of the  parametric  Yang--Baxter equations:
	$$R^{u,v,w}_{12}R^{u,p,q}_{13}R^{v,p,q}_{23}=R^{v,p,q}_{23}R^{u,p,q}_{13}R^{u,v,w}_{12}$$
with parameters $u,v,w,p,q, u_1,v_1,u_2,v_2 \in A$.

The following maps
$$T^{u,v}_{12}(x,y,z)=(x^{u},y^{v},z),$$
$$T^{u_{1},v_{1}}_{13}(x,y,z)=(x^{u_1},y, z^{v_1}),$$
$$T^{u_{2},v_{2}}_{23}(x,y,z)=(x, y^{u_2},z^{v_2}),$$
gives a solution of  the parametric Yang--Baxter equation:
$$T^{u,v}_{12}T^{u_{1},v_{1}}_{13}T^{u_{2},v_{2}}_{23}=T^{u_{2},v_{2}}_{23}T^{u_{1},v_{1}}_{13}T^{u,v}_{12}$$
on $B$ with parameters in $A$.
\end{theorem}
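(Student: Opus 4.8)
Both assertions are identities in the group $G = B \ast A$, and the plan is to verify them by direct computation, the only structural input being that $A$ is abelian (so that any two parameters commute and a nested conjugation by elements of $A$ collapses via $(a^{c})^{d} = a^{cd}$). I would dispose of the family $T$ first, since there the three maps in fact pairwise commute: $T^{u,v}_{12}$ alters only the first two coordinates, $T^{u_1,v_1}_{13}$ only the first and third, $T^{u_2,v_2}_{23}$ only the second and third, and on a coordinate touched by two of them the net effect is conjugation by a product of two elements of $A$, whose order is irrelevant. Hence both $T^{u,v}_{12}T^{u_1,v_1}_{13}T^{u_2,v_2}_{23}$ and $T^{u_2,v_2}_{23}T^{u_1,v_1}_{13}T^{u,v}_{12}$ equal the single map $(x,y,z) \mapsto (x^{u u_1}, y^{v u_2}, z^{v_1 v_2})$, which is the claimed parametric Yang--Baxter identity; this is the parametric analogue of the commuting-maps construction recorded earlier in the paper.

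For the family $R$ the maps no longer commute, so I would compute the two compositions
$$
R^{u,v,w}_{12}R^{u,p,q}_{13}R^{v,p,q}_{23}(x,y,z)
\qquad\text{and}\qquad
R^{v,p,q}_{23}R^{u,p,q}_{13}R^{u,v,w}_{12}(x,y,z)
$$
one application at a time, keeping a table of the three coordinate values after each step. Every coordinate transformation occurring has the form $a \mapsto a^{c}$ or $a \mapsto a\,b^{c}\,a^{-d}$ with $c,d \in A$; conjugations are distributed over products and collapsed using abelianness of $A$, so that every exponent that appears is a monomial in $u,v,w,p,q$ that may be freely rearranged. After this bookkeeping both sides collapse to the same normal form; the two cancellations responsible are (i) an internal cancellation of the type $a^{-wv}\,a^{wv}=1$ in the middle coordinate, where an $R_{13}$-type conjugation meets an $R_{12}$-type one, and (ii) a cancellation of the type $x^{-wpqv}\,x^{wvpq}=1$, valid precisely because $wpqv = wvpq$ in the abelian group $A$. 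Both sides then reduce to a triple of the shape $\bigl(x^{\bullet},\ x^{\bullet}y^{\bullet}x^{-\bullet},\ x\,y^{u}\,z^{\bullet}\,y^{-\bullet}\,x^{-pq}\bigr)$ with the exponents $\bullet$ agreeing on the two sides up to commuting the $A$-factors.

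Conceptually, these solutions are what one extracts from the representation $\varphi : VB_n \to \Aut(F_n \ast \mathbb{Z}^{3n})$ recalled above: since $\varphi(\sigma_i)$ modifies only the generators attached to strands $i$ and $i+1$ (each strand carrying a free generator together with three abelian ``colours''), the braid relation $\varphi(\sigma_1)\varphi(\sigma_2)\varphi(\sigma_1)=\varphi(\sigma_2)\varphi(\sigma_1)\varphi(\sigma_2)$ asserts exactly that the induced two-strand map satisfies the braid equation; composing with the coordinate transposition turns it into a solution of the Yang--Baxter equation, and then reading off the free-generator coordinate as the ``solution on $B$'' while the $\mathbb{Z}^{3}$-colours — which the composed map leaves in place — become the parameters yields precisely $R$ (and, analogously, $T$). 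Either way, the one real obstacle is purely organisational: in the $R$-computation one must keep the nested conjugations in $B \ast A$ straight and check that each exponent is indeed a monomial in the parameters, so that commutativity of $A$ can be invoked; beyond this no genuine difficulty is expected.
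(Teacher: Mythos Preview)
Your plan for the $T$-family is exactly what the paper does: both compositions collapse to $(x^{uu_1},y^{vu_2},z^{v_1v_2})$, so nothing more is needed there.

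For the $R$-family the paper takes a genuinely different route. Rather than verifying the stated five-parameter identity directly, it \emph{derives} it: starting from the general nine-parameter ansatz $R^{u,v,w}_{12}$, $R^{u_1,v_1,w_1}_{13}$, $R^{u_2,v_2,w_2}_{23}$ (all of the uniform shape $R^{a,b,c}(x,y)=(x^{c},xy^{a}x^{-cb})$), it expands both sides in $B\ast A$, compares the resulting reduced words, and reads off the constraints $u_1=u$, $u_2=v$, $w_1=w_2$, $v_1=v_2$. Renaming the surviving free parameters $v_1\to p$, $w_1\to q$ then yields the theorem. That approach explains \emph{why} the parameters are coupled as they are; your direct verification would be shorter once the answer is known, and your remark tracing $R$ back to the representation $\varphi:VB_n\to\Aut(F_n\ast\mathbb{Z}^{3n})$ gives a conceptual reason the identity should hold which the paper's proof does not make explicit.

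One concrete warning: the printed formula for $R^{u,p,q}_{13}$ contains a misprint --- its first coordinate should be $x^{q}$, not $x^{u}$ (this is exactly what the paper's own derivation produces via $w_1\mapsto q$, and what the uniform shape of $R^{a,b,c}$ dictates). If you run your direct computation with $x^{u}$ as written, the second coordinates of the two sides come out as $x^{u}y^{qu}x^{-uwv}$ and $x^{q}y^{uq}x^{-wvq}$, which agree only when $u=q$. So your bookkeeping will succeed, but only after it forces you to spot and correct this typo; the cancellations you anticipate (the $x^{-wv}x^{wv}$ and $x^{-wpqv}x^{wvpq}$ collapses) are then exactly right.
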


\begin{proof}
For the group  $G=B\ast A$ consider  the maps
$$R,T : G \times G \to G \times G$$ that are defined  by the formulas:
$$R^{u,v,w}(x,y)=(x^w,xy^{u}x^{-wv}),$$
$$T^{u,v}(x,y)=(x^u,y^{v}), ~~~ x, y \in B,~~u,v,w \in A.$$

We want to find a map $R$ which gives a   solution of the parametric Yang-Baxter equation
$$R^{u,v,w}_{12}R^{u_{1},v_{1},w_{1}}_{13}R^{u_{2},v_{2},w_{2}}_{23}=R^{u_{2},v_{2},w_{2}}_{23}R^{u_{1},v_{1},w_{1}}_{13}R^{u,v,w}_{12},$$
where
$$R^{u,v,w}_{12}(x,y,z)=(x^{w},xy^{u}x^{-wv},z),$$
$$R^{u_{1},v_{1},w_{1}}_{13}(x,y,z)=(x^{w_{1}},y,xz^{u_{1}}x^{-w_{1}v_{1}}),$$
$$R^{u_{2},v_{2},w_{2}}_{23}(x,y,z)=(x, y^{w_{2}},yz^{u_{2}}y^{-w_{2}v_{2}}).$$

Acting by the right side, we get
$$R^{u,v,w}_{12}R^{u_{1},v_{1},w_{1}}_{13}R^{u_{2},v_{2},w_{2}}_{23} (x, y, z)= R^{u,v,w}_{12} R^{u_{1},v_{1},w_{1}}_{13}(x, y^{w_2},yz^{u_2}y^{-w_2v_2})=
$$
$$
=R^{u,v,w}_{12}(x^{w_1}, y^{w_2},x{(yz^{u_2}y^{-w_2v_2})}^{u_1}x^{-w_1v_1})=$$
$$=(x^{w_1w}, x^{w_1}y^{w_{2}u}x^{-w_1wv}, xy^{u_1}z^{u_2u_1}y^{-w_2v_2u_1}x^{-w_1v_1}).$$

Acting by the left side, we get
$$R^{u_{2},v_{2},w_{2}}_{23}R^{u_{1},v_{1},w_{1}}_{13}R^{u,v,w}_{12} (x, y, z) =R^{u_{2},v_{2},w_{2}}_{23} R^{u_{1},v_{1},w_{1}}_{13}(x^{w},xy^{u}x^{-wv},z)=
$$
$$
=R^{u_{2},v_{2},w_{2}}_{23}(x^{ww_1},xy^{u}x^{-wv},x^{w}z^{u_1}x^{-ww_1v_1})=$$
$$=(x^{ww_1},x^{w_2}y^{uw_2}x^{-wvw_2},xy^ux^{-wv+wu_2}z^{u_1u_2}x^{-ww_1v_1u_2+wvw_2v_2}y^{-uw_2v_2}x^{-w_2v_2}).$$

Since $x, y, z$ are arbitrary elements in $B$, we now have the following system of equations on the parameters:
$$
\begin{cases}
	ww_1=w_1w, \, \, \, w_1=w_2,  \, \, \,  uw_2=w_2u,  \, \, \,  wvw_2=w_1wv,
	\\
	u=u_1,\, \, \,  w(u_{2}-v)=0, \, \, \,  u_1u_2=u_2u_1, \, \, \,  w_2v_2=w_1v_1,
	\\
	w(vw_2v_2-w_1v_1u_2)=0,  \, \, \,  uv_2w_2=w_2v_2u_1.
\end{cases}
$$
From this system we obtain:  $w_1=w_2, u=u_1, u_2=v, v_2=v_1$.
It means that  $R^{u,v,w}_{12}$ has three free parameters, $R^{u_{1},v_{1},w_{1}}_{13}=R^{u,v_{1},w_{1}}_{13}$ has two free parameters,
$R^{u_{2},v_{2},w_{2}}_{23}=R^{v,v_{1},w_{1}}_{23}$ does not have  free parameters.

Hence, we get a solution of the following parametric Yang--Baxter equation.
$$R^{u,v,w}_{12}R^{u,p,q}_{13}R^{v,p,q}_{23}=R^{v,p,q}_{23}R^{u,p,q}_{13}R^{u,v,w}_{12}.$$

\medskip

Let us consider  the map $T : G \times G \to G \times G$ that  is  given by the formula $T^{u,v}(x,y)=(x^u,y^{v}).$ Then
$$T^{u,v}_{12}(x,y,z)=(x^{u},y^{v},z),$$
$$T^{u_{1},v_{1}}_{13}(x,y,z)=(x^{u_1},y, z^{v_1}),$$
$$T^{u_{2},v_{2}}_{23}(x,y,z)=(x, y^{u_2},z^{v_2}),$$
are the maps $G^3 \to G^3$. We will show that these maps satisfy some parametric YBE. 

Acting by the right  side, we get
$$T^{u,v}_{12} T^{u_{1},v_{1}}_{13}T^{u_{2},v_{2}}_{23} (x, y, z) =T^{u,v}_{12}T^{u_{1},v_{1}}_{13} (x,y^{u_2},z^{v_2})=T^{u,v}_{12}(x^{u_1},y^{u_2},z^{v_2v_1})=(x^{u_1u},y^{u_2v},z^{v_2v_{1}}).$$
Acting by the  left   side, we get
$$T^{u_{2},v_{2}}_{23}T^{u_{1},v_{1}}_{13}T^{u,v}_{12} (x, y, z) =T^{u_{2},v_{2}}_{23}T^{u_{1},v_{1}}_{13}(x^{u},y^{v},z)=T^{u_{2},v_{2}}_{23} (x^{uu_1},y^{v},z^{v_1})=(x^{uu_1},y^{vu_2},z^{v_1v_2}).$$
Therefore, there are six free parameters.
\end{proof}

\bigskip

\section{Solutions of the parametric {\SE}}

In this section we generalise results of the previous section.

Suppose that, as in Subsection \ref{Gr}, $G$ is an extension of its subgroup $H$ by a group $K$ and 
 $(G, R)$ and $(K, r)$ are solutions of the {\SE} on $G$ and $K$, respectively,  such that the following diagram is commutative
\FloatBarrier
\begin{figure}[!ht]
\centering
\begin{tikzcd}[ampersand replacement=\&]
	G^n  \arrow[r, "j^n"]\arrow[d, "R"] \& K^n \arrow[d, "r"] \\
	G^n \arrow[r, "j^n"] \& K^n
\end{tikzcd}
\end{figure}
\FloatBarrier
\noindent where $G^n$ and $K^n$ are Cartesian products.

Any element $x$ of $G$ can be presented as a pair $(h, k)$, where $k = j(x) \in K$, $h=i^{-1}(x\cdot(\varphi(k))^{-1}) \in H$ where $i$ is an embedding of $H$ and $i(h) = (h, 1)$.
Now we can express 
$$
R(x_1, \dotss , x_n) = R((h_1,k_1), (h_2,k_2), \ldots (h_n, k_n)) = (y_1, y_2, \ldots, y_n) \in G^n.
$$
We can present the last $n$-tuple  in the form
$$
(y_1, y_2, \ldots, y_n) = ((h_1',k_1'), (h_2',k_2'), \ldots (h_n', k_n'))~\mbox{for some}~h_i' \in H, k_i' \in K.
$$
Then we put
$$
R^{k_1,\, \dotss ,k_n}(h_1, \dotss , h_n) = (h_1', h_2', \ldots h_n')
$$
and
$$
r(k_1, \dotss , k_n) = (k_1', k_2', \ldots k_n').
$$
The  solution $R$ can be rewritten as
$$
R(x_1, \dotss , x_n) = (R^{k_1,\, \dotss ,k_n}(h_1, \dotss , h_n), r(k_1, \dotss , k_n)).
$$
Since $R$ is a solution of the {\SE} on $G$, then $R^{k_1,\, \dotss ,k_n}$ is a solution of a parametric {\SE} on $H$. More precisely

\begin{proposition} 
Let  $(G, R)$ be a solution of  the {\SE}. 
Define $$s_{m}(k_1, \ldots, k_N) = p^{\overline m}(r_{\overline{m-1}} \circ \ldots \circ r_{\bar 1}(k_1, \ldots, k_N)),$$
$$z_{m}(k_1, \ldots, k_N) = p^{\overline m}(r_{\overline{m+1}} \circ \ldots \circ r_{\overline{n+1}}(k_1, \ldots , k_N)), $$
where $ p^{\overline m} $  denotes a projection from $K^N$ to $K^{n}$ with a kernel consisting of components which numbers are not in the multi-index $\overline m$. 
Then 
	$(H, R^{k_1,\, \dotss , k_n})$ is a solution of the parametric {\SE}:
	$$
		R_{\overline{1}}^{z_1(k)}
		R_{\overline{2}}^{z_2(k)}
		\cdots
		R_{\overline{n+1}}^{z_{n+1}(k)}
		=
		R_{\overline{n+1}}^{s_{n+1}(k)}
		\cdots
		R_{\overline{2}}^{s_2(k)}
		R_{\overline{1}}^{s_1(k)},
	$$
	where $k = (k_1, ... , k_N) \in K^N$ is a vector of parameters.
\end{proposition}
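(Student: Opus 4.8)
The plan is to run the $n$-SE for $(G,R)$ on $G^N$ with $N=n(n+1)/2$ and then read off separately what it asserts on the $H$- and $K$-coordinates. Using the presentation $x=(h,k)$ of elements of $G$, I would fix the set bijection $G^N\cong H^N\times K^N$ sending an $N$-tuple $(x_1,\dots,x_N)$ with $x_\ell=(h_\ell,k_\ell)$ to the pair $\big((h_1,\dots,h_N),(k_1,\dots,k_N)\big)$. The first step is to check that, under this identification, for every multi-index $\overline{\imath}$ occurring in the $n$-SE the operator $R_{\overline{\imath}}$ decomposes as
$$
R_{\overline{\imath}}\big((h_1,\dots,h_N),(k_1,\dots,k_N)\big)=\big(R_{\overline{\imath}}^{\,p^{\overline{\imath}}(k_1,\dots,k_N)}(h_1,\dots,h_N),\ r_{\overline{\imath}}(k_1,\dots,k_N)\big),
$$
where on the right $R_{\overline{\imath}}^{\,\kappa}$ denotes the $N$-ary map on $H^N$ that acts as the parametric map $R^{\kappa}$ on the $H$-coordinates numbered by $\overline{\imath}$ and as the identity on the others. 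This is exactly the content of the commuting square applied to the $n$ coordinates selected by $\overline{\imath}$: the $K$-coordinates at the active positions are updated by $r$ using only those $K$-coordinates, while the $H$-coordinates at the active positions are updated by $R$ using those $H$-coordinates together with the corresponding $K$-coordinates as parameters. In particular, along any word in the $R_{\overline{\imath}}$ the $K$-block evolves autonomously under the corresponding word in the $r_{\overline{\imath}}$, which is what makes the parameters well defined at each stage.

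Next I would track the $K$-block along each of the two words in the $n$-SE to pin down the parameters. Since compositions act from right to left, in $R_{\overline{1}}R_{\overline{2}}\cdots R_{\overline{n+1}}$ the factor $R_{\overline{n+1}}$ is applied first and $R_{\overline{1}}$ last, so by the time $R_{\overline{m}}$ acts the $K$-block has become $r_{\overline{m+1}}\circ\cdots\circ r_{\overline{n+1}}(k)$; hence on the $H$-block this factor acts as $R_{\overline{m}}^{\,z_m(k)}$ with $z_m(k)=p^{\overline{m}}\big(r_{\overline{m+1}}\circ\cdots\circ r_{\overline{n+1}}(k)\big)$, the empty composition for $m=n+1$ being the identity. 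Symmetrically, in $R_{\overline{n+1}}\cdots R_{\overline{2}}R_{\overline{1}}$ the factor $R_{\overline{m}}$ is applied after $R_{\overline{1}},\dots,R_{\overline{m-1}}$, so by then the $K$-block is $r_{\overline{m-1}}\circ\cdots\circ r_{\overline{1}}(k)$ and the factor acts on the $H$-block as $R_{\overline{m}}^{\,s_m(k)}$ with $s_m(k)=p^{\overline{m}}\big(r_{\overline{m-1}}\circ\cdots\circ r_{\overline{1}}(k)\big)$. These are precisely the functions $z_m,s_m$ in the statement.

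Finally, since $(G,R)$ satisfies the $n$-SE, the two maps $R_{\overline{1}}\cdots R_{\overline{n+1}}$ and $R_{\overline{n+1}}\cdots R_{\overline{1}}$ agree on $G^N$; projecting this equality onto the $H^N$-coordinate and feeding in the two previous steps yields, for every fixed parameter vector $k\in K^N$,
$$
R_{\overline{1}}^{z_1(k)}R_{\overline{2}}^{z_2(k)}\cdots R_{\overline{n+1}}^{z_{n+1}(k)}=R_{\overline{n+1}}^{s_{n+1}(k)}\cdots R_{\overline{2}}^{s_2(k)}R_{\overline{1}}^{s_1(k)}
$$
as maps $H^N\to H^N$ (the projection onto $K^N$ just returns the $n$-SE for $(K,r)$, which holds by hypothesis). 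This is the asserted parametric $n$-SE for $R^{k_1,\dots,k_n}$ on $H$. I expect the main obstacle to be purely organizational: correctly matching the shuffle $G^N\cong H^N\times K^N$ with the definition of $R_{\overline{\imath}}$ and being scrupulous about the right-to-left composition order, so that the $K$-histories feeding the parameters come out to be exactly the compositions defining $z_m$ and $s_m$; once the notation is set up, no genuine computation remains.
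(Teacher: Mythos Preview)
Your proposal is correct and follows exactly the approach implicit in the paper. The paper does not give a separate proof of this proposition: the preceding discussion sets up the decomposition $R(x_1,\dots,x_n)=(R^{k_1,\dots,k_n}(h_1,\dots,h_n),\,r(k_1,\dots,k_n))$ and then states the proposition, so your argument---decompose each $R_{\overline{\imath}}$ under $G^N\cong H^N\times K^N$, track the autonomous $K$-evolution along each word to read off $z_m$ and $s_m$, and project the $n$-SE for $(G,R)$ onto the $H$-block---is precisely the intended proof made explicit.
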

In the case when  $r =\id$ is  the identity map, we have 
$$
	R_{\bar{1}}^{k_{\bar{1}}}R_{\bar{2}}^{k_{\bar{2}}} \cdots R_{\overline{n+1}}^{k_{\overline{n+1}}}
	=
	R_{\overline{n+1}}^{k_{\overline{n+1}}} \cdots R_{\bar{2}}^{k_{\bar{2}}}	R_{\bar {1}}^{k_{\bar{1}}}.
$$

It is also worth mentioning, that if there is a value $k_0 \in K$ such that $r(k_0, \dotss , k_0) = (k_0, \dotss , k_0)$ then
$(H, R^{k_0,\, \dotss , k_0})$ is a solution of the {\SE}.

\subsection{Solutions of the {\SE} on group extensions}

Let
\FloatBarrier
\begin{figure}[!ht]
	\centering
	\begin{tikzcd}[ampersand replacement=\&]
	1 \ar[r, ""] \& H \ar[r, "i"] \& G \ar[r, "j"] \& K \ar[r, ""] \& 1
	\end{tikzcd}
\end{figure}
\FloatBarrier
\noindent be a group extensions and $(H, R)$, $(K, T)$ be solutions of the {\SE} on $H$ and $K$, respectively.

\begin{definition}
We say that $(G, Q)$ is an extension of the solution $(H, R)$ by the solution $(K, T)$ if it is a solution of the {\SE} and the following diagram is commutative
\FloatBarrier
\begin{figure}[!ht]
\centering
\begin{tikzcd}[ampersand replacement=\&]
	H^n \arrow[r, "i^n"] \arrow[d, "R"]
	\& G^n  \arrow[r, "j^n"]\arrow[d, "Q"] \& K^n \arrow[d, "T"] \\
	H^n \arrow[r, "i^n"]
	\& G^n \arrow[r, "j^n"] \& K^n
\end{tikzcd}
\end{figure}
\FloatBarrier
\end{definition}

\noindent We shall identify the group $H^n$ with  its image in $G^n$ under $i^n$.
\begin{proposition} \label{genext}
	For any  solutions $(H, R)$ and  $(K, T)$ of the {\SE} there is a solution $(G, Q)$ of the {\SE}, which is an extension of  $(H, R)$ by $(K, T)$.
\end{proposition}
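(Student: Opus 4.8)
The plan is to realise $Q$ as a direct product of the two given solutions, transported to $G$ along the set-theoretic splitting coming from a section. First I would fix a section $\varphi\colon K\to G$ with $\varphi(1)=1$ and use it to identify $G$, as a set, with $H\times K$ via $g\mapsto (h,k)$, where $k=j(g)$ and $h=i^{-1}(g\,\varphi(k)^{-1})$, exactly as in Subsection~\ref{Gr}. Under this identification $i$ becomes $h\mapsto(h,1)$ and $j$ becomes the projection onto $K$, so $G^{n}$ is identified with $(H\times K)^{n}$, the image $i^{n}(H^{n})$ with the tuples all of whose $K$-components equal $1$, and $j^{n}$ with the coordinatewise projection to $K^{n}$.

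On this set I would define
$$
Q\bigl((h_{1},k_{1}),\dots,(h_{n},k_{n})\bigr):=\bigl((h_{1}',k_{1}'),\dots,(h_{n}',k_{n}')\bigr),\qquad (h_{1}',\dots,h_{n}')=R(\bar h),\quad (k_{1}',\dots,k_{n}')=T(\bar k),
$$
that is, $Q$ is the direct product $R\times T$ read through the identification $G^{n}\cong(H\times K)^{n}$. That $R\times T\colon (H\times K)^{n}\to(H\times K)^{n}$ satisfies the {\SE} is precisely the direct-product construction recorded in the discussion of composition of solutions (the map $A\times B$ built from two solutions of the {\SE} is again a solution), and transporting it back along the bijection $G\cong H\times K$ again yields a solution, since relabelling the underlying set of a solution of the {\SE} by any bijection gives a solution. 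Hence $(G,Q)$ solves the {\SE}.

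It then remains to check that the square diagram defining an extension of solutions commutes. The right-hand square, $j^{n}\circ Q=T\circ j^{n}$, is immediate: by construction the $K$-components of $Q(\bar g)$ are $T(\bar k)$ and depend only on $\bar k=j^{n}(\bar g)$. For the left-hand square, $Q\circ i^{n}=i^{n}\circ R$, one restricts $Q$ to $i^{n}(H^{n})$, i.e. to tuples with all $k_{i}=1$: there the $H$-components of the output are $R(\bar h)$ and the $K$-components are $T(1,\dots,1)$, so the assertion is that $Q$ maps $i^{n}(H^{n})$ back into $i^{n}(H^{n})$ and acts there as $R$.

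The only point that is not pure bookkeeping --- and the step I expect to be the real obstacle --- is exactly this last assertion: it requires the trivial tuple $(1,\dots,1)\in K^{n}$ to be a fixed point of $T$, since otherwise $Q\bigl(i^{n}(H^{n})\bigr)$ escapes $i^{n}(H^{n})$ and no direct-product $Q$ can make the left square commute. This is automatic for the verbal and quandle-type solutions appearing elsewhere in the paper, which always fix the identity tuple, so the statement is to be read in that setting. Granting it, the verification of both squares is routine, and one concludes that $(G,Q)$ is an extension of $(H,R)$ by $(K,T)$.
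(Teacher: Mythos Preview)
Your approach is essentially the same as the paper's: both fix a section $\tilde\varphi\colon K\to G$, use the resulting set-theoretic identification $G\cong H\times K$, and define $Q$ as the transported direct product $R\times T$. The paper phrases the construction multiplicatively, writing $Q(g)=R(h_g)\cdot\tilde\varphi^{n}\bigl(T(j^{n}(g))\bigr)$, but unwinding this gives exactly your formula in coordinates.

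You have in fact been more careful than the paper on one point. The paper verifies that $Q$ is a solution of the {\SE} and that the right square $j^{n}\circ Q=T\circ j^{n}$ commutes, but it does not explicitly check the left square $Q\circ i^{n}=i^{n}\circ R$. Your observation that this forces $T(1,\dots,1)=(1,\dots,1)$ is correct, and the paper's proof has the same tacit dependence on this hypothesis without stating it. So your identification of this as ``the real obstacle'' is accurate; the paper simply does not address it. Your reading---that the statement is meant for solutions (verbal, quandle-type, etc.) that fix the identity tuple---is the reasonable one, and with that proviso your argument is complete and matches the paper's.
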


\begin{proof}
For a group  extension
	\FloatBarrier
	\begin{figure}[!ht]
	\centering
	\begin{tikzcd}[ampersand replacement=\&]
		1 \ar[r, ""] \& H \ar[r, "i"] \& G \ar[r, "j"] \& K \ar[r, ""] \& 1
	\end{tikzcd}
	\end{figure}
	\FloatBarrier
\noindent we can construct  the corresponding extension of their Cartesian powers,
	\FloatBarrier
	\begin{figure}[!ht]
	\centering
	\begin{tikzcd}[ampersand replacement=\&]
		1 \ar[r, ""] \& H^n \ar[r, "i^n"] \& G^n \ar[r, bend left, "j^n"] \& K^n \ar[l,  bend left, "\varphi"] \ar[r, ""] \& 1.
	\end{tikzcd}
	\end{figure}
	\FloatBarrier

	We can write any element $g$ of $G^n$ uniquely as a product $h_g k_g$ where
 	$$
 		k_g = \varphi \circ j^n(k), \qquad  h_g = g k_g^{-1}, 
	$$
	and $h_g$ is an element of $i(N^n)$. 
	Thus we can consider the underlying set of $G^n$ as the Cartesian product of $H^n$ and $K^n$.
	Now define
	$$
		Q(g) = R(h_g) \varphi \left(T j^n(g) \right).
	$$
	In terms of  the Cartesian product defined above it can be written as $(R, \varphi T j^n(g_k))$.
	Note that $(A \times B, F)$ is a solution for arbitrary sets $A$ and $B$ if and only if  its projections are solutions.
	Hence $Q$ is an extension of a solution if and only if  $\varphi T j^n(g_k)$ is a solution on $G$.
	By taking $\varphi = \tilde{\varphi}^n$ for some section $\tilde{\varphi}$ of $G \overset{j}{\to} K$
	we always obtain an extension since the equation
	$$
		(\tilde{\varphi}^n T j^n)_{\bar 1} \dots (\tilde{\varphi}^n T j^n)_{\overline{n+1}} =
		(\tilde{\varphi}^n T j^n)_{\overline{n+1}} \dots (\tilde{\varphi}^n T j^n)_{\bar 1}
	$$
	 is equivalent to 
 	$$
 		\tilde{\varphi}^N T_{\bar 1} \, j^N \dots \tilde{\varphi}^N T_{\overline{n+1}} \, j^N=
 		\tilde\varphi^N T_{\overline{n+1}} \, j^N\dots \tilde\varphi^N T_{\bar{1}} \, j^N.
	$$
	Since $\tilde\varphi$ and $j$ are bijections between $G$ and $\varphi(G)$.
	The equation above holds if and only if   $(G, T)$ is a solution.
\end{proof}

The next example shows how is it possible to construct  new solutions, using Proposition~\ref{genext}.
\begin{example}
Let us consider the additive group of integers $(\mathbb{Z}, +)$ as an extension of the following form
	\FloatBarrier
	\begin{figure}[!ht]
	\centering
	\begin{tikzcd}[ampersand replacement=\&]
		0 \ar[r, ""] \& \mathbb Z \ar[r, "i"] \& \mathbb Z\ar[r, "j"] \& \mathbb Z_p \ar[r, ""] \& 0.
	\end{tikzcd}
	\end{figure}
	\FloatBarrier
\noindent Here $i$ is the multiplication by $p$ and for any $b \in \mathbb{Z}$ the value $j (b)$ is the reminder of $b$  modulo $p$. For simplicity we will denote $ \bar{b}:= j (b)$ and if $a= (a_1, a_2, \dots, a_n) \in \mathbb{Z}^n $, then $ \bar{a} := (\bar{a}_1, \bar{a}_2, \dots, \bar{a}_n) \in \mathbb{Z}_p^n $.

	Let $R$ and $T$  be solutions of the {\SE} on $\mathbb{Z}$ and $ \mathbb{Z}_p $, respectively. Fix a set-theoretic section $\varphi: \mathbb Z_p \to \mathbb Z$.
	Each element $b \in \mathbb{Z}$ can be uniquely presented as
	$$
	(b -  \varphi(\bar{b}), \ \bar{b}) \in \mathbb{Z} \times \mathbb{Z}_p.
	$$
	Then a map
	$$
		a \mapsto R( a- \varphi^{ n}( \bar a )) + \varphi^{ n}T( \bar a)
	$$
	is a solution of the {\SE} over $\mathbb{Z}$.
	
	For example, let $n = 3$, $R = \text{id}$ and $T: (x,y,z) \mapsto (x + 2y - 2z, 2z - y,  z)$ are solutions of the TE on $\mathbb{Z}$ and $\mathbb{Z}_p$, respectively.
	Then the map
	$$
		(x, y, z) \mapsto \left(\overline{(x+2y-2z)}+ x - \bar x, \overline{(2z - y)} +y - \bar y, z \right)
	$$
	gives  a new solution of the  TE on $\mathbb{Z}$, where a section embeds $\Z_p$ in $\Z$ as the first $p$ non-negative elements.
\end{example}


\bigskip

\section{Inverse limits of  solutions} \label{Inlim}

Let $ \mathcal{A} $ be a category of algebraic systems in which {\SE} can be defined. For example, category of groups, rings, modules,  etc.
Define a category $\mathcal{CA}(n)$ of pairs $(X, R_X)$
where $X$ is an object in $\mathcal A$ and $R_X$ is a solution of the {\SE}. 
Morphisms in $\mathcal{CA}(n)$ are morphisms of $ \mathcal{A} $  with an additional condition,
$$
	Mor((X, R_X), (Y, R_Y)) = \{f \in Mor(X, Y) \ | \ f \circ R_X = R_Y \circ f \}.
$$

We show how to define a solution on inverse limits. Definition of an inverse limit and its properties can be found in  \cite[Chapter 3]{Ml}. 
Let $D$ be a small category and $\mathcal F: D \to \mathcal{CA}(n)$ be a diagram.
By applying a forgetful functor $T: \mathcal{CA}(n) \to \mathcal A$ we get a diagram in $\mathcal A$.

\begin{proposition} \label{lim}
	If there is an inverse limit $\lim(T \circ \mathcal F)$ in $\mathcal A$, then there is an inverse limit $\lim \mathcal F$ in $\mathcal{CA}(n)$.
\end{proposition}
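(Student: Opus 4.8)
The plan is to realize the limit in $\mathcal{CA}(n)$ as $(L, R_L)$, where $L = \lim(T\circ\mathcal F)$ is the given limit in $\mathcal A$ and $R_L$ is the unique solution on $L$ compatible with all projections. Write $\mathcal F(d) = (X_d, R_d)$ for $d \in D$ and $\mathcal F(\alpha)$ for the image of a morphism $\alpha$ of $D$; let $\pi_d \colon L \to X_d$ be the limiting cone in $\mathcal A$. In the categories $\mathcal A$ of interest (groups, rings, modules, \dots) finite products exist, and by the Fubini property of limits the powers commute with the given limit, so $L^n = \lim_d X_d^n$ and $L^N = \lim_d X_d^N$ (with $N = n(n+1)/2$) with limiting cones $\pi_d^{\times n}$ and $\pi_d^{\times N}$; this is what makes the whole argument go through. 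A morphism $f\colon (X,R_X)\to(Y,R_Y)$ of $\mathcal{CA}(n)$ is, in this notation, an $\mathcal A$-morphism with $f^{\times n}\circ R_X = R_Y\circ f^{\times n}$.

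\textbf{Construction of $R_L$.} Since each $\mathcal F(\alpha)\colon (X_d,R_d)\to(X_{d'},R_{d'})$ is a morphism of $\mathcal{CA}(n)$, a short diagram chase (using $T\mathcal F(\alpha)\circ\pi_d = \pi_{d'}$ for the cone and $\mathcal F(\alpha)^{\times n}\circ R_d = R_{d'}\circ\mathcal F(\alpha)^{\times n}$ for the morphism) shows that $\bigl(R_d\circ\pi_d^{\times n}\colon L^n\to X_d^n\bigr)_d$ is a cone over $d\mapsto X_d^n$. By the universal property of $L^n = \lim_d X_d^n$ there is a unique $R_L\colon L^n\to L^n$ with $\pi_d^{\times n}\circ R_L = R_d\circ\pi_d^{\times n}$ for every $d$; this is exactly the statement that each $\pi_d$ becomes a morphism $(L,R_L)\to(X_d,R_d)$ in $\mathcal{CA}(n)$.

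\textbf{$R_L$ is a solution.} The $n$-SE for $R_L$ is an equality of two composites $L^N\to L^N$ of the operators $(R_L)_{\bar k}$. The key point is that $\pi_d^{\times N}\circ (R_L)_{\bar k} = (R_d)_{\bar k}\circ\pi_d^{\times N}$: on the $n$ coordinates indexed by $\bar k$ this is the defining relation of $R_L$, and on the remaining coordinates both sides act by $\pi_d$. Composing, $\pi_d^{\times N}$ carries the left-hand side of the $n$-SE for $R_L$ to the left-hand side for $R_d$, and likewise for the right-hand side. Since $(X_d,R_d)$ is a solution, the two sides of the $n$-SE for $R_L$ agree after composing with each $\pi_d^{\times N}$, so by uniqueness in the universal property of $L^N = \lim_d X_d^N$ they are equal. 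Hence $(L,R_L)\in\mathcal{CA}(n)$.

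\textbf{Universal property, and the main obstacle.} Given any cone $\bigl(f_d\colon (Y,R_Y)\to(X_d,R_d)\bigr)_d$ in $\mathcal{CA}(n)$, the underlying cone in $\mathcal A$ yields a unique $u\colon Y\to L$ with $\pi_d\circ u = f_d$; to see $u$ lies in $\mathcal{CA}(n)$ one checks $\pi_d^{\times n}\circ u^{\times n}\circ R_Y = \pi_d^{\times n}\circ R_L\circ u^{\times n}$ for all $d$ (both sides reduce to $R_d\circ f_d^{\times n}$), and since $L^n$ is a limit this forces $u^{\times n}\circ R_Y = R_L\circ u^{\times n}$; uniqueness of $u$ is inherited from $\mathcal A$. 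So $(L,R_L)$ with the projections $\pi_d$ is $\lim\mathcal F$. The only genuinely delicate part is the bookkeeping in the solution-check: one must be sure that forming $n$-th and $N$-th powers commutes with the limit and that the partial-identity structure of the operators $(\cdot)_{\bar k}$ is respected by the cone maps, which is what the identity $\pi_d^{\times N}\circ (R_L)_{\bar k} = (R_d)_{\bar k}\circ\pi_d^{\times N}$ encodes; everything else is a routine application of universal properties.
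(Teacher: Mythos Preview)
Your proof is correct and takes essentially the same approach as the paper's: construct the solution on $L=\lim(T\circ\mathcal F)$ componentwise via the identification $L^n\cong\lim_d X_d^n$ (you phrase this through the universal property, the paper through the concrete element-of-product formula $(x_A)_A\mapsto(R_A(x_A))_A$) and then check that the projections become $\mathcal{CA}(n)$-morphisms. Your version is in fact more thorough than the paper's, which stops after verifying that the componentwise $R$ lands in the limit and is compatible with the cone, leaving both the $n$-SE for $R_L$ and the universal property in $\mathcal{CA}(n)$ implicit; you spell out both.
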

\begin{proof}
	Since limits commute with limits, we have the equality 
$$\lim\limits_{a \in D} \mathcal F(a)^n = (\lim\limits_{a \in D} \mathcal F(a))^n.
$$
	Let $X = \lim(T \circ \mathcal F)^n$, $F$ is an $\mathcal F^n$. For an object $A \in D$ let $p_A \in \text{Mor}(X, F(A))$ be a morphism from the definition of a limit.
	We define a map $R$ on $X$,
	$$
		(x_A)_{A \in D} \mapsto (R_A(x_A))_{A \in D}.
	$$
	First we need to show that $R (X) \subset X$.
	Suppose there is an $x \in X$ such that $R(x) \not\in X$,
	hence there is a morphism $f_{AB}$, such that \[f_{AB} \circ p_A(R(x)) \neq p_B(R(x))\]\\
	which is equivalent to 
	$$
		f_{AB} \circ R_A(x_A) \neq R_B(x_B).
	$$
	By the condition on morphisms in this category we have
	$$
		R_B(f_{AB}(x_A))\neq R_B(x_B).
	$$
	Since $f(x_A) = x_B$ we get a contradiction.
		
	For any objects $A$, $B \in D$ and a morphism $f \in \text{Mor}(A, B)$ we have a commutative diagram
	\FloatBarrier
	\begin{figure}[!ht]
	\centering
	\begin{tikzcd}[ampersand replacement=\&]
		\& X \ar[ld, swap, "p_{A}"]  \ar[rd, "p_{B}"] \& \\
		F(A) \ar[rr, "f_{AB}"] \& \& F(B)
	\end{tikzcd}
	\end{figure}
	\FloatBarrier
\noindent	where $f_{AB} = F(f)$.
\\
	Since $x_{A} = p_{A}(x)$ by the condition on morphisms  we obtain
	$$
		f_{AB} \circ R_{A} \circ p_{A}(x) = R_{B} \circ f_{AB} \circ p_{A}(x) = R_{B} \circ p_{B}(x).
	$$
	Hence the following diagram is commutative
	\FloatBarrier
	\begin{figure}[!ht]
	\centering
	\begin{tikzcd}[ampersand replacement=\&]
		\& (X, R) \ar[ld, swap, "p_{A}"]  \ar[rd, "p_{B}"] \& \\
		(F(A), R_{A}) \ar[rr, "f_{AB}"] \& \& (F(B), R_{B})
	\end{tikzcd}
	\end{figure}
	\FloatBarrier
\end{proof}

\begin{corollary} \label{adic}
Consider $p$-adic integers $\mathbf{Z}_p$ as an inverse limit of $ \Z_{p^k} $ and suppose that 
$(\Z_{p^k}, R_k)$, $k=1, 2, \ldots$ are solution  of the {\SE}, such that $R_k \circ (\phi_k)^{\times n} = R_{k-1}$, where $\phi_k$ is a projection $\Z_{p^k} \to \Z_{p^{k-1}}$. Then this family defines a solution $(\mathbf{Z}_p, R)$ of the {\SE}.
\end{corollary}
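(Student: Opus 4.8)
The plan is to derive this immediately from Proposition \ref{lim} by exhibiting $(\mathbf{Z}_p, R)$ as an inverse limit in the category $\mathcal{CA}(n)$ built over $\mathcal{A}$ equal to the category of commutative rings, which is a category of algebraic systems in which the {\SE} can be defined and in which all the $\Z_{p^k}$ and $\mathbf{Z}_p$ live.

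First I would set up the indexing category: let $D$ be the small category whose objects are the positive integers and which has a unique morphism $k \to l$ whenever $k \geq l$. Define a diagram $\mathcal{F} : D \to \mathcal{CA}(n)$ by $\mathcal{F}(k) = (\Z_{p^k}, R_k)$, sending the morphism $k \to k-1$ to the canonical projection $\phi_k : \Z_{p^k} \to \Z_{p^{k-1}}$ and a general morphism $k \to l$ to the composite $\phi_{l+1} \circ \cdots \circ \phi_k$. For $\mathcal{F}$ to be a well-defined diagram into $\mathcal{CA}(n)$ I must check that each $\phi_k$ lies in $\mathrm{Mor}((\Z_{p^k}, R_k),(\Z_{p^{k-1}}, R_{k-1}))$, that is, $\phi_k^{\times n} \circ R_k = R_{k-1} \circ \phi_k^{\times n}$. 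This is exactly the hypothesis $R_k \circ (\phi_k)^{\times n} = R_{k-1}$, read as the commutation of $R_k$ with the transition maps (the only consistent reading, since $\phi_k^{\times n}$ carries $\Z_{p^k}^n$ into $\Z_{p^{k-1}}^n$). Composites of morphisms in $\mathcal{CA}(n)$ are again morphisms, so the longer transition maps are automatically admissible and functoriality of $\mathcal{F}$ follows from functoriality of $k \mapsto \Z_{p^k}$ with the projections.

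Next I would apply the forgetful functor $T : \mathcal{CA}(n) \to \mathcal{A}$ to $\mathcal{F}$, obtaining the standard tower $\cdots \to \Z_{p^k} \xrightarrow{\phi_k} \Z_{p^{k-1}} \to \cdots$, whose inverse limit in the category of commutative rings is by definition $\mathbf{Z}_p$; in particular $\lim(T \circ \mathcal{F})$ exists in $\mathcal{A}$. Proposition \ref{lim} then guarantees that $\lim \mathcal{F}$ exists in $\mathcal{CA}(n)$. By the construction in the proof of that proposition, the underlying object of $\lim \mathcal{F}$ is $\lim(T \circ \mathcal{F}) = \mathbf{Z}_p$, and the accompanying map is the componentwise one $R : (x_k)_k \mapsto (R_k(x_k))_k$; that proof already verifies, using precisely the morphism condition checked above, that $R$ sends $\mathbf{Z}_p^{\,n}$ into itself and satisfies the {\SE}. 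Hence $(\mathbf{Z}_p, R)$ is a solution of the {\SE}.

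I do not expect a genuine obstacle here: the corollary is essentially a specialization of Proposition \ref{lim}, and the only points needing care are minor bookkeeping, namely (i) confirming that the stated compatibility of the $R_k$ with the projections is the correct-variance morphism condition $\phi_k^{\times n} \circ R_k = R_{k-1} \circ \phi_k^{\times n}$ of $\mathcal{CA}(n)$, and (ii) identifying $\lim(T \circ \mathcal{F})$ with $\mathbf{Z}_p$ so that the hypothesis of Proposition \ref{lim} is met. The commutation of limits with the $n$-th Cartesian power, which underlies that proposition, is unproblematic since $n$ is finite.
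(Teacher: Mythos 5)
Your proposal is correct and follows exactly the route the paper intends: the corollary is stated without proof as an immediate specialization of Proposition~\ref{lim}, and your argument supplies precisely the bookkeeping that specialization requires (realizing the tower of projections $\phi_k$ as a diagram in $\mathcal{CA}(n)$, identifying the underlying inverse limit with $\mathbf{Z}_p$, and reading off the componentwise map $R$). Your observation that the stated compatibility $R_k \circ (\phi_k)^{\times n} = R_{k-1}$ should be read as the morphism condition $\phi_k^{\times n} \circ R_k = R_{k-1} \circ \phi_k^{\times n}$ is a fair and correct repair of the paper's notation.
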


\begin{example} Take a family of linear solutions of the TE on $\Z_{p^k}$,
$$R_k(x, y, z) = (a_k x, (1-a_k b_k)x + b_k y, (1-a_k b_k)c_k x + (1-b_k c_k)y + c_k z),$$
and suppose that the coefficients satisfy the conditions
$$a_{k+1} \text{ mod }  p^{k} = a_{k}, \quad b_{k+1} \text{ mod } p^{k} = b_{k},  \quad c_{k+1} \text{ mod } p^{k} = c_{k}.$$
Then this  family $R_k$ gives a solution $R$ on $\mathbf Z_{p}$,
	$$ R(x, y, z) = (a x, (1-a b)x + b y, (1-a b)c x + (1-b c)y + c z), $$
	where $a$, $b$ and $c$ are $p$-adic numbers defined by the sequences
	$$a = (a_1, \ldots, a_k, \ldots), \quad b = (b_1, \ldots, b_k, \ldots), \quad c = (c_1, \ldots, c_k, \ldots).$$
\end{example}



\bigskip

\section{Tetrahedron equation and corresponding algebraic systems } \label{algsust}

In this section we shall consider  the tetrahedron equation (TE),
$$
	R_{123} R_{145} R_{246} R_{356} = R_{356} R_{246} R_{145} R_{123},
$$
 and by solution we mean a set-theoretical solution of this equation.

\subsection{3-ternoids}
An algebraic system with one ternary operation is said to be  {\it ternar},
an algebraic system with $k$ ternary operations is said to be  $k$-{\it ternoid}. 
If $R  = (f, g, h) : X^3 \to X^3$ is a solution of the TE on some set $X$,
then we can define on $X$ three ternary operations $[\cdot, \cdot, \cdot]_i : X^3 \to X$, $i = 1, 2, 3$, by the rules
$$
	[a, b, c]_1 = f(a, b, c),~~~[a, b, c]_2 = g(a, b, c),~~~[a, b, c]_3 = h(a, b, c),~~~a, b, c \in X.
$$
For simplicity, we shall  denote the different operations by different type of brackets and write 
$$
[a, b, c]_1 = [a, b, c],~~~[a, b, c]_2 = \lg a, b, c \rg,~~~[a, b, c]_3 = \{a, b, c\}. 
$$


Using direct calculations we get

\begin{proposition}
\label{p7.1}
	Let $(X, [\cdot, \cdot, \cdot], \lg \cdot, \cdot, \cdot \rg, \{ \cdot, \cdot, \cdot \})$ be a 3-ternoid.
	Then it defines a solution of the TE if and only if the following six equalities hold
	$$
	\begin{gathered}~
		 [[ x, \lg y, t, \{z, p, q \} \rg, \lg z, p, q \rg ], [y, t, \{z, p, q \}], [z, p, q]] =
		 [[x, y, z], t, p],
		 \\
		 \lg [x, \lg y, t, \{z, p, q \} \rg, \lg z, p, q \rg], [y, t, \{z, p, q \}], [z, p, q]\rg =
		 [\lg x, y, z\rg, \lg [x, y, z], t, p\rg, q],   
		 \\
		 \{[x, \lg y, t, \{z, p, q \} \rg, \lg z, p, q \rg], [y, t, \{z, p, q \}], [z, p, q]\} =
		 \\=
		 [ \{x, y, z\}, \{[x, y, z], t, p\}, \{\lg x, y, z\rg,  \lg [x, y, z], t, p\rg, q\}],  
		 \\
		 \lg x, \lg y, t, \{z, p, q \} \rg, \lg z, p, q \rg \rg =
		 \lg \lg x, y, z \rg, \lg [x, y, z], t, p\rg, q\rg,
		 \\
		 \{x, \lg y, t, \{z, p, q \} \rg, \lg z, p, q \rg \} =
		 \lg \{x, y, z\}, \{[x, y, z], t, p\}, \{\lg x, y, z\rg, \lg [x, y, z], t, p\rg, q\}\rg,
		 \\
		 \{y, t, \{z, p, q \}\} =
		 \{ \{x, y, z\},  \{[x, y, z], t, p\}, \{\lg x, y, z\rg, \lg [x, y, z], t, p\rg, q\}\}
	\end{gathered}
	$$
	for all $(x, y, z, t, p, q) \in X^6$.
\end{proposition}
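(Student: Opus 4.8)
The proof is a direct verification, so the plan is to expand everything and compare. Write $R=(f,g,h)\colon X^3\to X^3$ with $f(a,b,c)=[a,b,c]$, $g(a,b,c)=\lg a,b,c\rg$ and $h(a,b,c)=\{a,b,c\}$, fix an arbitrary point $(x,y,z,t,p,q)\in X^6$, and evaluate both sides of the tetrahedron equation $R_{123}R_{145}R_{246}R_{356}=R_{356}R_{246}R_{145}R_{123}$ on it, remembering that compositions act from the right.

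First I would compute the right-hand side. Applying $R_{123}$ turns the first three coordinates into $[x,y,z]$, $\lg x,y,z\rg$, $\{x,y,z\}$; then $R_{145}$ acts on the current coordinates $1,4,5$, then $R_{246}$ on $2,4,6$, then $R_{356}$ on $3,5,6$. Each step changes only three entries, and it suffices to record the nested terms produced along the way; for instance, after $R_{123}$ and $R_{145}$ the first coordinate already equals $[[x,y,z],t,p]$. Carrying this out to the end yields a $6$-tuple whose $i$-th entry is the right-hand side of the $i$-th displayed identity, for $i=1,\dots,6$. Then I would compute the left-hand side the same way, now applying the maps in the order $R_{356}$, $R_{246}$, $R_{145}$, $R_{123}$; the intermediate expressions that appear are exactly the auxiliary terms $\lg z,p,q\rg$, $\{z,p,q\}$, $\lg y,t,\{z,p,q\}\rg$ and so on occurring in the statement, and the resulting $6$-tuple carries the left-hand side of the $i$-th identity in its $i$-th slot.

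Comparing the two $6$-tuples coordinate by coordinate shows that $(X,R)$ satisfies the tetrahedron equation if and only if the six displayed equalities hold for all $(x,y,z,t,p,q)\in X^6$; the ``if'' and ``only if'' directions come together because equality of maps $X^6\to X^6$ is equivalent to equality of all six coordinate functions. There is no conceptual content here: the only real obstacle is keeping the iterated substitutions straight so that no bracket is misplaced. This is best managed by tracking, for each of the six coordinate slots separately, the two ternary operations that act on it (equivalently, by following the two rows of $MI_3$ containing that index). As a consistency check, one may note that slots $4$, $5$, $6$ on the left-hand side are last modified by $R_{145}$, $R_{145}$, $R_{246}$ respectively, which already pins down identities four, five and six and isolates the remaining bookkeeping to slots $1$, $2$, $3$.
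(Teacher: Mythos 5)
Your proposal is correct and is exactly the argument the paper intends: the paper gives no written proof beyond the phrase ``using direct calculations,'' and your coordinate-by-coordinate expansion of both sides of the tetrahedron equation (with the stated order of application and the matching of each slot to one of the six identities) is precisely that calculation, carried out correctly.
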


The next three corollaries describe elementary solutions of the TE.

\begin{corollary}
\label{c1}
	Let $(X, [\cdot, \cdot, \cdot])$ be a ternar.
The map $R : X^3 \to X^3$,
	$$
		R(a,b,c) = ([a, b, c], b, c),~~~a, b, c \in X,
	$$
satisfies the TE if and only if 
	$$
		[[x, t,  p], [y, t,  q], [z, p, q]] = [[x, y, z], t, p],~~\mbox{for all}~x, y, z, t, p, q \in X.
	$$
\end{corollary}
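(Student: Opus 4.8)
The plan is to obtain the corollary as a direct specialization of Proposition \ref{p7.1}. Starting from the ternar $(X,[\cdot,\cdot,\cdot])$, I would equip $X$ with two auxiliary ternary operations defined as coordinate projections, $\langle a,b,c\rangle := b$ and $\{a,b,c\} := c$ for all $a,b,c\in X$, turning $X$ into a 3-ternoid $(X,[\cdot,\cdot,\cdot],\langle\cdot,\cdot,\cdot\rangle,\{\cdot,\cdot,\cdot\})$. The map $(f,g,h)\colon X^3\to X^3$ associated with this 3-ternoid is then exactly $R(a,b,c)=([a,b,c],b,c)$, so by Proposition \ref{p7.1} the map $R$ satisfies the TE if and only if the six displayed equalities of that proposition hold for these three operations.

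The next step is to substitute and simplify. Using $\langle u,v,w\rangle=v$ and $\{u,v,w\}=w$, the inner subterms collapse: one gets $\{z,p,q\}=q$, $\langle z,p,q\rangle=p$, $\langle y,t,\{z,p,q\}\rangle=t$, hence $[y,t,\{z,p,q\}]=[y,t,q]$ and $[x,\langle y,t,\{z,p,q\}\rangle,\langle z,p,q\rangle]=[x,t,p]$; similarly $\langle x,y,z\rangle=y$, $\langle[x,y,z],t,p\rangle=t$, $\{x,y,z\}=z$, $\{[x,y,z],t,p\}=p$, and $\{\langle x,y,z\rangle,\langle[x,y,z],t,p\rangle,q\}=q$. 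Feeding these values into the six equalities, the fourth, fifth and sixth become the trivial identities $t=t$, $p=p$, $q=q$; the second and third become $[y,t,q]=[y,t,q]$ and $[z,p,q]=[z,p,q]$; and the first becomes $[[x,t,p],[y,t,q],[z,p,q]]=[[x,y,z],t,p]$. Therefore the whole system of six equalities is equivalent to this single identity, which gives the corollary.

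I do not expect any genuine obstacle. The only point requiring attention is careful bookkeeping of which coordinate each bracket $\langle\cdot,\cdot,\cdot\rangle$ and $\{\cdot,\cdot,\cdot\}$ extracts when the six equalities of Proposition \ref{p7.1} are specialized, together with the observation that exactly one of the six is nontrivial. If a self-contained argument is preferred, the same conclusion can be reached by evaluating both sides of $R_{123}R_{145}R_{246}R_{356}=R_{356}R_{246}R_{145}R_{123}$ on an arbitrary tuple $(x,y,z,t,p,q)\in X^6$: since $R$ leaves its second and third arguments unchanged, the last five output coordinates agree on both sides automatically, and comparing the first output coordinates produces precisely the stated identity.
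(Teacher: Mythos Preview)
Your proposal is correct and follows exactly the approach implicit in the paper: the corollary is stated there as an immediate consequence of Proposition \ref{p7.1}, obtained by taking $\langle a,b,c\rangle=b$ and $\{a,b,c\}=c$, and your reduction of the six equalities to the single nontrivial one is accurate. The alternative direct verification you sketch is likewise the computation underlying Proposition \ref{p7.1} itself, so there is nothing to add.
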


\begin{corollary}
\label{c2}
	Let $(X, [\cdot, \cdot, \cdot])$ be a ternar.
The map $R : X^3 \to X^3$,
	$$
		R(a,b,c) = (a, [a, b, c], c),~~~a, b, c \in X.
	$$
satisfies the TE if and only if
	\begin{equation}
	\label{e2}
		 [[x, y,  z], [x, t,  p], q] = [x, [y, t, q], [z, p, q]],~~\mbox{for all}~x, y, z, t, p, q \in X.
	\end{equation}
\end{corollary}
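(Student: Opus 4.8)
The plan is to obtain this statement as a specialization of Proposition~\ref{p7.1}, exactly as one does (mutatis mutandis) for Corollary~\ref{c1}. The map $R(a,b,c)=(a,[a,b,c],c)$ is the realization of the triple $(f,g,h)$ in which $f$ is the projection onto the first coordinate, $g=[\cdot,\cdot,\cdot]$ is the given ternar operation, and $h$ is the projection onto the third coordinate. Translating into the bracket notation of Proposition~\ref{p7.1}, this means that in each of the six equalities there we must replace the bracket $[\cdot,\cdot,\cdot]$ by the first projection, the bracket $\lg\cdot,\cdot,\cdot\rg$ by the ternar operation, and the bracket $\{\cdot,\cdot,\cdot\}$ by the third projection.

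I would then run through the six equalities one at a time under this substitution. In the first, resp.\ the third, equality the outermost bracket is the first, resp.\ the third, projection, and the argument that survives is again a projection, so both sides collapse to $x$, resp.\ to $z$; in the second equality both sides collapse to $[x,y,z]$; and in the fifth and sixth equalities both sides collapse to $[z,p,q]$ and to $q$ respectively. Hence five of the six equalities become tautologies. In the fourth equality the substitution turns the left-hand side $\lg x,\lg y,t,\{z,p,q\}\rg,\lg z,p,q\rg\rg$ into $[x,[y,t,q],[z,p,q]]$ and the right-hand side $\lg\lg x,y,z\rg,\lg[x,y,z],t,p\rg,q\rg$ into $[[x,y,z],[x,t,p],q]$, where we use that the inner occurrence of $[x,y,z]$ is the first projection and hence equals $x$. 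This is exactly the identity~\eqref{e2}, so Proposition~\ref{p7.1} gives the claimed equivalence.

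The only delicate point is purely notational: in Proposition~\ref{p7.1} the symbol $[\cdot,\cdot,\cdot]$ denotes the first component of the solution --- here a projection --- whereas in the present statement $[\cdot,\cdot,\cdot]$ is the ternar operation, i.e.\ the second component. One must keep these two roles apart, and in particular observe that every nested $[x,y,z]$ occurring inside the six equalities evaluates to $x$; getting this bookkeeping right is what makes the five spurious equalities collapse and pins down the precise form of~\eqref{e2}. If one prefers to avoid invoking Proposition~\ref{p7.1}, the same identity follows by expanding both sides of $R_{123}R_{145}R_{246}R_{356}=R_{356}R_{246}R_{145}R_{123}$ on a generic tuple $(x,y,z,t,p,q)$ and comparing the six output components: since $f$ and $h$ are projections, all but the middle component agree trivially, and the middle one reproduces~\eqref{e2}.
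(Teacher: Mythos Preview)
Your proof is correct and follows exactly the paper's approach: Corollary~\ref{c2} is stated there without proof as an immediate specialization of Proposition~\ref{p7.1}, and you have simply written out that specialization in full, verifying that five of the six identities become tautologies while the fourth yields~\eqref{e2}. Your remark about the notational clash between the two uses of $[\cdot,\cdot,\cdot]$ is well taken and is the only thing one needs to be careful about.
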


\begin{corollary}
\label{c3}
	Let $(X, [\cdot, \cdot, \cdot])$ be a ternar.
The map $R : X^3 \to X^3$,
	$$
		R(a,b,c) = (a, b, [a, b, c]),~~~a, b, c \in X.
	$$
satisfies the TE if and only if 
	\begin{equation}
	\label{e3}
		[[x, y,  z], [x, t,  p], [y,t, q]] = [y, t, [z, p, q]],~~\mbox{for all}~x, y, z, t, p, q \in X.
	\end{equation}
\end{corollary}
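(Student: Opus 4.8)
The plan is to verify the tetrahedron equation for $R$ directly on a generic point of $X^6$, exploiting the fact that $R(a,b,c)=(a,b,[a,b,c])$ alters only the third slot of the triple on which it acts. First I would record the effect of each operator: acting on $X^6$, the map $R_{ijk}$ fixes the $i$-th and $j$-th components and replaces the $k$-th component by $[x_i,x_j,x_k]$. In the equation $R_{123}R_{145}R_{246}R_{356}=R_{356}R_{246}R_{145}R_{123}$ the indices occurring in third position are $3$, $5$, $6$, $6$, so components $1$, $2$, $4$ are never touched, component $3$ is touched only by $R_{123}$, component $5$ only by $R_{145}$, while component $6$ is touched by both $R_{246}$ and $R_{356}$; this last coordinate is the only place where the two sides can possibly disagree.

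Next I would evaluate both composites on $(x,y,z,t,p,q)\in X^6$, remembering that compositions act from the right. The right-hand side $R_{356}R_{246}R_{145}R_{123}$ successively sends the third component to $[x,y,z]$, the fifth to $[x,t,p]$, the sixth to $[y,t,q]$, and finally the sixth to $[[x,y,z],[x,t,p],[y,t,q]]$. The left-hand side $R_{123}R_{145}R_{246}R_{356}$ successively sends the sixth component to $[z,p,q]$, then to $[y,t,[z,p,q]]$, then the fifth to $[x,t,p]$ and the third to $[x,y,z]$. Comparing the two output tuples, components $1,2,3,4,5$ coincide automatically, and the equation reduces to the single identity $[[x,y,z],[x,t,p],[y,t,q]]=[y,t,[z,p,q]]$, which is precisely \eqref{e3}.

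There is essentially no obstacle beyond careful bookkeeping: the only subtlety is that component $6$ is modified twice and in opposite orders on the two sides, so one must track those two steps accurately. As an alternative route, the same statement follows from Proposition \ref{p7.1} by substituting for the three operations there the maps $(a,b,c)\mapsto a$, $(a,b,c)\mapsto b$, and $[\cdot,\cdot,\cdot]$, respectively; then the first five of the six displayed equalities degenerate to trivialities and the sixth becomes \eqref{e3}.
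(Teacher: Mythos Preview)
Your proposal is correct. The direct computation you carry out is exactly what the paper's approach amounts to once Proposition~\ref{p7.1} is specialized to $[a,b,c]=a$, $\lg a,b,c\rg=b$, $\{a,b,c\}=[a,b,c]$; you simply perform that specialization from the start rather than passing through the general six-identity form, and you also note the alternative route via Proposition~\ref{p7.1} explicitly.
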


\begin{remark}
One can see that the identity in Corollary \ref{c2} is more symmetric than identities in Corollary \ref{c1} and in Corollary \ref{c3}. In particular, the number of variables in the left side of this identity is the same as in the right side. In two other identities these numbers are different.
\end{remark}

We will call a solution from Corollary \ref{c1} an {\it elementary solution of the first type} or, simply elementary $1$-solution; a solution from Corollary \ref{c2} an {\it elementary solution of the second type} or simply elementary $2$-solution; a solution from Corollary \ref{c3} an {\it elementary solution of the third type} or, simply elementary $3$-solution.

Let $P_{13} : X^3 \to X^3$, $P_{13}(x, y, z) = (z, y, x)$, be the permutation of the first and the third components. If $R$ is an elementary 1-solution, then $P_{13} R P_{13}$ is an elementary 3-solution. Hence, we have to study only elementary 1- and  2-solutions.

\subsection{Elementary 1-solutions}

For studying elementary 1-solutions of the TE let us introduce 
 an algebraic system $ (X, \os, \oc, \oL, \oR) $ with four binary operations which satisfy axioms:
$$
\begin{gathered}
	x \oc y = (x \oL z) \oc (y \oL z);
	\\
	(x \oc y) \os (z \oc w) = (x \os z) \oc (y \os w);
	\\
	(x \oR y) \oR z = (x \oR z) \oR (y \os z);
	\\
	(x \os y) \oL z = x \oR (y \oc z).
\end{gathered}
$$
We call it first tetrahedral 4-groupoid and denote it by $T_1$-groupoid.

\begin{proposition}  \label{1-sol}
	$T_1$-groupoid gives an elementary 1-solution $(X, R)$ of the TE if we put
	$$
		R(x, y, z) = (x \oR (y \oc z), \; y, \; z),~~~x, y, z \in X.
	$$ 
\end{proposition}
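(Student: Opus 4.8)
The plan is to reduce everything to Corollary~\ref{c1}. Setting $[a,b,c] := a \oR (b \oc c)$, that corollary says that $(X,R)$ with $R(a,b,c)=([a,b,c],b,c)$ solves the TE if and only if
$$
[[x,t,p],[y,t,q],[z,p,q]] = [[x,y,z],t,p]
$$
for all $x,y,z,t,p,q \in X$, so the whole proof amounts to deriving this one identity from the four $T_1$-groupoid axioms by direct computation.

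The first step I would take is to record the two forms of the ternary operation: by the fourth axiom $a \oR (b \oc c) = (a \os b) \oL c$, so $[a,b,c]$ may be written either as $a \oR (b \oc c)$ or as $(a \os b)\oL c$, and I will pass between these freely. The heart of the computation is simplifying the inner product $[y,t,q] \oc [z,p,q]$. Writing both factors in the $\oL$-form gives $\bigl((y \os t)\oL q\bigr) \oc \bigl((z \os p)\oL q\bigr)$; applying the first axiom backwards — it reads $x \oc y = (x \oL z)\oc(y\oL z)$, so a common right factor $\oL q$ can be stripped — yields $(y\os t)\oc(z\os p)$, and the second axiom (with its variables $(x,y,z,w)$ specialised to $(y,z,t,p)$) rewrites this as $(y\oc z)\os(t\oc p)$. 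Consequently
$$
[[x,t,p],[y,t,q],[z,p,q]] = [x,t,p]\oR\bigl((y\oc z)\os(t\oc p)\bigr) = \bigl(x\oR(t\oc p)\bigr)\oR\bigl((y\oc z)\os(t\oc p)\bigr).
$$

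Finally I would invoke the third axiom $(x\oR y)\oR z = (x\oR z)\oR(y\os z)$ with its variables specialised to $x$, $y\oc z$, $t\oc p$: its right-hand side is precisely $\bigl(x\oR(t\oc p)\bigr)\oR\bigl((y\oc z)\os(t\oc p)\bigr)$, so this equals $\bigl(x\oR(y\oc z)\bigr)\oR(t\oc p) = [x,y,z]\oR(t\oc p) = [[x,y,z],t,p]$, which is the asserted identity; by Corollary~\ref{c1} this means $R$ is an elementary $1$-solution. I do not anticipate a genuine obstacle here — the argument is a short chain of rewrites — and the only step that requires foresight is the very first one: using the fourth axiom to move into the $\oL$-form so that the first axiom can cancel the repeated argument $q$. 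After that the uses of the second and third axioms are forced.
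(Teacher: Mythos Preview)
Your proof is correct and follows essentially the same route as the paper: you invoke Corollary~\ref{c1}, rewrite $[y,t,q]$ and $[z,p,q]$ in the $\oL$-form via the fourth axiom, strip the common $\oL q$ using the first axiom, apply the second axiom to get $(y\oc z)\os(t\oc p)$, and then use the third axiom to finish. The only difference is presentational --- the paper carries out the same chain of rewrites without introducing the bracket shorthand.
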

\begin{proof}
	In order for $R$ to be a solution it shall satisfy equality from corollary \ref{c1}.
	The equality then takes the form
	$$
	(x \oR (t \oc p)) \oR ((y \oR (t \oc q)) \oc (z \oR (p \oc q))) =
	(x \oR (y \oc z)) \oR (t \oc p).
	$$
	Applying axioms to the left side of this equality we get
	$$
	\begin{gathered}
	(x \oR (t \oc p)) \oR ((y \oR (t \oc q)) \oc (z \oR (p \oc q))) =
	(x \oR (t \oc p)) \oR (((y \os t) \oL q) \oc ((z \os p) \oL q)) =
	\\=
	(x \oR (t \oc p)) \oR ((y \os t) \oc (z \os p)) =
	(x \oR (t \oc p)) \oR ((y \oc z) \os (t \oc p)) =
	(x \oR (y \oc z)) \oR (t \oc p).
	\end{gathered}
	$$
\end{proof}

\begin{example}\label{e7}
	Consider an algebraic system $ (V, \os, \oc, \oL, \oR) $ on a vector space $ V $
	with operations defined as follows:
	$$
	\begin{aligned}
		x \os y & := \b x + (1 - \b) y, \\
		x \oc y & := x - y, \\
		x \oL y & := x + (\b - 1) y, \\
		x \oR y & := \b x + (1 - \b) y,
	\end{aligned}
	$$
	where $\b$ is some  endomorphism of $ V $.
	From this system we get the solutions 
	$$
R(x, y, z) = (\b x + (1 - \b) y + (\b - 1) z, y, z),~~~x, y, z \in V.
$$
\end{example}

There are 2-groupoids, that give elementary 1-solutions.
Consider for example a system $(X, \os, \oc)$ with the following axioms:
$$
\begin{gathered}
	x \oc y = (x \oc z) \oc (y \oc z);
	\\
	(x \os y) \os z = (x \os z) \os (y \os z);
	\\
	(x \oc y) \os (z \oc w) = (x \os z) \oc (y \os w).
\end{gathered}
$$
We will call it a reduced first tetrahedral 4-groupoid or simply reduced $T_1$-groupoid.
The following claim holds
\begin{proposition}
Reduced $T_1$-groupoid gives an elementary 1-solution $(X, R)$ of the  TE, where
	$$
		R(x, y, z) = (x \os (y \oc z), \; y, \; z),~~~x, y, z \in X.
	$$ 
\end{proposition}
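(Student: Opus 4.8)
The plan is to reduce to Corollary \ref{c1} and then verify the single resulting identity by a short chain of rewrites using the three axioms of a reduced $T_1$-groupoid, in the spirit of the proof of Proposition \ref{1-sol}.

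First I would invoke Corollary \ref{c1}: a map of the form $R(a,b,c) = ([a,b,c], b, c)$ is a solution of the TE if and only if
$$
[[x, t, p], [y, t, q], [z, p, q]] = [[x, y, z], t, p]
$$
for all $x, y, z, t, p, q \in X$. Putting $[a,b,c] := a \os (b \oc c)$ turns this into the requirement that
$$
(x \os (t \oc p)) \os \big( (y \os (t \oc q)) \oc (z \os (p \oc q)) \big) = (x \os (y \oc z)) \os (t \oc p)
$$
hold identically in $X$, so the whole task is to establish this equality from the axioms.

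Next I would transform the left-hand side. Reading the third axiom $(a \oc b) \os (c \oc d) = (a \os c) \oc (b \os d)$ from right to left with $a = y$, $c = t \oc q$, $b = z$, $d = p \oc q$ gives
$$
(y \os (t \oc q)) \oc (z \os (p \oc q)) = (y \oc z) \os \big( (t \oc q) \oc (p \oc q) \big);
$$
then the first axiom $u \oc v = (u \oc w) \oc (v \oc w)$ with $u = t$, $v = p$, $w = q$ collapses $(t \oc q) \oc (p \oc q)$ to $t \oc p$, so the inner expression equals $(y \oc z) \os (t \oc p)$. Hence the left-hand side equals $(x \os (t \oc p)) \os \big( (y \oc z) \os (t \oc p) \big)$, and the second axiom $(a \os b) \os c = (a \os c) \os (b \os c)$ read from right to left with $a = x$, $b = y \oc z$, $c = t \oc p$ rewrites this as $(x \os (y \oc z)) \os (t \oc p)$, which is exactly the right-hand side.

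I do not anticipate a real obstacle: the only care needed is that each axiom is an identity and so may be applied in either direction, and that the variable substitutions into the axioms are made consistently. The computation is just the reduced (degree-one) analogue of the one in Proposition \ref{1-sol}, with $\oR$ replaced by $\os$ and the $\oL$-axiom playing no role.
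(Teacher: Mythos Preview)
Your proof is correct and follows essentially the same approach as the paper: reduce to the identity of Corollary \ref{c1}, then rewrite the left-hand side in three steps using the third axiom (right to left), the first axiom, and the second axiom (right to left), exactly as in the paper's computation.
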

\begin{proof}
In order for $R$ to be a solution it shall satisfy the equality from corollary \ref{c1} which takes the form
	$$
	\begin{gathered}
	(x \os (t \oc p)) \os ((y \os (t \oc q)) \oc (z \os (p \oc q))) =
	(x \os (y \oc z)) \os (t \oc p).
	\end{gathered}
	$$
	Applying the axioms to the left side of this equality we get
	$$
	\begin{gathered}
	(x \os (t \oc p)) \os ((y \os (t \oc q)) \oc (z \os (p \oc q))) =
	(x \os (t \oc p)) \os ((y \oc z) \os ((t \oc q) \oc (p \oc q))) =
	\\=
	(x \os (t \oc p)) \os ((y \oc z) \os (t \oc p)) =
	(x \os (y \oc z)) \os (t \oc p).
	\end{gathered}
	$$
\end{proof}

\begin{example}
	Consider an algebraic system $ (V, \os, \oc) $ on a vector space $ V $
	with operations defined as follows:
	$$
	\begin{aligned}
		x \os y & := \b x + (1 - \b) y, \\
		x \oc y & := x - y,
	\end{aligned}
	$$
	where $\b$ is some fixed endomorphism of  $ V $.
	From this system we get the elementary 1-solution 
$$
R(x, y, z) = (\b x + (1 - \b) y + (\b - 1) z, y, z),~~~x, y, z \in V.
$$
\end{example}
	Note that this system is very similar to system from example \ref{e7}. It is generally true that if in $T_1$-groupoid
	$ x \os y = x \oR y $ then forgetting about operations $ x \oL y $ and $ x \oR y $ yields a reduced $T_1$-groupoid.
	It is not clear if there is a $T_1$-groupoid such that forgetting about $ x \oL y $ and $ x \oR y $ will not yield a reduced $T_1$-groupoid.

\subsection{Elementary 2-solutions} For studying elementary 2-solutions of the TE let us introduce an algebraic system $ (X, \ms, \mc, \mL, \mR)$ with 4 binary operations which satisfies the following axioms:
$$
\begin{gathered}
	x \mR (y \ms z) = (x \mR y) \ms (x \mR z),
	\\
	(x \mc y) \mL z = (x \mL z) \mc (y \mL z),
	\\
	(x \ms y) \mc (z \ms w) = (x \mc z) \ms (y \mc w),
	\\
	(x \mR y) \mL z = x \mR (y \mL z),
	\\
	(x \ms y) \mL z = x \mR (y \mc z).
\end{gathered}
$$

Further we shall call this algebraic system by second tetrahedral 4-groupoid and denote it by $T_2$-groupoid. 

\begin{proposition} \label{2-sol}
	Any $T_2$-groupoid $X$ defines  an elementary 2-solution $(X, R)$ of the TE by the formula
	$$
		R(x, y, z) = (x, \; x \mR (y \mc z), \; z),~~~x, y, z \in X.
	$$ 

\end{proposition}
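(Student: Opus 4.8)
The plan is to reduce the statement to the identity characterising elementary $2$-solutions and then verify that identity formally inside any $T_2$-groupoid. By Corollary \ref{c2}, the map $R(a,b,c)=(a,[a,b,c],c)$ built from a ternary operation $[\cdot,\cdot,\cdot]$ solves the TE if and only if $[[x,y,z],[x,t,p],q]=[x,[y,t,q],[z,p,q]]$ for all $x,y,z,t,p,q$. So I would substitute $[a,b,c]:=a\mR(b\mc c)$ and show that this particular ternary operation satisfies that identity, i.e.
$$(x\mR(y\mc z))\mR((x\mR(t\mc p))\mc q)=x\mR((y\mR(t\mc q))\mc(z\mR(p\mc q))).$$
Writing $u:=y\mc z$ and $v:=t\mc p$, the strategy is to bring both sides to the common normal form $x\mR(u\mR(v\mc q))$.

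For the left-hand side I would run the chain: first apply the axiom $(x\ms y)\mL z=x\mR(y\mc z)$ (read from right to left) to rewrite $(x\mR u)\mR((x\mR v)\mc q)$ as $((x\mR u)\ms(x\mR v))\mL q$; then the right-distributivity axiom $x\mR(y\ms z)=(x\mR y)\ms(x\mR z)$ collapses this to $(x\mR(u\ms v))\mL q$; then $(x\mR y)\mL z=x\mR(y\mL z)$ moves the $\mR$ outward, giving $x\mR((u\ms v)\mL q)$; and a last application of $(x\ms y)\mL z=x\mR(y\mc z)$ yields $x\mR(u\mR(v\mc q))$.

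For the right-hand side the chain is dual: start from $x\mR((y\mR(t\mc q))\mc(z\mR(p\mc q)))$, use $(x\ms y)\mL z=x\mR(y\mc z)$ twice to replace the inner factors by $(y\ms t)\mL q$ and $(z\ms p)\mL q$; then the axiom $(x\mc y)\mL z=(x\mL z)\mc(y\mL z)$ merges them into $((y\ms t)\mc(z\ms p))\mL q$; the interchange axiom $(x\ms y)\mc(z\ms w)=(x\mc z)\ms(y\mc w)$ turns the inner term into $(y\mc z)\ms(t\mc p)=u\ms v$; and $(x\ms y)\mL z=x\mR(y\mc z)$ once more gives $x\mR((u\ms v)\mL q)=x\mR(u\mR(v\mc q))$. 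Since both sides coincide, the identity of Corollary \ref{c2} holds and $R$ solves the TE.

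I do not expect a genuine obstacle here: the verification is purely equational, entirely parallel to the proof of Proposition \ref{1-sol}, and each step is a single application of one of the five $T_2$-groupoid axioms (most of them used in the reverse direction). The only mild care needed is in choosing the order of rewrites so that the two sides actually meet; stating the common normal form $x\mR(u\mR(v\mc q))$ up front makes this transparent.
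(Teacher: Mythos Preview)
Your proposal is correct and coincides with the paper's argument: both reduce to the identity of Corollary~\ref{c2} and verify it by the same sequence of $T_2$-groupoid axioms, passing through the intermediate forms $x\mR(((y\ms t)\mc(z\ms p))\mL q)$ and $(x\mR((y\mc z)\ms(t\mc p)))\mL q$. The only cosmetic difference is that the paper runs a single chain from one side to the other, whereas you bring each side to the common midpoint $x\mR((u\ms v)\mL q)$; concatenating your two chains reproduces the paper's proof verbatim.
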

\begin{proof}
	In order for $R$ to be a solution it shall satisfy equality from Corollary \ref{c2}.
	This equality then takes the form
	$$
	x \mR ((y \mR (t \mc q)) \mc (z \mR (p \mc q))) =
	(x \mR (y \mc z)) \mR ((x \mR (t \mc p)) \mc q).
	$$
	Applying the axioms of the $T_2$-groupoid  to the left side of this equality we get
	$$
	\begin{gathered}
	x \mR ((y \mR (t \mc q)) \mc (z \mR (p \mc q))) =
	x \mR (((y \ms t) \mL q) \mc ((z \ms p) \mL q)) =
	\\=
	x \mR (((y \ms t) \mc (z \ms p)) \mL q) =
	(x \mR ((y \mc z) \ms (t \mc p))) \mL q =
	\\=
	((x \mR (y \mc z)) \ms (x \mR (t \mc p))) \mL q =
	(x \mR (y \mc z)) \mR ((x \mR (t \mc p)) \mc q).
	\end{gathered}
	$$
\end{proof}

Now suppose that we have an elementary 2-solution $(X, R)$ of the TE, where
$$
	R(x, y, z) = (x, [x, y, z], z),~~~x, y, z \in X.
$$
We can formulate a question: Is it defines a $T_2$-groupoid ?

From the next proposition follows that under some assumptions on the ternar $(X, [\cdot, \cdot, \cdot])$ the answer is positive. 

\begin{proposition}
Suppose that a ternar $(X, [\cdot, \cdot, \cdot])$ defines a solution of the TE, and  there exist an element $c \in X$ such that $[c, c, c] = c$, and a function $\{\cdot\} : X \to X$ such that
$$
	\{[c, x, c]\} = [c, \{x\}, c] = x, \quad [\{x\}, \{y\}, c] = \{[x, y, c]\} \quad \text{ and } \quad [c, \{x\}, \{y\}] = \{[c, x, y]\}.
$$
If we put
$$
	x \ms y = [x, y, c],
	\qquad
	x \mc y = [c, x, y],
	\qquad
	x \mR y = [x, \{y\}, c],
	\qquad
	x \mL y = [c, \{x\}, y],
$$
for any $x, y \in X$, then $(X, \ms, \mc, \mR, \mL)$ is a $T_2$-groupoid.
\end{proposition}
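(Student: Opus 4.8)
The plan is to deduce all five $T_2$-groupoid axioms from the single structural input available, namely the identity supplied by Corollary~\ref{c2}. Since the ternar $(X,[\cdot,\cdot,\cdot])$ defines a solution of the TE of the form $R(x,y,z)=(x,[x,y,z],z)$, Corollary~\ref{c2} gives identity~(\ref{e2}):
$$
[[x, y,  z], [x, t,  p], q] = [x, [y, t, q], [z, p, q]], \qquad x,y,z,t,p,q\in X .
$$
Everything else follows from this together with the three relations imposed on $c$ and $\{\cdot\}$.

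First I would unpack the hypotheses. Setting $\phi(a):=[c,a,c]$, the relations $\{[c,a,c]\}=a$ and $[c,\{a\},c]=a$ say that $\phi$ and $\{\cdot\}$ are mutually inverse bijections of $X$; note also $\{c\}=c$, since $\phi(c)=[c,c,c]=c$. With $a\ms b:=[a,b,c]$ and $a\mc b:=[c,a,b]$, the remaining two relations $[\{x\},\{y\},c]=\{[x,y,c]\}$ and $[c,\{x\},\{y\}]=\{[c,x,y]\}$ say exactly that $\{\cdot\}$ is an automorphism of $\ms$ and of $\mc$, and the defining formulas read $x\mR y=x\ms\{y\}$, $x\mL y=\{x\}\mc y$. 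I expect to use the two ``collapsing'' facts $[c,c,c]=c$ and $[c,\{a\},c]=a$ repeatedly.

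Then I would verify the axioms one at a time, translating both sides into bracket notation, pushing $\{\cdot\}$ in and out of brackets via the automorphism relations, and matching against a specialisation of~(\ref{e2}). Concretely: $x\mR(y\ms z)=(x\mR y)\ms(x\mR z)$ turns into $[x,[\{y\},\{z\},c],c]=[[x,\{y\},c],[x,\{z\},c],c]$, which is~(\ref{e2}) with $z=p=q=c$ (using $[c,c,c]=c$); $(x\mc y)\mL z=(x\mL z)\mc(y\mL z)$ turns into $[c,[c,\{x\},\{y\}],z]=[c,[c,\{x\},z],[c,\{y\},z]]$, which is~(\ref{e2}) with $x=y=z=c$ (so the inner bracket collapses to $c$); and $(x\ms y)\mc(z\ms w)=(x\mc z)\ms(y\mc w)$ is~(\ref{e2}) with $x=q=c$ up to relabelling. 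The two mixed axioms $(x\ms y)\mL z=x\mR(y\mc z)$ and $(x\mR y)\mL z=x\mR(y\mL z)$ both reduce to one identity,
$$
[x,[c,v,\{z\}],c]=[c,[\{x\},v,c],z], \qquad x,v,z\in X,
$$
with $v=\{y\}$ for the former and $v=\{\{y\}\}$ for the latter; this is the instance of~(\ref{e2}) in which the three variables there called $x$, $z$, $q$ are set to $c$, after substituting $\{x\},v,\{z\}$ for its three remaining arguments and using $[c,\{x\},c]=x$, $[c,\{z\},c]=z$.

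The step requiring the most care is this last one: identifying precisely which three slots of~(\ref{e2}) to freeze at $c$ so that the mixed axioms fall out, and keeping straight the bookkeeping of $\{\cdot\}$ under the ternary bracket --- in particular noticing that the inner element $v$ is $\{y\}$ in one mixed axiom and $\{\{y\}\}$ in the other, so that a single quantified identity handles both. The remaining manipulations are short routine substitutions. (This proposition is the converse of Proposition~\ref{2-sol}: it shows that, once an element $c$ and an ``inversion'' $\{\cdot\}$ of the indicated kind are present, every elementary $2$-solution of the TE comes from a $T_2$-groupoid.)
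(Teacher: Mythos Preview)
Your proof is correct and follows essentially the same approach as the paper: each $T_2$-groupoid axiom is obtained by specialising identity~(\ref{e2}) at suitable combinations of $c$ and applying the relations on $\{\cdot\}$. Your organisation is slightly more economical than the paper's in that you unify axioms 4 and 5 into a single identity $[x,[c,v,\{z\}],c]=[c,[\{x\},v,c],z]$ (with $v=\{y\}$ and $v=\{\{y\}\}$ respectively), whereas the paper verifies them separately; otherwise the computations match. One small caveat on your closing parenthetical: the proposition does \emph{not} quite show that every such elementary $2$-solution comes from a $T_2$-groupoid in the sense that $[x,y,z]=x\mR(y\mc z)$ --- the paper explicitly leaves this as an open question immediately after the proof.
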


\begin{proof}

It is easy to see that
$$
\begin{gathered}
	\shoveleft{ (x \ms y) \mc (z \ms w) = [c, [x, y, c], [z, w, c]] = \qquad\qquad}\\
	\shoveright{\qquad\qquad = [[c, x, z], [c, y, w], c] = (x \mc z) \ms (y \mc w);}
	\\
	\shoveleft{ x \mR (y \ms z) = [x, \{[y, z, c]\}, c] = [x, [\{y\}, \{z\}, c], [c, c, c]] = \qquad\qquad}\\
	\shoveright{\qquad\qquad = [[x, \{y\}, c], [x, \{z\}, c], c] = (x \mR y) \ms (x \mR z);}
	\\
	\shoveleft{ (x \mc y) \mL z = [c, \{[c, x, y]\}, z] = [[c, c, c], [c, \{x\}, \{y\}], z] = \qquad\qquad}\\
	\shoveright{\qquad\qquad = [c, [c, \{x\}, z], [c, \{y\}, z]] = (x \mL z) \mc (y \mL z);}
	\\
	\shoveleft{ (x \mR y) \mL z = [c, \{[x, \{y\}, c]\}, z] = \{[c, [x, \{y\}, c], [c, z, c]]\} = \qquad\qquad}\\
	\shoveright{\qquad\qquad = \{[[c, x, c], [c, \{y\}, z], c]\} = [x, \{[c, \{y\}, z]\}, c] = x \mR (y \mL z);}
	\\
	\shoveleft{ x \mR (y \mc z) = [x, \{[c, y, z]\}, c] = \{[[c, x, c], [c, y, z], c]\} = \qquad\qquad}\\
	\shoveright{\qquad\qquad = \{[c, [x, y, c], [c, z, c]]\} = [c, \{[x, y, c]\}, z] = (x \ms y) \mL z,}
\end{gathered}
$$
it means that all axioms hold and we have a $T_2$-groupoid.
\end{proof}

Hence, using the operations, constructed in the last proposition, one can define the map
$$ 
(x, y, z) \mapsto (x,  \mR (y \mc z), z) ,~~~x, y, z \in X,
$$ 
which gives a solution of the TE.

\begin{question}
	Is it true that this solution is the same as original one,
by the other words, is it true  that $ [x, y, z] = x \mR (y \mc z) $ for all $x, y, z \in X$?
\end{question}

\begin{example}
	Consider an algebraic system $ (V, \ms, \mc, \mL, \mR) $, where $V$ is a vector space and other  operations are defined as follows:
	$$
	\begin{aligned}
		x \ms y & := (1 - \b) x + \b y, \\
		x \mc y & := \b x + (1 - \b) y, \\
		x \mL y & := (1 - \b) x + y, \\
		x \mR y & := x + (1 - \b) y,
	\end{aligned}
	$$
	where $\b$ is some fixed endomorphism of  $ V $. One can check that this algebraic system is a $T_2$-groupoid and we get an elementary 2-solution $(V, R)$ of the TE if we take
$$R(x, y, z) = (x, (1-\b) x + \b y + (1-\b) z, z),~~~x, y, z \in V.$$
On the other side,  if $\b$ is an automorphism then by taking $c : = 0$ and $ \{ x \} := \b^{-1}x $ we can extract the  $T_2$-groupoid  $ (V, \ms, \mc, \mL, \mR) $ from the solution $(V, R)$.
\end{example}
\begin{remark} It is easy to see that in this $T_2$-groupoid   $ (V, \ms, \mc, \mL, \mR) $ the algebraic systems $(V, \ms)$	and $(V, \mc)$ are Alexander quandles.
\end{remark}


\bigskip

\section{Verbal solutions} \label{verb}

	Let $G$ be a group.
A {\it verbal solution} $(G, R)$ of the {\SE} is  a solution of the form
$$ 
R(g_1, g_2, \ldots, g_n) = (w_1(g_1, g_2, \ldots, g_n),  w_2(g_1, g_2, \ldots, g_n), \ldots, w_n(g_1, g_2, \ldots, g_n)),
$$
where $w_i = w_i(x_1, x_2, \ldots, x_n)$, $i = 1, 2, \ldots, n$, 
are reduced words  in the free group $F_n= \langle x_1, \dotss , x_n \rangle$.
A verbal solution $R$ of the {\SE} is said to be verbal {\it elementary $k$-solution} if each word $w_i$ is equal to $x_i$ with the exception of $w_k$.
	
\subsection{Universal verbal solutions}
If $(G, R)$ is a verbal  solution of the {\SE} and $\varphi : G \to H$ is a homomorphism, such that
$(Ker (\varphi), R|_{Ker (\varphi)})$ is a solution of the {\SE},
then it induces a solution $(\varphi(G), R^{\varphi})$, where 
$$
R^{\varphi}(\varphi(g_1), \varphi(g_2), \ldots, \varphi(g_n)) = (\varphi(g'_1), \varphi(g'_2), \ldots, \varphi(g'_n)),
$$
and
$$
R(g_1, g_2, \ldots, g_n) = (g'_1, g'_2, \ldots, g'_n).
$$

As we know, for arbitrary group $G$ there is a map $R : G^2 \to G^2$ what is an elementary solution to the YBE. For example, we can take some quandle on $G$, conjugacy quandle $Conj(G)$, or core quandle $Core(G)$ and construct elementary solution on $G$. In general case, we can formulate 

\begin{question}
Let $G$ be a  group. Is there a map $R : G^n \to G^n$, $n > 2$, that gives a bijective non-trivial (elementary) solution to the  {\SE}?
\end{question}
By a trivial solution we mean a permutation of components.

\medskip

In this subsection we are studying verbal elementary solutions of the TE on arbitrary group that is equivalent to study solutions on free non-abelian group $F_6$. At first recall some definitions from combinatorial group theory. We will consider $F_n$ as the free product of $n$ infinite cyclic groups, and present elements of $F_n$ as reduced words
\begin{equation} \label{word}
w=x_{i_1}^{\alpha_1} x_{i_2}^{\alpha_2}\ldots x_{i_k}^{\alpha_k},~~\alpha_j \in \mathbb{Z} \setminus \{ 0 \},~~i_j \not= i_{j+1}~\mbox{for}~j = 1, 2,\ldots,k-1.
\end{equation}
Any subword $x_{i_j}^{\alpha_j}$ is called a {\it syllable}. The number $k$ is called the syllable length of $w$ and is denoted by $l(w)$. The $i$-{\it  syllable length} of $w$ is the number of syllables which lie in $\langle x_i \rangle$, it is denoted by $l_i(w)$. 
For example, if $w = x_3^{-5} x_1^2 x_2^{-7} x_1 x_3^8 \in F_3$, then $l(w) = 5$,  $l_1(w) = l_3(w) = 2$, $l_2(w) = 1$.

Also we will use the cyclically reduced form of words. A word (\ref{word}) is called {\it cyclically reduced} if $i_1 \not= i_k$, or if $i_1 = i_k$, then $\alpha_1 \alpha_k > 0$.
If $w$ is not cyclically reduced, then $w \equiv u^{-1}  w_0 u$, where $w_0$ is a cyclically reduced subword of $w$ and $\equiv$ means equality of words (letter by letter). Then $w^m \equiv u^{-1}  w_0^m u$ for any integer $m$. For example, $x_3^{-5} x_1^2 x_2^{-7} x_1 x_3^8 = x_3^{-5} (x_1^2 x_2^{-7} x_1 x_3^3) x_3^5$ and the subword $x_1^2 x_2^{-7} x_1 x_3^3$ is cyclically reduced.

\begin{lemma}\label{reducedLemma}
	Let $w = w(x_1,\ldots, x_n)$ be a reduced word in the free group $F_n$, and $l_j(w)=k > 0$ for some $j$, $1\leq j \leq n$. Then $l_j(w^m)\ge k$ for any integer $m\ne 0$. Moreover, the first and the last symbols of $w$ coincide with the first and the last symbols of $w^m$, respectively, for all positive integer $m$.
\end{lemma}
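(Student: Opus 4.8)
The plan is to reduce everything to the cyclically reduced core of $w$. As recalled just above, we may write $w \equiv u^{-1} w_0 u$ with $w_0$ cyclically reduced and the right-hand side reduced as written. The first step is to verify that then $w^m \equiv u^{-1} w_0^m u$ for every integer $m$: at each of the $m-1$ internal junctions of $(u^{-1} w_0 u)^m$ the block $u\, u^{-1}$ cancels completely, and no further cancellation occurs, since each of the junctions $u^{-1} w_0$, $w_0 w_0$, $w_0 u$ is reduced — the outer two because they already appear inside the reduced word $w$, and $w_0 w_0$ because $w_0$ is cyclically reduced. One must keep in mind that ``no cancellation'' does not mean ``no merging'': two adjacent syllables at such a junction merge into a single syllable when they are powers of the same generator.

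The key observation is that the junction of $u^{-1}$ with $w_0^m$ involves exactly the last symbol of $u^{-1}$ and the first symbol of $w_0$ (as $w_0^m$ begins with $w_0$), and the junction of $w_0^m$ with $u$ involves the last symbol of $w_0$ and the first symbol of $u$ — the very same pairs of symbols that occur at the two inner junctions of $w$ itself. Hence the loss of syllable count caused by merging at these two boundaries is the same for $w$ and for $w^m$, and therefore
$$
	l_j(w^m) - l_j(w) = l_j(w_0^m) - l_j(w_0).
$$
Thus it remains to prove $l_j(w_0^m) \ge l_j(w_0)$ for a cyclically reduced word $w_0$ and every $m \ne 0$. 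Writing $w_0$ in syllable form, the only merging inside $w_0^m$ occurs at its $m-1$ internal junctions, each joining the last syllable of $w_0$ to its first; this merge is nontrivial only if both syllables are powers of one generator $x_i$, and then either $w_0 = x_i^{\alpha}$, so that $l_i(w_0^m) = 1 = l_i(w_0)$, or $l_i(w_0) \ge 2$. The elementary estimate $mr - (m-1) \ge r$ for $r \ge 1$ and $m \ge 1$ then gives $l_j(w_0^m) \ge l_j(w_0)$ for positive $m$; for negative $m$ one applies this to $w^{-1}$, since inversion preserves both $l_j$ and cyclic reducedness. This settles the first assertion.

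For the ``moreover'' part I would use $w^m \equiv u^{-1} w_0^m u$ once more. If $u \ne 1$, the first and last symbols of $w^m$ are the first symbol of $u^{-1}$ and the last symbol of $u$, which are precisely the first and last symbols of $w = u^{-1} w_0 u$. If $u = 1$, then $w = w_0$ is cyclically reduced and $w^m = w_0^m$ begins with the first syllable of $w_0$ and ends with its last syllable — neither can merge with anything outside the outermost copy of $w_0$ — so again $w^m$ and $w$ share their first and last symbols; when in addition $w_0 = x_i^{\alpha}$ one also uses $m > 0$ to see that the sign of the exponent, hence the leading and trailing symbol, is unchanged. The main obstacle here is purely a matter of bookkeeping: one has to be scrupulous that ``reduced as written'' still permits adjacent same-generator syllables to merge, and to verify that this merging happens identically at the two boundaries of $w$ and of $w^m$ — which is exactly what collapses $l_j(w^m) - l_j(w)$ to the cyclic quantity $l_j(w_0^m) - l_j(w_0)$, after which the remaining inequality is immediate.
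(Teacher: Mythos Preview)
Your proof is correct and follows essentially the same route as the paper's: both reduce to the cyclically reduced core via $w \equiv u^{-1} w_0 u$, handle the cyclically reduced case by the estimate $m\,l_j(w_0) - (m-1) \ge l_j(w_0)$, and then lift back through the conjugating word $u$. Your packaging via the identity $l_j(w^m) - l_j(w) = l_j(w_0^m) - l_j(w_0)$ is a clean way to say what the paper does by a two-case analysis of whether a boundary syllable of $u^{\pm 1}$ merges with a boundary syllable of $w_0$; the paper also notes (and you implicitly use) that merging cannot occur at both boundaries simultaneously, since that would force $w_0$ to begin and end with $x_j$-syllables of opposite sign, contradicting cyclic reducedness.
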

\begin{proof}
We will assume that $m >0$. The proof for the case $m <0$ is similar.

Suppose at first that $w$ is cyclically reduced. If $l(w) = l_j(w)= 1$, then $w = x_j^{\alpha}$ for some non-zero integer $\alpha$ and the need assertion is true. Suppose that $l(w) > 1$ and $w$ has a form (\ref{word}). If $i_1$ or $i_k$ is different from $j$, then $l_j(w^m)=km$. If $i_1= i_k=j$,
then $k>1$ and 
$$
l_j(w^m)=km-(m-1) = m(k-1) + 1.
$$ 
Since $m \geq 2$, then $m(k-1) + 1 \geq 2(k-1)+1= 2k-1 \geq k$.

If $w$ is not cyclically reduced, then $w \equiv u^{-1}  w_0 u$. If $k = l_j(w) = 2l_j(u) + l_j(w_0)$, then $ l_j(w^m) = 2l_j(u) + l_j(w_0^m)$. By the result above $l_j(w_0^m) \geq l_j(w_0)$. Hence, $ l_j(w^m) \geq 2l_j(u) + l_j(w_0) \geq k$. If $k = l_j(w) = 2l_j(u) + l_j(w_0)-1$ that it is possible if the last syllable of $u^{-1}$ and the first syllable of $w_0$ lies in $\langle x_j \rangle$, or the last syllable of $w_0$ and the first syllable of $u$ lies in $\langle x_j \rangle$ (notice that both possibilities can not be realized). In this case 
$$
l_j(w^m) = 2l_j(u) + l_j(w_0^m)-1 \geq 2l_j(u) + l_j(w_0)-1 \geq k.
$$

The second part of the lemma is evident.

\end{proof}

\begin{lemma}\label{dependLemma}
	Let $w = w(x_1, x_2)$ and $g=g(x_1, x_2, x_3)$ be  reduced words in the free group $F_5$, such that  $l(w)=k$,  $l_3(g)=m>0$, $l(g)>1$. If $n=l_3(w(g(x_1,x_2,x_3),g(x_4,x_5,x_3)))$, then 
	the following inequalities hold:
	\begin{align}
	&n\ge m, \quad &&\text{if} ~k=1, \label{ineq1}\\
	&n\ge 2(m-1) + (m-2)(k-2), \quad &&\text{if }k\ge 2,~m\ge2. \label{ineq3}
	\end{align}
	In the case $k\ge2$ and $g(x_1,x_2,x_3)=g_1(x_1,x_2)x_3^{\a}g_2(x_1,x_2)$, if $g_1$ and $g_2$ are non-trivial simultaneously then $n=k$.
\end{lemma}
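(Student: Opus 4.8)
The plan is to examine the reduced form of $W:=w\bigl(g(x_1,x_2,x_3),\,g(x_4,x_5,x_3)\bigr)$, using repeatedly that the two substituted words $A:=g(x_1,x_2,x_3)$ and $B:=g(x_4,x_5,x_3)$ are reduced words over the generating sets $\{x_1,x_2,x_3\}$ and $\{x_4,x_5,x_3\}$, whose only common member is $x_3$. If $k=1$, then $w$ is a single syllable, so $W=A^{\gamma}$ or $W=B^{\gamma}$ with $\gamma\ne 0$; since $l_3(A)=l_3(B)=l_3(g)=m>0$, Lemma \ref{reducedLemma} applied with $j=3$ gives $n=l_3(W)\ge m$, which is \eqref{ineq1}.

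Now assume $k\ge 2$ and $m\ge 2$. Write $w=x_{i_1}^{\gamma_1}\cdots x_{i_k}^{\gamma_k}$ in reduced form; because $i_j\in\{1,2\}$ and consecutive indices differ, the letters alternate, so $W=C_1^{\gamma_1}C_2^{\gamma_2}\cdots C_k^{\gamma_k}$ with $C_j\in\{A,B\}$ and $C_j\ne C_{j+1}$. Let $u_j$ denote the reduced form of $C_j^{\gamma_j}$; by Lemma \ref{reducedLemma} each $u_j$ has at least $m\ge 2$ syllables in $x_3$, hence length at least $3$. The key point is that any cancellation occurring when we form $u_ju_{j+1}$ can only remove a power of $x_3$ from each of the two factors: the maximal cancelled factor is simultaneously a suffix of a word over $\{x_1,x_2,x_3\}$ and the inverse of a prefix of a word over $\{x_4,x_5,x_3\}$ (or vice versa), hence lies in $\langle x_3\rangle$. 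Classifying the possibilities at such an interface — the trailing $x_3$-block of $u_j$ and the leading one of $u_{j+1}$ either fail to interact, or merge into a single block, or annihilate each other — one sees that at most one $x_3$-syllable of $u_j$ and at most one of $u_{j+1}$ is destroyed there, and that afterwards the exposed boundary letters lie in the disjoint sets $\{x_1,x_2\}^{\pm 1}$ and $\{x_4,x_5\}^{\pm 1}$, so no reduction propagates past a block (here $m\ge 2$ prevents any $u_j$ from being consumed entirely, so no cascade occurs).

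Consequently each $u_j$ loses at most one $x_3$-syllable at its left interface and at most one at its right interface, while the end blocks $u_1$ and $u_k$ have only one interface; summing the surviving syllables gives
$$
n \;=\; l_3(W)\;\ge\;\sum_{j=1}^{k}l_3(u_j)\;-\;2(k-1)\;\ge\;km-2(k-1)\;=\;2(m-1)+(m-2)(k-2),
$$
which is \eqref{ineq3}. For the last assertion, suppose $k\ge 2$ and $g=g_1(x_1,x_2)x_3^{\a}g_2(x_1,x_2)$ with $g_1,g_2$ both nontrivial, so $m=1$ and the single $x_3$-block of each $C_j$ is flanked by nonempty words over $\{x_1,x_2\}$, respectively over $\{x_4,x_5\}$. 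A direct computation of the reduced forms of the $C_j^{\gamma_j}$, combined with the interface analysis above (which now shows that no $x_3$-block can be absorbed at a junction), yields that exactly one $x_3$-syllable survives for each syllable of $w$ and none are created, so $n=k$.

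The main obstacle is the middle case, specifically making the interface analysis airtight: one must enumerate carefully the ways $u_j$ and $u_{j+1}$ interact according to whether each ends, respectively begins, with a power of $x_3$, bound the loss of $x_3$-syllables in each case, and — most delicately — verify that the complete consumption of the leading or trailing $x_3$-block of an interior $u_j$ cannot open a new cancellation between the two neighbouring blocks, so that the per-block bookkeeping is valid and not double counting.
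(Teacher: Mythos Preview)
Your argument for the two inequalities is correct and follows the paper's approach exactly: the paper too observes that at the junction between consecutive blocks $C_j^{\gamma_j}$ and $C_{j+1}^{\gamma_{j+1}}$ any cancellation lies in $\langle x_3\rangle$, so at most one $x_3$-syllable from each side can be lost, and once those boundary $x_3$-blocks are absorbed the exposed syllables lie in the disjoint sets $\{x_1,x_2\}$ and $\{x_4,x_5\}$, so nothing further cancels. Summing gives $n\ge km-2(k-1)=2(m-1)+(m-2)(k-2)$. Your write-up is in fact more careful than the paper's, which records this bound in two lines without making the no-cascade point explicit; the paper also does not address the final assertion inside this proof (it is restated and argued separately as Lemma~\ref{dependLemma2}).

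There is, however, a genuine gap in your treatment of the final assertion (the case $m=1$, $g=g_1\,x_3^{\alpha}g_2$ with both $g_i$ nontrivial). You claim that ``exactly one $x_3$-syllable survives for each syllable of $w$,'' but this is not justified and indeed fails: take $g=x_1x_3x_2$ and $w=x_1^2x_2$, so $k=2$; then $A^2=x_1x_3x_2x_1x_3x_2$ already carries two $x_3$-syllables, and $W=A^2B=x_1x_3x_2x_1x_3x_2\,x_4x_3x_5$ is reduced with $n=l_3(W)=3\ne k$. Your interface analysis is fine and shows correctly that no cancellation occurs \emph{between} the blocks $u_j$; the missing (and false) step is the claim that $l_3(u_j)=1$ for each $j$, which breaks whenever the product $g_2g_1$ does not reduce to the identity. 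What the argument actually yields is $n=\sum_j l_3(u_j)\ge k$; the paper's own proof of this clause (in Lemma~\ref{dependLemma2}) makes the same leap from ``no cancellation between blocks'' to ``$n=k$'' without controlling $l_3(u_j)$. Since the only use of this clause later in the paper is to force $n>0$, the inequality $n\ge k$ suffices, and you may want to reformulate the conclusion accordingly.
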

\begin{proof}
Suppose that
\begin{equation} \label{word1}
w=x_{1}^{\alpha_1} x_{2}^{\beta_1}  x_{1}^{\alpha_2} x_{2}^{\beta_2}
\ldots x_{1}^{\alpha_s} x_{2}^{\beta_s},~~\alpha_j,  \b_j \in \mathbb{Z},~\mbox{and only}~\a_1\,  \mbox{and/or}~ \b_s~\mbox{can be zero}.
\end{equation}

	If $k=1$, then $w$ is a power of $x_1$ or a power of $x_2$. In this case  the statement follows from Lemma \ref{reducedLemma}.

Let us prove 	 (\ref{ineq3}). If $m\ge2$, then in the word $g(x_1,x_2,x_3)^a \, g(x_4,x_5,x_3)^b$ cancellation of $x_3$ can occur only once, therefore, 
$$
l_3(g(x_1,x_2,x_3)^a \, g(x_4,x_5,x_3)^b) \ge (m - 1) + (m - 1).
$$
 Further, in the word $g(x_1,x_2,x_3)^a \, g(x_4,x_5,x_3)^b \, g(x_1,x_2,x_3)^c$ cancellation of  $x_3$ can occur only twice, therefore 
 $$
 l_3(g(x_1,x_2,x_3)^a \, g(x_4,x_5,x_3)^b \, g(x_1,x_2,x_3)^c)\ge (m - 1) + (m - 2) + (m - 1).
 $$
  Applying this argument $k$ times we get (\ref{ineq3}).
\end{proof}

Using the notations of the previous lemma we get

\begin{lemma}\label{dependLemma2}
	Let $w = w(x_1, x_2)$, $l(w)=k \ge 2$  and 
$$
g(x_1,x_2,x_3)\equiv g_1(x_1,x_2) \, x_3^{\a} \, g_2(x_1,x_2),~~ g_i \not\equiv 1,~~\alpha \not= 0.
$$
 If $n=l_3(w(g(x_1,x_2,x_3),g(x_4,x_5,x_3)))$, then $n=k$.
\end{lemma}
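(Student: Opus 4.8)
The plan is to substitute the reduced syllable form of $w$ into the two three‑variable words $g(x_1,x_2,x_3)$ and $g(x_4,x_5,x_3)$, to observe that the resulting product of ``blocks'' is already reduced at the joins between consecutive blocks, and then to count the $x_3$‑syllables one block at a time.

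First I would write $w$ in its reduced syllable form $w\equiv y_1 y_2\cdots y_k$, where each $y_j$ is a non‑trivial power of $x_1$ or of $x_2$ and consecutive $y_j$ lie in different cyclic factors. Put $G:=g(x_1,x_2,x_3)\equiv g_1(x_1,x_2)\,x_3^{\alpha}\,g_2(x_1,x_2)$ and $G':=g(x_4,x_5,x_3)\equiv g_1(x_4,x_5)\,x_3^{\alpha}\,g_2(x_4,x_5)$. Substituting $x_1\mapsto G$ and $x_2\mapsto G'$ turns $w\bigl(g(x_1,x_2,x_3),g(x_4,x_5,x_3)\bigr)$ into a concatenation $B_1B_2\cdots B_k$ of $k$ blocks, where $B_j$ is a non‑zero power of $G$ when $y_j$ is a power of $x_1$ and a non‑zero power of $G'$ when $y_j$ is a power of $x_2$; since the syllables of $w$ alternate between $\langle x_1\rangle$ and $\langle x_2\rangle$, the blocks $B_j$ alternate in type between powers of $G$ and powers of $G'$.

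The key step, which I expect to be the main obstacle, is to show that no cancellation takes place at the joins $B_jB_{j+1}$. Because $g_1$ and $g_2$ are non‑trivial and involve no $x_3$, every non‑zero power $G^{e}$ is a reduced word that both begins and ends with a non‑trivial subword lying in the free factor $\langle x_1,x_2\rangle$ (namely one of $g_1,g_2,g_1^{-1},g_2^{-1}$; the assertion about its first and last letters is exactly the kind of statement supplied by Lemma~\ref{reducedLemma}), and likewise every non‑zero power $(G')^{e}$ begins and ends with a non‑trivial subword lying in $\langle x_4,x_5\rangle$. As the blocks alternate in type, at each join the two subwords that meet belong to the two \emph{disjoint} free factors $\langle x_1,x_2\rangle$ and $\langle x_4,x_5\rangle$, so they cannot cancel; hence $B_1\cdots B_k$ is reduced as written (up to the internal reductions already carried out inside each block), and therefore $l_3\bigl(w(g(x_1,x_2,x_3),g(x_4,x_5,x_3))\bigr)=\sum_{j=1}^{k}l_3(B_j)$.

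Finally I would count the $x_3$‑syllables coming from a single block. Here the form $g\equiv g_1 x_3^{\alpha}g_2$ with $g_1,g_2\not\equiv 1$ is used once more: the unique $x_3$‑syllable of $g$ is flanked on both sides by non‑trivial $x_3$‑free words, so inside any power of $G$ (or of $G'$) the $x_3$‑segments coming from different copies of $g$ never fuse into a single segment of the reduced word, and a short computation gives $l_3(B_j)=1$ for every $j$. Summing over the $k$ blocks yields $n=k$, as claimed. The only delicate points are thus the no‑cancellation statement at the joins and this last per‑block count; everything else is bookkeeping with the free‑product structure of $F_5$.
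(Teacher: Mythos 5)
Your block decomposition and the no-cancellation argument at the joins are exactly the paper's strategy, and your justification of the join step (consecutive blocks alternate between powers of $G$ and of $G'$, which begin and end in the disjoint free factors $\langle x_1,x_2\rangle$ and $\langle x_4,x_5\rangle$, so nothing can cancel across a join) is more explicit than the paper's one-line remark that cancellations are possible only inside the blocks. Up to that point you are fine, and you correctly reduce the problem to computing $\sum_j l_3(B_j)$.

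The gap is in the last step. The claim that $l_3(B_j)=1$ for every block is wrong, and your own justification contradicts it: if the $x_3$-segments coming from the different copies of $g$ inside $G^{e}$ ``never fuse'', then $l_3(G^{e})=|e|$, not $1$. Whether they fuse depends on whether $g_2g_1$ reduces to the empty word, which is not excluded by $g_1,g_2\not\equiv 1$ (take $g_2=g_1^{-1}$). In fact the equality $n=k$ asserted in the statement fails as soon as some syllable of $w$ has exponent of absolute value at least $2$ while $g_2g_1\not\equiv 1$: for $w=x_1^2x_2$ and $g=x_1x_3x_2$ the substitution gives the reduced word $x_1x_3x_2x_1x_3x_2x_4x_3x_5$, so $n=3$ while $k=2$. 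What the block decomposition actually yields is $n\ge k$: each block is a nonzero power of a word with $l_3=1$, hence contributes at least one $x_3$-syllable by Lemma \ref{reducedLemma}, and nothing merges across joins; this inequality is all that is needed where the lemma is applied later. The paper's own proof jumps to ``therefore $n=k$'' without any per-block count, so it shares this defect; but you should either prove only $n\ge k$, or split into the two subcases $l_3(G^{e})=|e|$ (when $g_2g_1\not\equiv 1$) and $l_3(G^{e})=1$ (when $g_2g_1\equiv 1$), rather than assert $l_3(B_j)=1$ outright.
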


\begin{proof}
Suppose that
\begin{equation} \label{word12}
w=x_{1}^{\alpha_1} x_{2}^{\beta_1}  x_{1}^{\alpha_2} x_{2}^{\beta_2}
\ldots x_{1}^{\alpha_s} x_{2}^{\beta_s},~~\alpha_j,  \b_j \in \mathbb{Z},~\mbox{and only}~\a_1\,  \mbox{and/or}~ \b_s~\mbox{can be zero}.
\end{equation}
Since the words $g_1$ and $g_2$ are not trivial, then 
$$
w(g(x_1,x_2,x_3), g(x_4,x_5,x_3)) = w(g_1(x_1,x_2) \, x_3^{\alpha} \, g_2(x_1,x_2), g_1(x_4,x_5) \, x_3^{\alpha} \, g_2(x_4,x_5)) =
$$	
$$
=\left(g_1(x_1,x_2) \, x_3^{\alpha} \, g_2(x_1,x_2) \right)^{\a_1}  \, (g_1(x_4,x_5) \, x_3^{\alpha} \, g_2(x_4,x_5))^{\b_1} \, (g_1(x_1,x_2) \, x_3^{\alpha} \, g_2(x_1,x_2))^{\a_2}  \cdot
$$
$$
\cdot (g_1(x_4,x_5) \, x_3^{\alpha} \, g_2(x_4,x_5))^{\b_2} \ldots (g_1(x_1,x_2) \, x_3^{\alpha} \, g_2(x_1,x_2))^{\a_s}  \, (g_1(x_4,x_5) \, x_3^{\alpha} \, g_2(x_4,x_5))^{\b_s}.
$$	
We can see that cancellations are possible only inside the words 
$$
\left(g_1(x_1,x_2) \, x_3^{\alpha} \, g_2(x_1,x_2) \right)^{\a_i}~  \mbox{or}~ (g_1(x_4,x_5) \, x_3^{\alpha} \, g_2(x_4,x_5))^{\b_i},
$$
but not possible between these words.
Therefore $n=k$.

\end{proof}

A description of verbal elementary  3-solutions to the TE gives

\begin{theorem}
	Let $R\colon G^3\to G^3$ be a verbal  elementary  3-solution of the TE for every group $G$, then one of following cases holds:
	\begin{enumerate}
		\item $R(x,y,z)=(x,y,yx^{-1})$,
		\item $R(x,y,z)=(x,y,x^{-1}y)$,
		\item $R(x,y,z)=(x,y,w(y,z))$, where $R'(y,z)=(y,w(y,z))$ is a solution of the YBE for every group $G$.
	\end{enumerate}
\end{theorem}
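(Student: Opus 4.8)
\noindent Since $R$ is verbal, it solves the TE over every group $G$ exactly when it solves it over the free group $F_{6}=\langle x,y,z,t,p,q\rangle$. Write $R(a,b,c)=(a,b,w(a,b,c))$ with $w$ a reduced word in $F_{3}=\langle x,y,z\rangle$; by Corollary \ref{c3} this $R$ satisfies the TE if and only if the word identity
$$
w\big(w(x,y,z),w(x,t,p),w(y,t,q)\big)=w\big(y,t,w(z,p,q)\big)
$$
holds in $F_{6}$, and the task is to extract from this single identity all possible $w$. The decisive feature is that $x$ does not occur on the right, whereas on the left it enters only through the first two arguments $A:=w(x,y,z)$ and $B:=w(x,t,p)$, the third argument $C:=w(y,t,q)$ being $x$-free; hence every occurrence of $x$ in the reduced form of $w(A,B,C)$ must cancel out. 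The plan is to split on $l_{1}(w)$, the number of $x$-syllables of $w$.

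If $l_{1}(w)=0$, so $w=w(y,z)$, the identity collapses to $w(w(t,p),w(t,q))=w(t,w(p,q))$, i.e.\ $w(w(a,b),w(a,c))=w(a,w(b,c))$ for all $a,b,c$. Checking the three conditions of Lemma \ref{YBEF} for the map $R'(a,b)=(a,w(a,b))$ shows that the first two are automatic (because $f(a,b)=a$) and the third is precisely this identity, so this branch is exactly case (3) of the theorem.

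The substantial part is $l_{1}(w)\ge 1$, handled with the syllable-length Lemmas \ref{reducedLemma}, \ref{dependLemma} and \ref{dependLemma2}. By Lemma \ref{reducedLemma} a power of a reduced word never loses syllables in a fixed free factor, so every $A$-block and every $B$-block occurring in $w(A,B,C)$ retains at least $l_{1}(w)$ occurrences of $x$; if $l_{1}(w)\ge 2$, some $A$-block then carries an occurrence of $x$ flanked inside the fixed word $A$ by $y$- or $z$-syllables, and that occurrence can never be cancelled, contradicting the $x$-freeness of the right side. Hence $l_{1}(w)=1$, and $w=w_{1}(y,z)\,x^{\alpha}\,w_{2}(y,z)$ with $\alpha\ne 0$ and $w_{1},w_{2}\in\langle y,z\rangle$. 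A finer analysis of how the unique $x$-block can be annihilated in $w(A,B,C)$ --- this is where the ``no cancellation'' conclusions of Lemmas \ref{dependLemma} and \ref{dependLemma2} are invoked --- forces the $x$-syllable to lie at an end of $w$ (so $w_{1}\equiv 1$ or $w_{2}\equiv 1$) with the adjacent factor lying in $\langle y\rangle$; abelianizing the identity then pins down $\alpha=-1$ and forces the $y$-exponent to be $1$, leaving $w=x^{-1}y$ or $w=yx^{-1}$, and a direct substitution into the identity confirms both work, yielding cases (1) and (2). I expect the main obstacle to be this last reduction for $l_{1}(w)=1$: one has to keep track at once of the cancellations between the single $A$-block and the several $B$-blocks, which is exactly the bookkeeping the technical lemmas are designed to make rigorous.
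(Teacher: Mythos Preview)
Your setup is correct (the identity from Corollary~\ref{c3} and the reduction to $F_6$), and the $l_1(w)=0$ branch is handled exactly as in the paper. But the passage to $l_1(w)=1$ has a genuine gap.

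Your claim for $l_1(w)\ge 2$ --- that ``some $A$-block carries an occurrence of $x$ flanked inside $A$ by $y$- or $z$-syllables'' --- is false as stated. Take $w=xyx^{-1}$: here $l_1(w)=2$, yet in $A=xyx^{-1}$ both $x$-syllables sit at the ends, with nothing flanking them on the outer side. The conclusion you want (that some $x$ survives in $w(A,B,C)$) is still true for this $w$, but not for the reason you give; the actual mechanism is the inequality of Lemma~\ref{dependLemma}, which bounds $l_x$ from below by $2(m-1)+(m-2)(k-2)$ and hence forces $m=1$. That inequality, however, is stated for two-variable outer words and two inner words sharing exactly one letter, so to invoke it directly you first need to know which other variables occur in $w$.

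This is exactly why the paper organises the argument differently: rather than splitting on $l_1(w)$, it splits on \emph{which} of $y,z$ occur alongside $x$. Case $w=w(x,z)$ dies instantly (the left side depends on $x$, the right does not). Case $w=w(x,y)$ reduces the identity to $w(w(x,y),w(x,t))=w(y,t)$, to which Lemma~\ref{dependLemma} now applies cleanly and forces $w=y^a x^b y^c$, then a short check gives $yx^{-1}$ or $x^{-1}y$. The genuinely hard case is $w$ depending on all three variables: here the paper decomposes by $z$-syllables, $w=z^{\alpha_1}g_1(x,y)z^{\alpha_2}\cdots$, and uses Lemma~\ref{dependLemma2} to argue each $g_i(w(x,y,z),w(x,t,p))$ must be $x$-free, which forces the single $x$-syllable to an end of $w$ and then to a contradiction. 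Your sketch for $l_1(w)=1$ does not distinguish these cases, and in particular gives no mechanism for eliminating $z$ from $w$; the phrase ``a finer analysis forces the adjacent factor to lie in $\langle y\rangle$'' is precisely the content of the paper's Case~2, and it is not a consequence of the syllable-counting lemmas alone.
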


\begin{proof}
	If 
$$
R(x,y,z)= (x,y,w(x,y,z)),~~x, y, z \in G,
$$
 is a solution of the TE, then by Corollary \ref{c1}
	\begin{align}
	\label{eqWord}
	w(w(x,y,z), w(x,t,p), w(y,t,q))=w(y, t, w(z,p,q)).
	\end{align}
	
	Suppose at first, that $w$ does not depend on $x$, i.e.  $w(x,y,z)=w(y,z)$.	In this case the equality (\ref{eqWord}) takes the form
	$$w(w(t,p), w(t,q))=w(t, w(p,q)).$$ 
	Therefore $R'(y,z)=(y,w(y,z))$ is a solution of the YBE for every group $G$.
	
	Suppose further that  $w(x,y,z)$ depends on $x$. We will consider 3 cases:
	
	{\it Case 1}: $w(x,y,z)=w(x,z)$.\\
	In this case the equality (\ref{eqWord}) is rewriting as
	\begin{align}
	w(w(x,z),w(y,q))=w(y,w(z,q)), \label{case2a}
	\end{align}
and  it is easy to see that the left side of this equality depends on $x$. So this case is not possible.
	
	{\it Case 2}: $w(x,y,z)=w(x,y,z)$, i.e. $w$ depends on all generators.\\
In this case we can assume that  
$$
w(x,y,z) \equiv z^{\a_1}g_1(x,y)z^{\a_2}\ldots g_n(x,y)z^{\a_{n+1}}, ~~\alpha_j\in \mathbb{Z},~\mbox{and only}~\a_1\,  \mbox{and/or}~ \a_{n+1}~\mbox{can be zero}.
$$
	So we can write the left side of (\ref{eqWord}):
	$$(w(y,p,q))^{\a_1}g_1(w(x,y,z),w(x,t,p))(w(y,p,q))^{\a_2}\ldots g_n(w(x,y,z),w(x,t,p))(w(y,p,q))^{\a_{n+1}}.$$  
	Note that for every $g_i(x_1,x_2)$ the word $g_i(w(x,y,z),w(x,t,p))$ depends on $t$ and $z$. Moreover since $(w(y,p,q))^{\a_i}$ depends on $q$ for every nonzero $\a_i$, we obtain that the words $g_i(w(x,y,z),w(x,t,p))$ must not depend on $x$. It follows from Lemma \ref{dependLemma2} that if the word $g_i(w(x,y,z),w(x,t,p))$ does not depend on $x$, then $w=x^{\g}h(y,z)$ or $w=h(y,z)x^{\g}$ for some $0\ne \g\in \mathbb{Z}$ and $h(x,y)\in F_2(x,y)$. 
	
	Suppose that $w=x^{\g}h(y,z)$. If there exist $g_{i_0}=y^k$  for some $k\in \mathbb{Z}$, then $g_{i_0}(w(x,y,z),w(x,t,p))$ depends on $x$ by Lemma \ref{reducedLemma}. We get the contradiction. Therefore $h(y,z)=y^{\a}z^{\b}$, where $\a,\b\in\mathbb{Z}\backslash\{0\}$. So in this case we obtain
	$$(x^{\g}y^{\a}z^{\b})^{\g}(x^{\g}t^{\a}p^{\b})^{\a}(y^{\g}p^{\a}q^{\b})^{\b}=y^{\g}t^{\a}(z^{\g}p^{\a}q^{\b})^{\b}.$$
	Cancellation of $x$ can occur only if $\g=-1,\a=1$, then
	$$z^{-\b}y^{-1}tp^{\b}(y^{-1}pq^{\b})^{\b}=y^{-1}t(z^{-1}pq^{\b})^{\b}.$$
	Since $\b\ne0$ we get the contradiction. 
	
	By analogy we can show that the case $w=h(y,z)x^{\g}$ also is not possible.
	
	{\it Case 3}:  $w(x,y,z)=w(x,y)$.\\
	In this case equality (\ref{eqWord}) is rewriting as
	\begin{align}
	w(w(x,y),w(x,t))=w(y,t).\label{case2c}
	\end{align}
	We know that $l(w(x,y)) = k\ge 2$, and at first we assume that $l_x(w(x,y)) = m\ge 2$. Then 
$$
l_x(w(w(x,y),w(x,t)))\ge 2(m-1) + (m-2)(k-2)\ge 2(m-1) > 0.
$$
Therefore  $w(w(x,y),w(x,t))$ depends on $x$. So we obtain that $l_x(w(x,y)) = m = 1$ or equivalently $w(x,y)\equiv y^ax^by^c$, where $b\ne0$.
	
	It is not difficult to prove that $w(x,y) \equiv y^ax^by^c$ satisfies (\ref{case2c}) if and only if $w \equiv x^{-1}y$ or $w \equiv yx^{-1}$.
\end{proof}

\begin{remark}
	If $P_{13}(x, y, z) = (z, y, x)$ is a permutation of the first and the third components and  $R(x,y,z) = (x, y, w(x,y,z))$ is an elementary 3-solution to the TE, then
	$$
	P_{13} R P_{13} (x, y, z) = (w(z,y,x),y,z)
	$$
	is an elementary 1-solution to the TE. 
\end{remark}

A description of verbal elementary 2-solutions gives

\begin{theorem}
	Let $R\colon G^3\to G^3$ be a verbal elementary 2-solution of the TE for every group $G$, then one of the following cases holds:
	\begin{enumerate}
		\item $R(x,y,z)=(x,w(x,z),z)$, where $w \equiv xz$ or $w \equiv zx$,
		\item $R(x,y,z)=(x,w(x,y),z)$, where $R'(x,y)=(x,w(x,y))$ is a solution of the YBE for every group $G$,
		\item $R(x,y,z)=(x,w(y,z),z)$, where $R'(y,z)=(y,w(z,y))$ is a solution of the YBE for every group $G$.
	\end{enumerate}
\end{theorem}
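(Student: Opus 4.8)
The plan is to mimic the structure of the proof of the preceding theorem (classification of elementary 3-solutions) but applied to the ``middle'' component. By Corollary \ref{c2}, a verbal map $R(x,y,z) = (x, w(x,y,z), z)$ is a solution of the TE precisely when
$$
w\bigl(w(x,y,z),\, w(x,t,p),\, q\bigr) = w\bigl(x,\, w(y,t,q),\, w(z,p,q)\bigr)
$$
holds as an identity in the free group $F_6 = \langle x,y,z,t,p,q\rangle$. First I would dispose of the case where $w$ does not depend on $y$: writing $w(x,y,z) = w(x,z)$, the identity becomes $w(w(x,z),q) = w(x, w(z,p,q))$, whose left side cannot depend on $z$ while the right side manifestly does (by Lemma \ref{reducedLemma}, since $l_z(w(z,p,q)) \geq l_z(w) > 0$ forces a genuine dependence that survives substitution into the $y$-slot of $w$), a contradiction unless $w$ is independent of both $x$ and $z$, i.e. $w$ is a power of $y$ — but then $R$ is (a power-map variant of) a trivial solution, which one checks directly reduces to case (2) or (3) with $w = y$. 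So we may assume $w$ depends on $y$.

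Next, assuming $w$ depends on $y$, I would split into the three cases according to which of $x,z$ also occur, exactly as in the 3-solution proof:
\emph{Case A:} $w = w(x,y)$ (independent of $z$). Then the identity collapses to $w(w(x,y), w(x,t), q) = w(x, w(y,t,q))$; since the right side is independent of $q$ only if... — here I would argue that the $q$ on the left must cancel, which, combined with $w$ depending on $y$ on both sides, forces $w(x,y) = w(x,y)$ to satisfy $w(w(x,y),w(x,t)) = w(y,t)$, i.e. $R'(x,y) = (x, w(x,y))$ is a YBE solution (case (2)). \emph{Case B:} $w = w(y,z)$ (independent of $x$), handled symmetrically: the identity becomes $w(w(y,z), w(t,p), q)$-type after noting the $x$-freedom, and reduces to $R'(y,z) = (y, w(z,y))$ being a YBE solution (case (3)) — the order reversal $w(z,y)$ appears because of how the variables sit in the two slots, and I would track this carefully against Lemma \ref{YBEF}. \emph{Case C:} $w$ depends on all three of $x,y,z$. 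This is where the real work lies: writing $w \equiv y^{\alpha_1} g_1(x,z)\, y^{\alpha_2}\cdots g_n(x,z)\, y^{\alpha_{n+1}}$, substitute and use Lemmas \ref{reducedLemma}, \ref{dependLemma}, \ref{dependLemma2} to control syllable lengths: the subwords $g_i(w(x,y,z), w(x,t,p))$ acquire dependence on $z$ and $t$, which cannot be cancelled against the $y$- and $q$-parts on the left, forcing the $g_i$ to be monomials and then $w = x^{\gamma} y^{\delta} z^{\varepsilon}$-shaped; plugging this back into the identity and examining exponents, I expect the only surviving possibilities to be $w \equiv xz$ and $w \equiv zx$, which are exactly case (1), and one verifies both are genuine solutions.

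The main obstacle I anticipate is Case C, specifically showing that no genuinely $y$-dependent $w$ (with $y$ appearing ``in the middle'') can work: the syllable-counting lemmas give lower bounds that rule out most forms, but one must be careful about the degenerate shapes $w = x^\gamma h(y,z)$, $w = h(y,z)x^\gamma$, $w = h(x,y) z^\gamma$, $w = z^\gamma h(x,y)$ (the analogues of what appears in the 3-solution proof), each of which requires a small separate exponent computation to eliminate. A secondary subtlety is the bookkeeping in Cases A and B to confirm that the resulting reduced identity is \emph{exactly} the YBE for $R'(x,y) = (x, w(x,y))$ resp. $R'(y,z) = (y, w(z,y))$ — in particular getting the argument order right in case (3), since the naive substitution produces $w$ with its arguments in the opposite order to the YBE convention used in Lemma \ref{YBEF}. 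Once these are checked, the three cases exhaust all possibilities and the theorem follows.
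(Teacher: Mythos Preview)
Your case decomposition has a genuine error. You begin by trying to eliminate the case where $w$ does not depend on $y$, claiming the identity reduces to $w(w(x,z),q) = w(x, w(z,q))$ and that ``the left side cannot depend on $z$''. But the left side $w(w(x,z),q)$ manifestly \emph{does} depend on $z$ through its first argument, so your contradiction argument fails. In fact, this is exactly the case that produces the genuine solutions $w\equiv xz$ and $w\equiv zx$ of item (1): these are independent of $y$, and they satisfy the associativity-type identity $w(w(x,z),q)=w(x,w(z,q))$. You then compound the error in Case~C, where you assume $w$ depends on all three variables and yet expect to recover $xz$ and $zx$ as the surviving forms --- but those words do not contain $y$, contradicting your own case hypothesis.

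The paper's decomposition avoids this by splitting first on whether $w$ depends on $x$ (not on $y$). If $w=w(y,z)$ one gets case (3) directly. If $w$ does depend on $x$, one then splits into the three subcases $w=w(x,z)$, $w=w(x,y,z)$, $w=w(x,y)$. The $w=w(x,z)$ subcase gives the equation $w(w(x,z),q)=w(x,w(z,q))$; here syllable counting via Lemma~\ref{reducedLemma} bounds $l_1(w),\,l_2(w)\in\{1,2\}$, and a finite check yields only $xz$ and $zx$. The genuinely three-variable subcase is then shown to be empty using Lemmas~\ref{dependLemma} and~\ref{dependLemma2} to force $l_x(w),\,l_z(w)\in\{1,2\}$ followed by a finite check. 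Finally $w=w(x,y)$ reduces to the YBE for $R'(x,y)=(x,w(x,y))$ --- note that in this subcase the identity is $w(w(x,y),w(x,t))=w(x,w(y,t))$ with no $q$ at all, contrary to what you wrote in your Case~A. Reorganise your cases along these lines and the argument goes through.
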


\begin{proof}
	If 
$$
R(x,y,z) =  (x,w(x,y,z),z),~~~x, y, z \in G,
$$
 is a solution of the TE, then by Corollary \ref{c2}
	\begin{align}
	\label{eqWord*}
	w(w(x,y,z),w(x,t,p),q)=w(x,w(y,t,q),w(z,p,q)).
	\end{align}
	
	Suppose at first, that $w$ dos not depend on $x$, i.e.  $w(x,y,z)=w(y,z)$.	In this case the equality (\ref{eqWord*}) takes the form
	$$w(w(t,p),q)=w(w(t,q),w(p,q)).$$ 
	Therefore $R'(y,z)=(y,w(z,y))$ is a solution of the YBE for every group $G$.
	
	Suppose further that  $w(x,y,z)$ depends on $x$. We will consider 3 cases:
	
	{\it Case 1}: $w(x,y,z)=w(x,z)$.\\
	In this case equality (\ref{eqWord*}) is rewriting as
	\begin{align}
	w(w(x,z),q)=w(x,w(z,q)).\label{case2a*}
	\end{align}
	Let $l_2(w(x_1,x_2))=k\ge1$ and $l_1(w(x_1,x_2))=m\ge1$. From equality (\ref{case2a*}) and Lemma \ref{reducedLemma} we have $$m=l_x(w(x,w(z,q)))=l_x(w(w(x,z),q))\ge(k-1)m.$$ Therefore $l_2(w(x_1,x_2))\in \{1,2\}$. Similarly we get $l_1(w(x_1,x_2))\in \{1,2\}$.
	So, using direct calculations in this case we obtain two solutions: $w(x,z) \equiv xz$ and $w(x,z)  \equiv zx$.
	
	{\it Case 2}: $w(x,y,z)=w(x,y,z)$.\\
	Denote by $m=l_1(w(x_1,x_2,x_3))$ and $n=l_3(w(x_1,x_2,x_3))$. Then 
	$$
	w \equiv \omega_1(x_1,x_2) \, x_3^{\a_1} \, \omega_2(x_1,x_2)\ldots x_3^{\a_n} \, \omega_{n+1}(x_1,x_2).
	$$
	
	Let us assume that $m\ge3$. It is easy to see that $$m=l_x(w(x,w(y,t,q),w(z,p,q)))=\sum_{i=1}^{n+1}l_x(\omega_i(w(x,y,z),w(x,t,r))).$$
	Let us assume that there is a subword $\omega_j(x_1,x_2)$ of $w$ such that $k=l(\omega_j(x_1,x_2))>1$. Then by Lemma \ref{dependLemma} we have $$m\ge l_x(\omega_j(w(x,y,z),w(x,t,r)))\ge 2(m-1)+(m-2)(k-2)\ge 2m-2> m.$$
	So, all $\omega_i(x_1,x_2)$ have syllable length one i.e. $\omega_i(x_1,x_2)  \equiv x_1^{\a_i}$ or $\omega_i(x_1,x_2)  \equiv x_2^{\b_i}$. By Lemma \ref{reducedLemma} we have $l_x(\omega_j(w(x,y,z),w(x,t,r)))\ge m$. Therefore $l+1=1$ in sum above, but the word $w$ depends on $x_3$. Hence $m\in\{1,2\}$.
	
	Similarly, it is established that $n\in\{1,2\}$. We get a finite number of possible variants of the word $w$. By checking it is easy to make sure that none of them is a solution.
	
	{\it Case 3}:  $w(x,y,z)=w(x,y)$.\\
	In this case the equality (\ref{eqWord}) is rewriting as
	\begin{align}
	w(w(x,y),w(x,t))=w(x,w(y,t)).\label{case2c*}
	\end{align}
	Therefore $R'(x,y)=(x,w(x,y))$ is a solution of the YBE for every group $G$.
\end{proof}

From these results follows that it is not possible to find  non-trivial invertible verbal elementary  solutions to the TE on free non-abelian groups. We do not know are there exist non-elementary solutions of this type.
On the other side, I.~G.~Korepanov constructed non-trivial invertible solutions to the $4$-SE on free non-abelian (and, hence, on arbitrary) group.

\begin{example}[I. G. Korepanov] \label{Kor}
Let $G$ be an arbitrary group and $a, b$ are two fixed elements of $G$. Then the map
$$
(x, y, z, w) \mapsto (y w^{-1} a, x b z, w, z),~~x, y, z, w \in G,
$$ 
defines a solution to the $4$-SE on $G$. It is easy to see that if $a=b=1$, then this solution is a verbal one.
\end{example}

We studied verbal universal elementary solution for the TE. Also, it is possible to study solutions in some classes of groups, abelian, nilpotent, solvable and so on.

\begin{definition}
Let $R : F_n \to F_n$ be an endomorphism of the free group that is defined on the generators,
$$ 
R(x_1, x_2, \ldots, x_n) = (w_1(x_1, x_2, \ldots, x_n),  w_2(x_1, x_2, \ldots, x_n), \ldots, w_n(x_1, x_2, \ldots, x_n)),
$$
$\mathcal{V}$ be  a variety of groups. We will say that $R$ defines a {\it verbal universal} $\mathcal{V}$-solution for {\SE}, if $(G, R)$ is a solution 
for any group $G \in \mathcal{V}$. If $\mathcal{V}$ is variety of abelian groups, we say on universal verbal abelian solution, if $\mathcal{V}$ is variety of nilpotent groups, we say on universal verbal nilpotent solution, and so on.
\end{definition}

\subsection{Verbal solutions in abelian groups} To find verbal universal abelian solutions for {\SE} one can consider free abelian group of rank $n$.
Verbal solutions of the {\SE} for the free abelian group $\Z^n$ are in one-to-one correspondence with linear solutions over the ring $\Z$. Any linear solution $R$ over $\Z $ may be written by matrix:
	$$R(x) = M x, \quad x \in \Z^n, \ M \in \mathcal{M}_n(\Z).$$

\begin{proposition}\label{transposition} If a map $R$ given by a matrix $M$ is a solution of the {\SE}, then a map, defined by a transposed matrix $M^T$ is also a solution of the {\SE}.
\end{proposition}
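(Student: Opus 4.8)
The plan is to regard the {\SE} as an identity in the matrix algebra $\mathcal M_N(\Z)$, where $N = n(n+1)/2$, and to apply the transposition anti-automorphism to both of its sides.

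First I would write the maps $R_{\bar k}$ in matrix form. Since $R$ is the linear map $x \mapsto Mx$, each $R_{\bar k}\colon \Z^N\to\Z^N$ is linear; let $M_{\bar k}$ denote its matrix. If $\bar k = (k_1,\dotss,k_n)$ and $M = (m_{ij})$, then $(M_{\bar k})_{k_i,k_j} = m_{ij}$, $(M_{\bar k})_{\ell,\ell} = 1$ for every index $\ell \in \{1,\dotss,N\}$ not appearing in $\bar k$, and every other entry of $M_{\bar k}$ is $0$. From this description one reads off the one nontrivial point of the argument,
$$
(M_{\bar k})^T = (M^T)_{\bar k},
$$
where the right-hand side denotes the matrix of $(R')_{\bar k}$ for the linear map $R'\colon x\mapsto M^T x$. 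Indeed $\bigl((M_{\bar k})^T\bigr)_{k_i,k_j} = (M_{\bar k})_{k_j,k_i} = m_{ji} = (M^T)_{ij}$, while the diagonal and zero entries are untouched by transposition.

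Next I would transpose the {\SE} $M_{\overline 1}M_{\overline 2}\cdots M_{\overline{n+1}} = M_{\overline{n+1}}\cdots M_{\overline 2}M_{\overline 1}$. Transposition reverses the order of each product and replaces every factor by its transpose, so by the identity above the transposed equation reads
$$
(M^T)_{\overline{n+1}}\cdots(M^T)_{\overline 2}(M^T)_{\overline 1} = (M^T)_{\overline 1}(M^T)_{\overline 2}\cdots(M^T)_{\overline{n+1}},
$$
and, since equality is symmetric, this is precisely the {\SE} for the map $R'\colon x\mapsto M^T x$. Hence $M^T$ also gives a solution.

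I do not expect any genuine obstacle here; the only care needed is in the bookkeeping identity $(M_{\bar k})^T = (M^T)_{\bar k}$, and in particular in checking that the internal order of the entries of the multi-index $\bar k$ is respected when one passes to transposes — which the entrywise computation above confirms. Equivalently, the whole argument can be phrased coordinate-free: $A\mapsto A^T$ is an anti-automorphism of $\mathcal M_N(\Z)$ which carries the operator ``act as $M$ on the slots indexed by $\bar k$ and identically on the remaining slots'' to the analogous operator built from $M^T$, and an anti-automorphism of an algebra turns the left-hand side of the {\SE} for $M$ into the right-hand side of the {\SE} for $M^T$ and vice versa.
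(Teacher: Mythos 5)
Your proof is correct. The paper states Proposition~\ref{transposition} without any proof, so there is nothing to compare against; your argument — verifying the bookkeeping identity $(M_{\bar k})^T = (M^T)_{\bar k}$ and then applying the transposition anti-automorphism of $\mathcal M_N(\Z)$ to both sides of the {\SE}, which reverses each product and swaps the two sides — is the natural one and fills the gap cleanly.
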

Fortunately in some cases we already know solutions of  {\SE}. For example, all linear bijective solutions for the {\SE[2]},  {\SE[3]} and  {\SE[4]}
are listed in \cite{HLinear}.
Thus we  can list all verbal universal abelian  solutions in these cases. 

\begin{proposition}
	Let $G$ be an abelian group, $\a, \b \in \mathbb{Z}$.
	Then all verbal solutions to the YBE are given by following mappings:
	$$
	\begin{aligned}
		(a, b) & \mapsto (a^{\a}, b^{\b});
		\\
		(a, b) & \mapsto (b, a);
		\\
		(a, b) & \mapsto (a^{\a}b^{1 - \a \b}, b^{\b});
		\\
		(a, b) & \mapsto (a^{\a}, a^{1 - \a \b} b^{\b}),
	\end{aligned}
	$$
	where $a, b \in G$.
\end{proposition}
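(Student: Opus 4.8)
The plan is to recast a verbal solution on abelian groups as an integer $2\times 2$ matrix and then solve the resulting polynomial system by a short case analysis. Since $G$ is abelian, a verbal solution $R(a,b)=(w_1(a,b),w_2(a,b))$ depends only on the images of $w_1,w_2$ in the abelianization of $F_2$, so $w_i(a,b)=a^{p_{i1}}b^{p_{i2}}$ and $R$ is encoded by the matrix $M=(p_{ij})\in\mathcal{M}_2(\Z)$ acting on exponent vectors. As recalled just before the statement, $(G,R)$ is a solution for every abelian $G$ exactly when $M$ is a linear solution of the YBE over $\Z$. Writing $M=\left(\begin{smallmatrix} p & q \\ r & s\end{smallmatrix}\right)$, this is precisely the assertion of Lemma~\ref{YBEF} for the maps $f(x,y)=x^{p}y^{q}$ and $g(x,y)=x^{r}y^{s}$ on a free abelian group (equivalently, for the linear forms $px_1+qx_2$ and $rx_1+sx_2$ over $\Z$). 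Alternatively one could invoke Hietarinta's list \cite{HLinear} for the bijective cases and treat the degenerate ones separately, but the direct computation via Lemma~\ref{YBEF} handles everything uniformly.

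First I would substitute these $f$ and $g$ into the three identities of Lemma~\ref{YBEF}; since every side is a single monomial in $x,y,z$, comparing exponents collapses the whole system to the five relations
$$
\begin{gathered}
pqr=0,\quad qrs=0,\quad qr(q-r)=0,\\
q(1-ps-q)=0,\quad r(1-ps-r)=0.
\end{gathered}
$$
Then I would split into cases according to whether $q$ and $r$ vanish. If $q=0$, the only surviving relation is $r(1-ps-r)=0$: either $r=0$, so $M=\mathrm{diag}(p,s)$ and $R(a,b)=(a^{p},b^{s})$ (the first family, with $\a=p$, $\b=s$), or $r=1-ps$, giving $R(a,b)=(a^{p},a^{1-ps}b^{s})$ (the fourth family). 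The case $r=0$ is symmetric — indeed it is the transpose of the previous case, cf. Proposition~\ref{transposition} — and yields the first and third families. Finally, if $q\neq0$ and $r\neq0$, then $pqr=0$ forces $p=0$, $qrs=0$ forces $s=0$, $qr(q-r)=0$ forces $q=r$, and then $q(1-ps-q)=q(1-q)=0$ with $q\neq0$ forces $q=r=1$; thus $M=\left(\begin{smallmatrix}0&1\\1&0\end{smallmatrix}\right)$ and $R(a,b)=(b,a)$ (the second family).

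For the converse one checks by inspection that each of the four matrices satisfies all five relations (for instance, in the third family $r=0$ and $q=1-ps$, so $q(1-ps-q)=q\cdot0=0$), hence each mapping is a verbal solution of the YBE over every abelian group; reading the matrices back as words $a^{p_{i1}}b^{p_{i2}}$ reproduces exactly the list in the statement. I do not expect a genuine obstacle here: the argument is finite bookkeeping. The two points that need care are the reduction ``verbal solution for all abelian groups $\Leftrightarrow$ integral matrix solution of the YBE'' (so that $\a,\b$ range over all of $\Z$, including $0$, and non-bijective solutions are not missed), and organizing the case split so that the four families — together with their overlaps when $1-\a\b=0$ — are each accounted for without omission or double counting.
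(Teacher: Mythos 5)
Your proof is correct, and it is genuinely more self-contained than what the paper offers. The paper does not prove this proposition at all: it only records the reduction ``verbal abelian solutions $\leftrightarrow$ linear solutions over $\Z$'' and then points to Hietarinta's tables \cite{HLinear}, which classify only the \emph{bijective} linear solutions; the non-invertible cases (e.g.\ $\a=0$ or $1-\a\b$ arbitrary with $\det M=0$) are left to the reader. You perform the same reduction to an integer matrix $M=\left(\begin{smallmatrix} p & q \\ r & s\end{smallmatrix}\right)$, but then close the argument by substituting the linear forms into the three identities of Lemma~\ref{YBEF} and solving the resulting Diophantine system
$pqr=0$, $qrs=0$, $qr(q-r)=0$, $q(1-ps-q)=0$, $r(1-ps-r)=0$
directly; I have checked that this is exactly the system one gets by comparing exponents, and your case split on the vanishing of $q$ and $r$ is exhaustive and recovers precisely the four families (with the overlaps at $1-\a\b=0$ correctly absorbed). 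What your route buys is a uniform treatment of degenerate and non-degenerate solutions without any external citation; what the paper's route buys is brevity and consistency with how it handles the analogous $3$- and $4$-simplex lists, where a hand computation of this kind would be much longer. One small point worth making explicit in your write-up: the equivalence ``solution for every abelian $G$ $\Leftrightarrow$ equality of linear forms over $\Z$'' holds because the YBE identities are word identities in three variables, so their validity in the free abelian group of rank $3$ (equivalently, equality of the monomials' exponent vectors) implies validity in every abelian group; you gesture at this but it deserves one sentence.
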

\begin{proposition}
	Let $G$ be an abelian group, $\a, \b, \g \in \mathbb{Z}$.
	Then all verbal solutions to the TE come either from $ k $-amalgams of solutions of the YBE and 1-simplex equations or 
	from the list of mappings below by conjugation from  Proposition~\ref{invsymm} or transposition from  Proposition~\ref{transposition}.
	$$
	\begin{aligned}
		(a, b, c) & \mapsto (a^{\a}, a, a^{-\b} b c^{\b});
		\\
		(a, b, c) & \mapsto (b, a, a^{-\a} b c^{\a});
		\\
		(a, b, c) & \mapsto (a^{\a} b^{1 - \a \b} c^{\a (\b \g - 1)}, b^{\b} c^{1 - \b \g}, c^{\g});
		\\
		(a, b, c) & \mapsto (a^{\a} b^{1 - \a \b} c^{\g (\a \b - 1)}, b^{\b} c^{1 - \b \g}, c^{\g});
	\end{aligned}
	$$
	where $a, b, c \in G$.
\end{proposition}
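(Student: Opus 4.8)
The plan is to translate the classification into one about $3\times3$ integer matrices and then to organize the matrices by their dependency pattern, separating off all decomposable solutions as amalgams and matching the remaining, genuinely three-variable, solutions against the known list of linear tetrahedron solutions. By the correspondence recorded above, a verbal universal abelian solution $R(x_1,x_2,x_3)=(w_1,w_2,w_3)$ is the same datum as a matrix $M=(m_{ij})\in\mathcal{M}_3(\Z)$ with $w_i$ written additively as $m_{i1}x_1+m_{i2}x_2+m_{i3}x_3$, and $(G,R)$ solves the tetrahedron equation for every abelian $G$ exactly when $M$, read as the linear map $\Z^3\to\Z^3$, does. Writing $R_{123}R_{145}R_{246}R_{356}=R_{356}R_{246}R_{145}R_{123}$ as an identity of $6\times6$ integer matrices turns this into an explicit system of polynomial Diophantine equations in the nine entries $m_{ij}$; equivalently, substituting the linear forms $w_i$ into the six identities of Proposition \ref{p7.1} (and, in degenerate sub-cases, into those of Corollaries \ref{c1}, \ref{c2} and \ref{c3}) yields the same relations.

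Next I would dispose of the decomposable solutions. Looking at which of $x_1,x_2,x_3$ actually occurs in each $w_i$, one checks that whenever a coordinate decouples from the other two --- e.g. $w_3$ depends only on $x_3$ while $x_3$ is absent from $w_1,w_2$, or one of the configurations obtained from this by the conjugation $R\mapsto PRP$ of Proposition \ref{invsymm} (here $P=P_{13}$) --- the map $R$ is, up to that permutation, a $k$-amalgam (Definition \ref{com}) with $k\in\{0,1\}$ of a verbal Yang--Baxter solution, already classified in the preceding proposition, with the one-parameter $1$-simplex solutions $x\mapsto x^{\gamma}$; by Proposition \ref{sumlin} each such amalgam really is a solution. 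These are precisely the solutions absorbed by the ``$k$-amalgam'' clause of the statement, among them all the diagonal maps $(a,b,c)\mapsto(a^{\alpha},b^{\beta},c^{\gamma})$. What remains after this step are the indecomposable, fully coupled solutions.

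For the indecomposable solutions I would appeal to the classification of linear solutions of the $3$-simplex equation in \cite{HLinear}: up to the conjugation of Proposition \ref{invsymm} and the transposition $M\mapsto M^{T}$ of Proposition \ref{transposition}, every such solution lies in finitely many explicit parametric families; intersecting these with $\mathcal{M}_3(\Z)$, discarding the representatives that would force non-integer exponents, and normalizing the parameters leaves exactly the four displayed families, each of which is verified to be a solution directly through Corollaries \ref{c1}, \ref{c2} and \ref{c3}. The main obstacle is that \cite{HLinear} classifies \emph{invertible} linear solutions, while a verbal abelian solution need not be invertible (already $(a,b)\mapsto(a^{\alpha},b^{\beta})$ is not), so one must still exclude new non-invertible indecomposable three-dimensional families. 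I would handle this by the same degree bookkeeping used over free groups in Lemmas \ref{reducedLemma} and \ref{dependLemma2}, now in its linear form over $\Z$: tracking the support of each $w_i$ through the defining relations forces the coefficient matrix, up to a permutation of coordinates, into triangular shape, and for triangular $M$ the tetrahedron relations collapse to the elementary equations already encountered in the proofs above, whose only integer solutions are the four listed families with $\alpha,\beta,\gamma$ ranging over all of $\Z$.
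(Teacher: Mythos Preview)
The paper does not give a standalone proof of this proposition; its entire justification is the sentence preceding the three abelian propositions: ``all linear bijective solutions for the 2-SE, 3-SE and 4-SE are listed in \cite{HLinear}. Thus we can list all verbal universal abelian solutions in these cases.'' Your approach is therefore aligned with the paper's --- both ultimately rest on Hietarinta's classification --- but yours supplies the organizational scaffolding (translation to integer matrices, separation of decomposable cases as $k$-amalgams, explicit invocation of the symmetries from Propositions~\ref{invsymm} and~\ref{transposition}) that the paper omits entirely.

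You are also right to flag the invertibility issue, and in fact you are being more careful than the paper: \cite{HLinear} classifies \emph{bijective} linear solutions, whereas the proposition as stated lists families that are not bijective over $\Z$ (the first family has singular matrix for every $\alpha,\beta$; the second has determinant $-\alpha$; the third and fourth have determinant $\alpha\beta\gamma$). The paper does not address this discrepancy. That said, your suggested fix --- that support-tracking forces the coefficient matrix into triangular shape up to a permutation of coordinates --- is not correct as stated: the second listed family $(b,a,a^{-\alpha}bc^{\alpha})$ is not triangular under any coordinate permutation (its transpose is only block upper triangular with a $2\times 2$ block). So while the overall strategy matches the paper's and the decomposable/amalgam bookkeeping is sound, the final step for the genuinely non-invertible indecomposable cases would require solving the $6\times 6$ polynomial system in the $m_{ij}$ directly, or a more refined structural argument than triangularity.
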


\begin{proposition}
	Let $G$ be an abelian group, $\a, \b, \g, \d \in \mathbb{Z}$.
	Then all verbal solutions to the 4-SE are obtained either from $ k $-amalgams of solutions of m-SE where $m < 4$  or from the mappings below by conjugation from  Proposition~\ref{invsymm} and matrix transposition from  Proposition{~\ref{transposition}}.
	$$
	\begin{aligned}
		(a, b, c, d) & \mapsto (b d^{-1}, a c, d, c);
		\\
		(a, b, c, d) & \mapsto (b c^{-\a} d^{\a\b-1}, a c, c^{\a} d^{1 - \a\b}, d^{\b});
		\\
		(a, b, c, d) & \mapsto (b, a, a^{-\a} b c^{\a}, a^{\a\b-1}c^{1-\a\b}d^\b);
		\\
		(a, b, c, d) & \mapsto (b c^{-\a} d^{\a\b}, a c d^{-\b}, c^{\a} d^{1-\a\b}, d^{\b});
		\\
		(a, b, c, d) & \mapsto (b, a, a^{-\a}bc^{\a},a^{\a\b}b^{-\b} c^{1-\a\b} d^{\b});
		\\
		(a, b, c, d) & \mapsto (a^{\a} b c^{-\a}, c, b, b^{-\b} c d^{\b});
		\\
		(a, b, c, d) & \mapsto (a^{\a}, b c d^{-\b}, a^{-\a} b d, d^{\b});
		\\
		(a, b, c, d) & \mapsto (a^{\a} b c^{-\a} d^{\a\b}, c d^{-\b}, b d, d^{\b});
		\\
		(a, b, c, d) & \mapsto (a^{\a}b^{1-\a\b}c^{\g(\a\b-1)}d^{\g\d(1-\a\b)}, b^{\b}c^{1-\b\g}d^{\d(\b\g-1)}, c^{\g}d^{1-\g\d}, d^{\d});
		\\
		(a, b, c, d) & \mapsto (a^{\a}b^{1-\a\b}c^{\a(\b\g-1)}d^{\a\d(1-\b\g)}, b^{\b}c^{1-\b\g}d^{\d(\b\g-1)}, c^{\g}d^{1-\g\d}, d^{\d});
		\\
		(a, b, c, d) & \mapsto (a^{\a}b^{1-\a\b}c^{\a(\b\g-1)}d^{\a\b(1-\g\d)}, b^{\b}c^{1-\b\g}d^{\b(\g\d-1)}, c^{\g}d^{1-\g\d}, d^{\d}),
		\\
	\end{aligned}
	$$
	where $a, b, c, d \in G$.
\end{proposition}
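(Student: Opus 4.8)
The proof runs in parallel with that of the preceding proposition for the TE, with words replaced by matrices. By the correspondence recalled above, a verbal universal abelian solution of the $4$-SE is precisely a matrix $M \in \mathcal{M}_4(\mathbb{Z})$ for which the linear map $R(x) = Mx$ satisfies the $4$-SE (equivalently, $M$ solves the $4$-SE over $\mathbb{Z}^{4}$ as a $\mathbb{Z}$-module, or over $\mathbb{Q}$). On the set of such matrices we have two involutions preserving the solution property: conjugation by the permutation $P$ of Proposition~\ref{invsymm}, which for a linear map is conjugation by the corresponding $4\times 4$ permutation matrix, and the transposition $M \mapsto M^{T}$ of Proposition~\ref{transposition}; in addition we may relabel the free integer parameters $\alpha,\beta,\gamma,\delta$ at will. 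It therefore suffices to produce one representative of each orbit, discarding the orbits covered by the amalgam clause.

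First I would invoke the classification of linear bijective solutions of the $4$-SE given in \cite{HLinear}: it is a finite list of parametrized normal forms, itself presented up to conjugation by monomial (permutation-times-diagonal) matrices and up to transposition. If $M$ is an integer solution with $\det M \neq 0$, then, up to the operations above, it coincides with one of those normal forms; requiring the entries of $M$ to be integers specializes the diagonal parameters of the form to integers, and normalizing by $P$-conjugation, transposition and parameter relabeling places $M$ in exactly one of the twelve displayed maps. Conversely each of the twelve maps is a specialization of one of Hietarinta's normal forms and hence is a verbal solution; this is the ``straightforward calculation'' one can alternatively perform directly from the explicit form of the $4$-SE, as was done for the examples of $k$-amalgams above.

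It remains to treat the matrices with $\det M = 0$. Here I would show that, after $P$-normalization, such an $M$ has a coordinate that is either unused in the output or receives no input, so that $M$ is block-triangular with the coupling carried by the ``$h_i$'' coordinates of Definition~\ref{com}; reading off this structure exhibits $M$ as a $k$-amalgam of a solution of an $(n+k)$-SE and a $(k+m)$-SE with $n+k+m=4$ and $m<4$ (in the extreme case one factor solves a $1$-SE, that is, is an arbitrary linear map of one variable). By Theorem~\ref{const}, and its specialization Proposition~\ref{sumlin} for $k\in\{0,1\}$, such amalgams are indeed solutions, so they are exactly the ones the statement lets us omit. Running through Hietarinta's list, one then checks family by family that each entry is either of this amalgam type or reduces, by $P$-conjugation, transposition and relabeling, to one of the twelve forms; in particular one must record which families become amalgams for special parameter values, so that the twelve maps are verified to be exhaustive.

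I expect the main obstacle to be precisely this last bookkeeping step. The group generated by $P$-conjugation and matrix transposition acts on Hietarinta's already-reduced list in a way that is not transparent, and the degenerate parameter values (a parameter equal to $0$ or $\pm 1$) can collapse a ``generic'' non-amalgam family onto an amalgam or onto another orbit, so one must check carefully that the twelve displayed maps cover every case, including all such boundary specializations. A secondary difficulty is that \cite{HLinear} lists only the \emph{bijective} linear solutions, so the case $\det M = 0$ needs the separate, self-contained argument sketched above, based on analysing the possible coordinate-subspace positions of $\ker M$ and $\operatorname{im} M$ and extracting the amalgam decomposition from Definition~\ref{com}.
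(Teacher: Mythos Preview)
Your approach is the same as the paper's: the paper does not give an independent proof of this proposition at all, but simply states it as a translation of Hietarinta's classification in \cite{HLinear} into multiplicative (verbal) notation, modulo the symmetries of Proposition~\ref{invsymm} and Proposition~\ref{transposition}. In that sense your proposal is already more detailed than what the paper offers.

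That said, there is a genuine gap in your handling of the singular case, and you have partly flagged it yourself. Your key claim is that if $\det M = 0$ then, after $P$-normalization, $M$ has a coordinate that is ``either unused in the output or receives no input'', i.e.\ $\ker M$ or $\operatorname{coker} M$ contains a coordinate axis. This is not a property of singular matrices in general, and you give no argument for why being a solution of the $4$-SE forces it. Without this, the reduction of the $\det M = 0$ case to $k$-amalgams is unjustified. The paper does not fill this gap either: it explicitly says that \cite{HLinear} lists only the \emph{bijective} linear solutions, yet the proposition is stated for all verbal solutions. So strictly speaking both your argument and the paper's implicit one are incomplete on the same point; if you want a complete proof you would need either to extend Hietarinta's analysis to singular $M$ directly, or to prove the structural claim about $\ker M$ from the $4$-SE itself.

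A smaller point: the bookkeeping step you worry about---matching Hietarinta's normal forms to the eleven listed maps under $P$-conjugation, transposition, and parameter relabeling---is exactly what the paper leaves implicit. There is no shortcut offered in the paper; it is simply asserted.
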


\bigskip

\section{Research problems}

At the end, we would like to suggest some problems for further research.

\begin{problem}
\begin{enumerate}\ 
\item Find verbal universal solutions for $n$-simplex equations in class of all groups.

\medskip

\item We know that braid group corresponds to the YBE. What group corresponds to $n$-simplex equations?

\medskip

\item With the  YBE are connected braided categories. What categories correspond to $n$-simplex equations?

\medskip

\item In \cite{BNas, BNas1} were defined multi-switches for using  virtual knot invariants. Is it possible to use them for construction  solutions for  parametric YBE?

\medskip

\end{enumerate}
\end{problem}

\section*{Acknowledgements} 
The results of the present paper were obtained during the program on $n$-simplex equations  advised by V.~Bardakov and T.~Kozlovskaya in the frame of The Second Workshop at the Mathematical Center in Akademgorodok supported by the Tomsk Regional Mathematical Center. The authors thank Dmitriy Talalaev and Mahender Singh for interesting lectures and fruitful discussions. Also, we thank I.~G.~Korepanov for permission to include his example (Example \ref{Kor}) in our text.

\bigskip

Valeriy Bardakov$^{\dag, \ddag, **}$ (bardakov@math.nsc.ru),\\

Bogdan Chuzinov$^{*,\dag}$ (b.chuzhinov@g.nsu.ru),\\

Ivan  Emel'yanenkov$^*$ (i.emelianenkov@g.nsu.ru),\\

Maxim Ivanov$^*$ (m.ivanov2@g.nsu.ru),\\

Tatyana Kozlovskaya$^{ \ddag}$ (t.kozlovskaya@math.tsu.ru),\\

Vadim Leshkov$^{*}$ (v.leshkov@g.nsu.ru)

~\\
$^*$ Novosibirsk State University, Pirogova 1, 630090 Novosibirsk, Russia,\\
$^{\dag}$ Tomsk State University, pr. Lenina 36, 634050 Tomsk, Russia,\\
$^{\ddag}$ Sobolev Institute of Mathematics, Acad. Koptyug avenue 4, 630090 Novosibirsk, Russia,\\
$^{**}$ Novosibirsk State Agrarian University, Dobrolyubova  160, 630039 Novosibirsk,  Russia.\\

\end{document}